\documentclass[12pt]{article}
\usepackage[margin=1in]{geometry}
\usepackage{amsmath,amsfonts,amssymb,amsthm}
\usepackage{graphicx}
\usepackage{algorithm}
\usepackage{algpseudocode}
\usepackage{booktabs}
\usepackage{subcaption}
\usepackage[round,sort,authoryear]{natbib}
\usepackage{indentfirst}
\usepackage[title]{appendix}
\usepackage[colorlinks=true,linkcolor=blue,citecolor=blue]{hyperref}
\newtheorem{theorem}{Theorem}
\newtheorem{lemma}{Lemma}
\newtheorem{corollary}{Corollary}

\begin{document}

\title{Sequential Bayesian Experimental Design for Prediction in Physical Experiments Informed by Computer Models}
\author{
Hao Zhu\thanks{Institute of Applied Statistics, Johannes Kepler University Linz\\
\hspace*{1.8em}Corresponding author email: \texttt{hao.zhu@jku.at}}
\and
Markus Hainy\footnotemark[1]
}
\date{\today} 

\maketitle

\begin{abstract}
In many scientific and engineering domains, physical experiments are often costly, non-replicable, or time-consuming. The Kennedy \& O’Hagan (KOH) model framework has become a widely used approach for combining simulator runs with limited experimental observations. Under Bayesian implementation, the simulator output, model discrepancy, and observation noise are jointly modeled by coupled Gaussian processes, followed by coherent posterior inference and uncertainty quantification. This work presents a genuinely sequential Bayesian experimental design (BED) framework explicitly aimed at improving the predictive performance of the KOH model. We employ a mutual information (MI)–based criterion and develop a hybrid variant that integrates with measures of local model complexity, leading to significantly more efficient design decisions. We further theoretically show that the MI-based criterion is more comprehensive and robust than the classical integrated mean squared prediction error (IMSPE) minimization criterion, especially when the model is highly uncertain in the early stages of the experiment. To mitigate the computational burden of the fully Bayesian inference and the ensuing BED process, we propose two acceleration strategies—Gaussian Mixture Compression and Schur complement \& rank-one update—which together substantially reduce runtime. Finally, we demonstrate the effectiveness of the proposed methods through both a synthetic example and a real biochemical case study, and compare them against several classical design criteria under sequential (offline) and adaptive (online) BED settings.
\end{abstract}

\noindent \textbf{Keywords:} Bayesian experimental design; Kennedy and O'Hagan model; Mutual information; Sequential design; Gaussian process; local model complexity

\section{Introduction}
\label{sec:intro}
In many fields of science and engineering, computer models serve as simulators to enhance the understanding of real-world phenomena. They provide abstractions for the actual physical process and vary in level of accuracy. However, in many cases, computer models inevitably deviate from the true physical process. Field data consequently play a crucial role in refining and calibrating computer models, as they correct model errors. However, acquiring field data, either from observations or through laboratory experiments, can be expensive and challenging. For instance, the discovery of new materials with a specific property often demands a significant number of costly and time-consuming experiments \citep{Dehghannasiri2017}. In these cases, gathering the most informative field data through optimal experimental design can significantly expedite the modeling process and lower the cost of the experiment.
\par
 Several well-established statistical frameworks have been developed to combine field data with simulation data \citep{KennedyOHagan2001, CoxParkSinger2001, higdon2004combining}. A common feature of these approaches is the use of a computationally cheaper emulator as a surrogate for the expensive mechanistic model. Our work builds on the Kennedy and O’Hagan (KOH) framework, which enables us to account for multiple sources of uncertainty, including model inadequacy and parameter uncertainty, within a Bayesian setting. Emulators are often constructed using nonparametric methods such as Gaussian processes (GPs) \citep{RasmussenWilliams2006} to capture the highly complex and nonlinear behavior of the physical system. Within this framework, computer models typically involve uncertain inputs which can be viewed as random variables that affect the overall simulation output. Such inputs are referred to as calibration parameters, while others are controllable by the experimenter. A natural way to improve the performance of computer models is through a more accurate calibration of their parameters. \par
 
Based on this idea, researchers have already proposed several optimal design methods in the KOH framework. \citet{Pratola2013} introduced an approach under expected improvement for sequential computer experiments, similar to the likelihood ratio, comparing the models with and without the discrepancy term. \citet{Huan2013} utilized information-theoretic metrics to account for parameter uncertainty and formulated a Bayesian design for computationally intensive models as a continuous optimization problem. \citet{Arendt2015}, using multivariate Gaussian sampling, developed a covariance-based preposterior approach for physical experiments. In \citet{Krishna2021}, a hybrid strategy for physical experiments was built: a D-optimal design was used for parameter calibration and a space-filling design was used for model prediction under the Bayesian framework. More recently, \citet{Diao2022} suggested a sequential D-optimal design method for computer model calibration and compared its performance with other alternatives. \citet{Surer2023} selected the expected integrated variance minimization of the calibration parameters as the criterion for sequential Bayesian design, considering the parameter uncertainty. A common difficulty is that the calibration parameters are typically multivariate and confounded with GP hyperparameters, which results in identifiability issues. To overcome this difficulty, \citet{Gu2018}, \citet{Wang2020}, and \citet{Xie2020} introduced projection-based methods that improve the identifiability of calibration parameters. Nevertheless, computer models still remain imperfect. In other words, even when all model parameters are accurately identified, systematic/structural model inadequacy relative to the true physical process often persists. For this reason, our work focuses not on parameter calibration, but on directly improving the KOH model’s overall predictive capability.\par
 
Indeed, several design criteria targeting objectives other than parameter calibration have also been proposed. However, little attention has been paid to methods that directly target predictive performance. Among them, \citet{bayarri2007framework} employed the Maximin Latin Hypercube Design (LHD) to conduct field experiments within a fully Bayesian framework. \citet{ranjan2011follow} adopted the criterion of maximizing the reduction in the integrated mean squared error (IMSE) for physical experiments. \citet{Williams2011} compared several design criteria for simulator experiments, including maximum expected improvement and maximum entropy reduction. More recently, \citet{Leatherman2017combined} proposed a joint design strategy that combines the integrated mean squared prediction error (IMSPE) minimization with a nested maximin-augmented LHD to improve predictive accuracy. Building on this work, \citet{Koermer2024} derived a closed-form expression for IMSPE in the KOH model framework and reformulated it as a continuous gradient-based optimization problem for computer model design. Notably, these approaches are not information-theoretic and thus differ from fully Bayesian designs that aim to maximize expected information gain. When parameter and predictive uncertainties are fully accounted for, mutual information gain becomes the best tool to handle nonlinearity and potential multimodality \citep{Ryan2016}. Furthermore, explicit design methods for physical experiments across discrete inputs are largely absent from the literature. Most existing work instead targets computer model experiments, where the simulators are assumed to be operated over a continuous hypercube input space rather than a discrete set of feasible candidate points solving the design problem by using continuous optimization.\par
 
Bayesian experimental design (BED) was originally introduced by \citet{Lindley1972} based on the decision-theoretic approach, and was later extended by \citet{ChalonerVerdinelli1995}, who provided a comprehensive review and a coherent formulation framework applicable to various design objectives. \citet{higdon2004combining} firstly developed a fully Bayesian estimation procedure for the KOH model, enabling an appropriate quantification of multiple sources of uncertainty. Most existing Bayesian design criteria are based on the Kullback–Leibler divergence between posterior and prior distributions, commonly referred to as the expected information gain (EIG). However, the computation of the EIG is notoriously challenging. To mitigate this problem, recent advances have introduced efficient gradient-based methods to optimize the variational lower bounds of the EIG \citep{Foster2019,Foster2020,kleinegesse2021gradientbasedbayesianexperimentaldesign}. Amortization strategies have also gained popularity in adaptive design (\citet{Foster2021}), where policy-based approaches in deep learning enable one-shot rather than step-by-step design studies. In line with Bayesian inference under the KOH framework, our objective is to conduct a BED with minimal simplifying assumptions so that all sources of uncertainty can be fully considered. This paper aims to fill this gap by proposing a genuinely sequential BED framework that explicitly focuses on designing physical experiments across discrete candidate locations. The proposed approach enhances the future predictive capability of the KOH model over successive design rounds and yields more efficient and robust designs than ad hoc approaches that neglect certain uncertainties.
\par

This paper is organized as follows. Section~\ref{sec:method} reviews the Kennedy \& O’Hagan framework, summarizes its Bayesian implementation, and describes the design objectives in this work and some existing design criteria under the fully Bayesian approach. Section~\ref{sec:Proposed Mutual Information–Based Design Criterion} introduces the mutual-information–based design criterion, along the extension we propose that incorporates local predictive complexity. Section~\ref{sec:Computation Strategies} presents a set of strategies that substantially reduce the computational cost of the design. Section~\ref{sec:Experiments} evaluates the performance of all criteria considered in this work through both a synthetic toy example and a real-world dataset. Section~\ref{sec:ConcluDiscuss} concludes with a discussion of the findings of this paper and potential directions for future research.

\section{Background and Related Work}
\label{sec:method}
In this section, we first provide a brief review of the KOH framework and its Bayesian implementation. Then, we introduce the design objectives and give an overview of previously employed experimental design criteria under the Bayesian framework. Further details are presented in the following subsections.
\subsection{Review of Kennedy \& O'Hagan Model}
\label{subsec:KOH}

We adopt the computer model calibration framework proposed by \citet{KennedyOHagan2001}, along with its subsequent adaptations for fully Bayesian inference by \citet{higdon2004combining} and \citet{bayarri2007framework}. The Kennedy \& O'Hagan Model, hereafter denoted KOH model, is formulated as 
\begin{align*}
\mathbf{y}^p &= y^c(\mathbf{X}^p, \boldsymbol{\theta}) + \delta(\mathbf{X}^p) + \boldsymbol{\epsilon}, \\[6pt]
\mathbf{y}^s &= y^c(\mathbf{X}^c, \mathbf{T}).
\end{align*}
Throughout this article, vectors and matrices are denoted in boldface. Assume that we observe $n$ responses from the physical process and $m$ responses from the simulator, and let $N  =n + m$ be the total number of design points (field + simulator). The controllable design inputs are represented by $\mathbf{X} =(\mathbf{X}^{c\top},\mathbf{X}^{p\top})^\top
=(\mathbf{x}_1^{\top},\ldots,\mathbf{x}_N^{\top})^\top
\in \mathbb{R}^{N \times d}$, with each $\mathbf{x}_i = (x_{i1}, \ldots, x_{id})^\top \in \mathbb{R}^d$ denoting a single input location. Here, $\mathbf{X}^c$ and $\mathbf{X}^p$ correspond to the inputs of the computer model and the physical experiments, respectively. Let $\boldsymbol{\theta}=(\theta_1,\ldots,\theta_h)^\top \in \mathbb{R}^{h}$ be the vector of the true calibration parameters and $\mathbf{T} = (\mathbf{t}_1^{\top},\ldots,\mathbf{t}_m^{\top})^\top \in \mathbb{R}^{m \times h}$ be the design setting of the calibration parameters at which the simulator is evaluated. 
The underlying computer model is denoted by $y^{c}(\cdot,\cdot)$, and 
$\mathbf{y}^s \in \mathbb{R}^{m}$ represents the simulator outputs evaluated at a set of design points $\mathbf{X}^c$. Observations from the physical process are denoted by $\mathbf{y}^p \in\mathbb{R}^{n}$, with $\delta(\cdot)$ capturing the discrepancy between the physical process and the computer model (for multi-output cases, $\mathbf{y}^p$ and $\mathbf{y}^s$ can be matrix-valued, with each column corresponding to one task). The observational errors are assumed to be independent and normally distributed as 
$\boldsymbol{\epsilon} \sim \mathcal{N}(\mathbf{0}, \sigma^2 I_n)$, 
where $\boldsymbol{\epsilon} = (\epsilon_1,\ldots,\epsilon_n)^\top \in \mathbb{R}^{n}$.
 Both $y^c(\cdot,\cdot)$ and $\delta(\cdot)$ are modeled using Gaussian processes (GPs) \citep{RasmussenWilliams2006} under a Bayesian framework. Specifically, the two stages can be formulated as
\begin{alignat}{2}
y^c(\mathbf{X}^c,\mathbf{T}) 
&\sim \mathrm{GP}\!\left(m_c(\mathbf{X}^c,\mathbf{T}),\, 
K_{\boldsymbol{\phi}_1}\!\bigl((\mathbf{X}^c,\mathbf{T}),(\mathbf{X}^{c'},\mathbf{T}')\bigr)\right),
\label{eq:gp1}
\\[6pt]
\delta(\mathbf{X}^p) 
&\sim \mathrm{GP}\!\left(m_\delta(\mathbf{X}^p;\boldsymbol{\theta}),\, 
K_{\boldsymbol{\phi}_2}\!\bigl(\mathbf{X}^p,\mathbf{X}^{p'}\bigr)\right),
\label{eq:gp2}
\end{alignat}
where \(m_c(\cdot,\cdot)\) and \(m_\delta(\cdot;\cdot)\) denote the mean functions, and 
\(K_{\boldsymbol{\phi}_1}(\cdot,\cdot)\) and \(K_{\boldsymbol{\phi}_2}(\cdot,\cdot)\) are the corresponding covariance kernels parameterized by hyperparameters \(\boldsymbol{\phi}_1\) and \(\boldsymbol{\phi}_2\). For simplicity, we assume a zero-mean function for the first-stage (computer model) GP, so $m_c(\mathbf{X}^c,\mathbf{T}) = \mathbf{0}$ here.\par
Due to the properties of Gaussian processes (GPs), any finite collection of outputs follows a multivariate normal (MVN) distribution. In the KOH model, the outputs from the computer model and the physical process can be jointly expressed by a block-structured covariance matrix. Specifically, for the observed simulator runs $\mathbf y^s$ and the physical observations $\mathbf y^p$, we have
\begin{equation}\label{eq:Sigma-mu}
\begin{pmatrix}
\mathbf y^p \\[3pt]
\mathbf y^s
\end{pmatrix}
\;\Bigm|\;\boldsymbol\theta, \boldsymbol\phi_1, \boldsymbol\phi_2, \sigma^2
\;\sim\;
\mathcal{N}\!\left(\boldsymbol{\mu}, \boldsymbol{\Sigma}_{oo}\right),
\end{equation}
where $\boldsymbol{\mu}=\bigl(m_\delta(\mathbf{X}^p;\boldsymbol{\theta}),\mathbf{0}_m\bigr)^\top \in\mathbb{R}^{N}$ represents the mean and the joint covariance matrix $\boldsymbol{\Sigma}_{oo}\in\mathbb{R}^{N\times N}$ is decomposed as
\begin{equation*}
\boldsymbol{\Sigma}_{oo} =
\underbrace{\begin{pmatrix}
\mathbf{K}_{c,pp} & \mathbf{K}_{c,ps}\\[3pt]
\mathbf{K}_{c,sp} & \mathbf{K}_{c,ss}
\end{pmatrix}}_{\text{computer GP}}
+
\underbrace{\begin{pmatrix}
\mathbf{K}_{\delta,pp} & 0 \\[3pt]
0 & 0
\end{pmatrix}}_{\text{discrepancy GP}}
+
\underbrace{\begin{pmatrix}
\sigma^2 \mathbf{I}_n & 0 \\[3pt]
0 & 0
\end{pmatrix}}_{\text{observation noise}}.
\end{equation*}

Here,
\[
\mathbf{K}_{c,pp}=K_{\boldsymbol{\phi}_1}\!\bigl((\mathbf{X}^p,\boldsymbol{\theta}),(\mathbf{X}^p,\boldsymbol{\theta})\bigr)\in\mathbb{R}^{n\times n},\quad
\mathbf{K}_{c,ps}=K_{\boldsymbol{\phi}_1}\!\bigl((\mathbf{X}^p,\boldsymbol{\theta}),(\mathbf{X}^c,\mathbf T)\bigr)\in\mathbb{R}^{n\times m},
\]

\[
\mathbf{K}_{c,ss}=K_{\boldsymbol{\phi}_1}\!\bigl((\mathbf{X}^c,\mathbf T),(\mathbf{X}^c,\mathbf T)\bigr)\in\mathbb{R}^{m\times m},\quad
\mathbf{K}_{\delta,pp}=K_{\boldsymbol{\phi}_2}(\mathbf{X}^p,\mathbf{X}^p)\in\mathbb{R}^{n\times n}.
\]

For prediction at a new set of design points $\mathbf{X}^* \in \mathbb{R}^{n^* \times d}$, we consider the joint distribution of the future outputs $\mathbf{y}^*$ and observed data $\mathbf{y}=(\mathbf{y}^{p\top},\mathbf{y}^{s\top})^{\top}$:
\begin{equation}\label{eq:big_covmatrix}
\mathrm{Cov}\!\left[
\begin{pmatrix}\mathbf y^* \\ \mathbf y\end{pmatrix}
\right]
=
\begin{pmatrix}
\boldsymbol{\Sigma}_{**} & \boldsymbol{\Sigma}_{*o}\\[3pt]
\boldsymbol{\Sigma}_{o*} & \boldsymbol{\Sigma}_{oo}
\end{pmatrix}\in\mathbb{R}^{(n^*+N)\times(n^*+N)},
\end{equation}
where
\[
\boldsymbol{\Sigma}_{**}=K_{\boldsymbol{\phi}_1}\!\bigl((\mathbf{X}^*,\boldsymbol{\theta}),(\mathbf{X}^*,\boldsymbol{\theta})\bigr)
+K_{\boldsymbol{\phi}_2}(\mathbf{X}^*,\mathbf{X}^*)+\sigma^2 \mathbf{I}_{n^*}\;\in\mathbb R^{n^*\times n^*},
\]
\[
\boldsymbol{\Sigma}_{*o}=\begin{bmatrix}
K_{\boldsymbol{\phi}_1}\!\bigl((\mathbf{X}^*,\boldsymbol{\theta}),(\mathbf{X}^p,\boldsymbol{\theta})\bigr)+K_{\boldsymbol{\phi}_2}(\mathbf{X}^*,\mathbf{X}^p)
&\;\; K_{\boldsymbol{\phi}_1}\!\bigl((\mathbf{X}^*,\boldsymbol{\theta}),(\mathbf{X}^c,\mathbf T)\bigr)
\end{bmatrix}\in\mathbb R^{n^*\times N},
\]
and $\boldsymbol{\Sigma}_{o*}=\boldsymbol{\Sigma}_{*o}^\top$.

Consequently, the predictive distribution of $\mathbf{y}^*$ conditional on observed data $\mathbf{y}=(\mathbf{y}^{p\top},\mathbf{y}^{s\top})^\top$ is
\begin{equation}\label{eq:predictive-distribution}
\mathbf y^* \,\bigm|\, \mathbf y,\boldsymbol\theta,\boldsymbol\phi_1,\boldsymbol\phi_2,\sigma^2
\;\sim\;
\mathcal N\!\left(
\boldsymbol{\mu}^*,\,\boldsymbol{\Sigma}^*\right),
\end{equation}
with predictive mean $\boldsymbol{\mu}^* \in \mathbb{R}^{n^*}$ and covariance $\boldsymbol{\Sigma}^* \in \mathbb{R}^{n^* \times n^*}$
\begin{equation}\label{eq:predictive-mean-cov}
\boldsymbol{\mu}^* = m_\delta(\mathbf{X}^*;\boldsymbol{\theta})
+ \boldsymbol{\Sigma}_{*o}\,\boldsymbol{\Sigma}_{oo}^{-1}
\bigl(\mathbf y -\boldsymbol{\mu}\bigr), \qquad
\boldsymbol{\Sigma}^* = \boldsymbol{\Sigma}_{**}
- \boldsymbol{\Sigma}_{*o}\,\boldsymbol{\Sigma}_{oo}^{-1}\boldsymbol{\Sigma}_{o*}.
\end{equation}

\subsection{Design Objectives and Strategies}
\label{subsec:Design Objectives and Strategies}
Optimal experimental design typically relies on constructing an appropriate design criterion or objective function. Such a criterion should either directly or indirectly represent the expected gain that the experimenter aims to achieve. As discussed earlier, the objective of this study is to select a set of design points for physical experiments that can improve the predictive performance of the KOH model accounting for all sources of uncertainty within a BED framework. Compared to computer experiments, physical experiments are typically constrained by a limited set of candidate points and a tight experimental budget, which makes the design problem inherently discrete rather than a continuous optimization problem.\par

When real-time experiments cannot be conducted so that the model posterior cannot be updated, batch-optimal design, which simultaneously selects all experimental points within the given budget before the experiments, in principle performs at least as well as the sequential greedy approach and often yields better performance \citep{MuellerPoetscher1990}. However, as the number of candidate points increases, the combinatorial search space grows super exponentially, making true batch design computationally infeasible even for moderate problem sizes. The sequential greedy approach thus serves as a tractable approximation to the batch-optimal design, providing a practically efficient and theoretically interpretable alternative. \citet{NemhauserWolseyFisher1978} has established and theoretically proven a classical result for maximizing a non-negative, monotone submodular set function under a cardinality constraint. Under these conditions, the greedy algorithm can at least achieve a $(1-1/e \approx 63\%)$ approximation to the global optimal value. Building on this theorem, \citet{KrauseSinghGuestrin2008} developed a greedy algorithm for sensor placement using a mutual-information criterion under a Gaussian process (GP) model, which empirically demonstrates that the greedy strategy can attain near–optimal performance. Their setting is closely related to ours, although our fully Bayesian GP framework is more complex and may not strictly satisfy the submodularity or monotonicity conditions. However, the greedy strategy remains a practically effective choice in our setting, as approximate monotonicity and submodularity are often sufficient for achieving high-quality results. In this study, we consider two strategies for multi-round Bayesian sequential experimental design: sequential (offline) design of experiments (SDE) and adaptive (online) design of experiments (ADE). The SDE is actually a greedy approach, seeking the best location at each round without incorporating information from future rounds.

\par
Suppose we have a total budget of $B$ physical experimental runs.
Let $\mathcal{D}_{\text{cand}} = \{\boldsymbol{\xi}_1, \ldots, \boldsymbol{\xi}_M\}$ denote the finite set of 
$M$ candidate locations for conducting physical experiments, where each 
$\boldsymbol{\xi}_i \in \mathbb{R}^d$ represents a feasible experimental setting in the same input space as $\mathbf{X}^p$. 
Before iteration $b$, the subset of points already selected for experimentation is 
$\mathcal{D}_{\text{sel}}^{(b-1)} = \{\boldsymbol{\xi}_{(1)}, \ldots, \boldsymbol{\xi}_{(b-1)}\}$, 
and the newly selected point in this round is 
$\boldsymbol{\xi}_{(b)} \in \mathcal{D}_{\text{cand}} \setminus \mathcal{D}_{\text{sel}}^{(b-1)}$. 
After completing all $B$ rounds, the final design set is 
$\mathcal{D}_{\text{sel}}^{(B)} = \{\boldsymbol{\xi}_{(1)}, \ldots, \boldsymbol{\xi}_{(B)}\}$. 
The predictive performance of the model is evaluated based on the predictive results $\mathbf{y^*}$ at locations $\mathcal{X}_{\text{pred}} = \mathbf{X}^* \in \mathbb{R}^{n^* \times d}$.\par
Both the sequential design of experiments (SDE) and the adaptive design of experiments (ADE) proceed sequentially, selecting one new design input at a time according to a specified design criterion. In SDE, the optimal design input is determined sequentially based on the current KOH model, without performing any physical experiment. Consequently, the model posterior remains unchanged throughout the design process. In contrast, ADE conducts real physical experiments immediately after each selected design input. The corresponding observations then update the posterior before the next design input is chosen. Table~\ref{alg:seq_adaptive_design} clearly illustrates the distinction between the two strategies.

\begin{algorithm}[htbp] \caption{Sequential (SDE) and Adaptive (ADE) Design under the KOH Framework} \label{alg:seq_adaptive_design} \begin{algorithmic} \Require Budget $B$; candidates $\mathcal{D}_{\text{cand}}=\{\boldsymbol{\xi}_1,\ldots,\boldsymbol{\xi}_M\}$; prediction set $\mathcal{X}_{\text{pred}}=\mathbf{X}^*$; initial data $\{(\mathbf{X}^p,\mathbf{X}^c,\mathbf{y})\}$; mode $\in\{\texttt{SDE},\texttt{ADE}\}$; criterion or objective function $\mathsf{Crit}(\cdot)$ \Ensure $\mathcal{D}_{\text{sel}}^{(B)}=\{\boldsymbol{\xi}_{(1)},\ldots,\boldsymbol{\xi}_{(B)}\}$ \State $\mathcal{D}_{\text{sel}}^{(0)}\gets\emptyset$; fit KOH posterior \For{$b=1$ \textbf{to} $B$} \State \[
\boldsymbol{\xi}_{(b)} \gets 
\arg\max(\min)_{\boldsymbol{\xi}\in\mathcal{D}_{\text{cand}}\setminus\mathcal{D}_{\text{sel}}^{(b-1)}}
\mathsf{Crit}\!\big(\boldsymbol{\xi};\,\text{current posterior},\,\mathcal{X}_{\text{pred}}\big)
\]
\If{\texttt{mode}=\texttt{SDE}} \State No physical experiment; keep current KOH posterior unchanged \ElsIf{\texttt{mode}=\texttt{ADE}} \State Run experiment at $\boldsymbol{\xi}_{(b)}$ to get $\mathbf{y}^{\text{new}}_{(b)}$; augment data; update KOH posterior \EndIf \State $\mathcal{D}_{\text{sel}}^{(b)}\gets \mathcal{D}_{\text{sel}}^{(b-1)}\cup\{\boldsymbol{\xi}_{(b)}\}$ \EndFor \State \Return $\mathcal{D}_{\text{sel}}^{(B)}$ and final KOH posterior \end{algorithmic} \end{algorithm}

\subsection{Classical Design Criteria}
\label{subsec:Existing Design Criteria under Bayesian Framework}
If the objective is to directly enhance the future predictive performance of the statistical model, particularly the KOH model, only a few prior works have investigated relevant experimental design criteria. In practice, space-filling designs are commonly adopted as an initial exploration and serve as a practical benchmark. Beyond that, the IMSPE minimization criterion is almost the only measure in the existing literature that directly quantifies predictive uncertainty. In addition, here we also compare with the D-optimal criterion in this study, which focuses on improving the estimation of calibration parameters but can also indirectly enhance predictive accuracy. Among these criteria, the D-optimal and IMSPE minimization criteria can actually be formulated and implemented in a fully Bayesian manner, a perspective that has been rarely discussed in previous studies. Their Bayesian formulations are presented in this subsection.\par
\bigskip
\noindent\textbf{Integrated Mean Squared Prediction Error Minimization} \ \ 
The integrated mean squared prediction error (IMSPE) minimization criterion is derived from the I-optimality criterion \citep{Studden1977Optimal}, whose purpose is to maximize the reduction in predictive variance. It was initially proposed by \citet{Sacks1989} as a selection criterion for experimental points that minimize the expected predictive uncertainty over the input space. Subsequently, \citet{Leatherman2017combined} and \citet{Koermer2024} provided detailed mathematical formulations of IMSPE within the KOH model framework. Here, we briefly outline its underlying principle and discuss its implementation in a fully Bayesian approach.\par
Based on the predictive covariance matrix given in Equation \eqref{eq:predictive-mean-cov}, the Bayesian formulation of the IMSPE criterion in this study is given as
\begin{equation}\label{eq:imspe-bayes-continuous}
\mathrm{IMSPE}\!\bigl(\boldsymbol{\xi}_{(b)}\bigr)
=
\mathbb{E}_{\boldsymbol{\Omega}\sim p(\boldsymbol{\Omega}\mid \mathbf{y})}
\!\left[
  \frac{1}{n^*}
  \operatorname{tr}\!\Bigl(
    \boldsymbol{\Sigma}^*(\mathbf{X}^* \mid \mathbf{y};\, \boldsymbol{\Omega},\, \mathcal{D}_{\text{sel}}^{(b)})
  \Bigr)
\right].
\end{equation}
where $\operatorname{tr}(\cdot)$ denotes the matrix trace operator. 
All parameters in the KOH model are bundled into the set
$\boldsymbol{\Omega} = (\boldsymbol{\theta}, \boldsymbol{\phi}_1, \boldsymbol{\phi}_2, \sigma^2)$, 
and $p(\boldsymbol{\Omega}\mid \mathbf{y})$ denotes the posterior distribution. 
The expectation is taken with respect to the posterior distribution of the model parameters. In practice, the integral is approximated by averaging with equal weights ${1}/{n^*}$ over the set of predictive inputs $\mathcal{X}_{\text{pred}}$. We are looking for experimental designs that minimize the IMSPE, which is formally equivalent to choosing
\begin{equation*}
\boldsymbol{\xi}_{(b)}
=
\underset{\boldsymbol{\xi}\in \mathcal{D}_{\text{cand}}\setminus \mathcal{D}_{\text{sel}}^{(b-1)}}{\arg\min}
\;
\mathbb{E}_{\boldsymbol{\Omega}\sim p(\boldsymbol{\Omega}\mid \mathbf{y})}
\!\left[
  \frac{1}{n^*}
  \operatorname{tr}\!\Bigl(
    \boldsymbol{\Sigma}^*(\mathbf{X}^* \mid \mathbf{y};\, \boldsymbol{\Omega},\, \mathcal{D}_{\text{sel}}^{(b)})
  \Bigr)
\right].
\end{equation*}
In the adaptive (online) design of experiments (ADE), the posterior $p(\boldsymbol{\Omega}\mid \mathbf{y})$ and the observed response $\mathbf{y}$ 
are both updated at each round $b$ by the newly selected design input $\boldsymbol{\xi}_{(b)}$ and its corresponding field observation $\mathbf{y}^{\text{new}}_{(b)}$.  \par
\bigskip
\noindent\textbf{D-optimality} \ \ 
The KOH model framework with two-stage GPs yields a closed-form expression for the response likelihood. Therefore, we can indirectly aim to improve future predictions by selecting experimental designs to better estimate calibration parameters $\boldsymbol{\theta}$. According to \citet{Krishna2021} and \citet{Diao2022}, we can first write the Fisher Information Matrix (FIM) of $\boldsymbol{\theta}$ in the usual trace form
\begin{equation}\label{eq:FIM-theta-given-Omega-traditional}
I_{ij}(\boldsymbol{\theta}\mid \boldsymbol{\Omega})
=
\left.
\Biggl(
\frac{\partial \boldsymbol{\mu}^\top}{\partial \theta_i}\,
\boldsymbol{\Sigma}_{oo}^{-1}\,
\frac{\partial \boldsymbol{\mu}}{\partial \theta_j}
+
\frac{1}{2}\,
\operatorname{tr}\!\Bigl(
\boldsymbol{\Sigma}_{oo}^{-1}\,
\frac{\partial \boldsymbol{\Sigma}_{oo}}{\partial \theta_i}\,
\boldsymbol{\Sigma}_{oo}^{-1}\,
\frac{\partial \boldsymbol{\Sigma}_{oo}}{\partial \theta_j}
\Bigr)
\Biggr)
\right|_{\boldsymbol{\Omega}},
\qquad i,j=1,\ldots,h.
\end{equation}
Here, $\boldsymbol{\mu}$ and $\boldsymbol{\Sigma}_{oo}$ are involved in Equation \eqref{eq:Sigma-mu} and both depend on the parameter set $\boldsymbol{\Omega}$. Thus, the FIM is also conditioned on $\boldsymbol{\Omega}$. In our study, we would like to incorporate the parameter uncertainty, thereby the fully Bayesian FIM is given as 
\begin{equation*}\label{eq:FB-FIM}
I_{ij}(\boldsymbol{\theta})
=
\mathbb{E}_{\boldsymbol{\Omega}\sim p(\boldsymbol{\Omega}\mid \mathbf{y})}
\!\left[
I_{ij}(\boldsymbol{\theta}\mid \boldsymbol{\Omega})
\right].
\end{equation*}
The Bayesian D-optimal design is then obtained by maximizing the log-determinant of the FIM, $\log\!\det I_{ij}(\boldsymbol\theta)$, after which the corresponding design input is 
\begin{equation*}\label{eq:D-argmax-one-step}
\boldsymbol{\xi}_{(b)}
=
\underset{\boldsymbol{\xi}\in \mathcal{D}_{\mathrm{cand}}\setminus \mathcal{D}_{\mathrm{sel}}^{(b-1)}}{\arg\max}
\;\log\!\det I_{ij}(\boldsymbol\theta).
\end{equation*}
In sequential (offline) design of experiments (SDE), the FIM $I_{ij}(\boldsymbol{\theta})$ can still differ across the candidate design inputs even though the parameter posterior $p(\boldsymbol{\Omega}\mid \mathbf{y})$ is fixed. This is because adding a new experimental input expands the mean vector $\boldsymbol{\mu}$ and the observational block of the covariance matrix $\boldsymbol{\Sigma}_{oo}$, even though the posterior distribution remains unchanged. In adaptive (online) design (ADE), the effect is further amplified since the parameter posterior $p(\boldsymbol{\Omega}\mid \mathbf{y})$ is also updated when a real field response $\mathbf{y}^{\text{new}}_{(b)}$ and a newly selected design input $\boldsymbol{\xi}_{(b)}$ are incorporated.\par
\bigskip

\noindent\textbf{Maximin Distance} \ \ 
Under circumstances where prior information of the model or expert knowledge is not available, space‐filling design is commonly adopted \citep{Johnson1990, PronzatoMuller2012, lin2015latin} to seek design points that fill a bounded experimental region as uniformly as possible. \citet{Johnson1990} proposed the Maximin Distance criterion, which aims to maximize the minimum Euclidean distance between each candidate design point and all previously selected points. In this study, we also use it as a geometric benchmark. The design input is searched in a greedy manner:
\begin{equation*}\label{eq:maximin-adaptive}
\boldsymbol{\xi}_{(b)}
=
\arg\!\max_{\boldsymbol{\xi}\in \mathcal{D}_{\mathrm{cand}}\setminus \mathcal{D}_{\mathrm{sel}}^{(b-1)}}
\;
\min_{\boldsymbol{\xi}_{(j)}\in \mathcal{D}_{\mathrm{sel}}^{(b-1)}(1\le j\le b-1)}
\bigl\lVert \boldsymbol{\xi} - \boldsymbol{\xi}_{(j)} \rVert_2.
\end{equation*}
This criterion promotes the space-filling property of the design without relying on the KOH model or any posterior updates. Therefore, the decision does not depend on whether the design strategy is SDE or ADE.

\section{Mutual Information–Based Design Criterion}
\label{sec:Proposed Mutual Information–Based Design Criterion}
In this section, we introduce the background of the conventional Bayesian experimental design (BED) framework based on mutual information (MI). Specifically, we first describe how to construct a utility function that guides the selection of experimental inputs to directly improve the predictive distribution $p(\mathbf{y^*})$ (marginalized over the model parameters $\boldsymbol{\Omega}$). We then incorporate the concept of local complexity of the model prediction and propose a newly hybrid criterion that provides a more efficient search for optimal design locations. Furthermore, we investigate several theoretical properties of the MI-based criterion, including its bias, asymptotic convergence, and its connection to the IMSPE minimization criterion.

\subsection{Mutual Information Gain for Improving Future Predictions}
\label{Mutual Information Gain for Improving Future Predictions}
From \citet{Lindley1972}, the utility function of BED should have the following general form:
\begin{equation}\label{eq:general-utility}
\begin{aligned}
U(\boldsymbol{\xi}) &= \iint u(\boldsymbol{\xi},\mathbf{y}^{\mathrm{new}};\boldsymbol{\Omega}) p(\mathbf{y}^{\mathrm{new}},\boldsymbol{\Omega}\mid \boldsymbol{\xi}) d\boldsymbol{\Omega}\,d\mathbf{y}^{\mathrm{new}} \\
&= \iint u(\boldsymbol{\xi},\mathbf{y}^{\mathrm{new}};\boldsymbol{\Omega}) p(\boldsymbol{\Omega}\mid \mathbf{y}^{\mathrm{new}}, \boldsymbol{\xi}) p(\mathbf{y}^{\mathrm{new}}\mid \boldsymbol{\xi}) d\boldsymbol{\Omega}\,d\mathbf{y}^{\mathrm{new}},
\end{aligned}
\end{equation}
where $U(\boldsymbol{\xi})$ is the expected utility of the conditional utility function $u(\boldsymbol{\xi},\mathbf{y}^{\mathrm{new}};\boldsymbol{\Omega})$ and $\mathbf{y}^{\mathrm{new}}$ represents the (as yet unobserved) outcome of an experiment conducted at $\boldsymbol{\xi}$. Since the true observation is unknown prior to performing the experiment, the expectation is integrated over the joint uncertainty of the model parameters $\boldsymbol{\Omega}$ and the potential observation $\mathbf{y}^{\mathrm{new}}$.\par
\citet{Ryan2016} introduced the computation of mutual information (MI) between the variable of interest and the potential observation. Consistent with the design objective and sequential design strategy described in Section~\ref{subsec:Design Objectives and Strategies}, the expected utility at each round $b$ can be expressed as 
\begin{equation}\label{eq:future_prediction_margi}
\begin{aligned}
U(\boldsymbol{\xi}_{(b)}) 
&= I(\mathbf{y}^*;\mathbf{y}^{\mathrm{new}}_{(b)}\mid \mathbf{y}, 
\mathcal{D}_{\text{sel}}^{(b)}) \\[3pt]
&= \iint 
p(\mathbf{y}^*\mid \mathbf{y}^{\mathrm{new}}_{(b)},\mathbf{y},\mathcal{D}_{\text{sel}}^{(b)})
\,p(\mathbf{y}^{\mathrm{new}}_{(b)}\mid \mathbf{y},\mathcal{D}_{\text{sel}}^{(b)}) \\
&\qquad\times 
\log\!\frac{
p(\mathbf{y}^*\mid \mathbf{y}^{\mathrm{new}}_{(b)},\mathbf{y},\mathcal{D}_{\text{sel}}^{(b)})
}{
p(\mathbf{y}^*\mid \mathbf{y},\mathcal{D}_{\text{sel}}^{(b)})
}\,d\mathbf{y}^{\mathrm{new}}_{(b)}\,d\mathbf{y}^*.
\end{aligned}
\end{equation}
Corresponding to \eqref{eq:general-utility}, the marginalized utility function over $\boldsymbol{\Omega}$ is defined as the log ratio between the posterior and prior predictive distributions of the quantity of interest:
\[
u(\boldsymbol{\xi}_{(b)},\mathbf{y}^{\mathrm{new}}_{(b)};\mathbf{y}^*)
=
\log
\frac{
p(\mathbf{y}^*\mid \mathbf{y}^{\mathrm{new}}_{(b)},\mathbf{y},\mathcal{D}_{\text{sel}}^{(b)})
}{
p(\mathbf{y}^*\mid \mathbf{y},\mathcal{D}_{\text{sel}}^{(b)})
}.
\]
We are not the first to propose this utility function. \citet{ChalonerVerdinelli1995} and \cite{kleinegesse2021gradientbasedbayesianexperimentaldesign} also presented similar formulations, although they either ignored parameter uncertainty or conceptually marginalized it out, thereby not fully capturing how this uncertainty propagates to quantities $\mathbf{y}^{\mathrm{new}}_{(b)}$ and $\mathbf{y}^*$. In this work, we explicitly expand the marginal distributions to account for the impact of parameter uncertainty on both the potential observation and the future prediction. For computational convenience, we can rewrite Equation \eqref{eq:future_prediction_margi} as 
\begin{equation}\label{eq:future_prediction_margi1}
\begin{aligned}
U(\boldsymbol{\xi}_{(b)}) 
&= \iint 
p(\mathbf{y}^*, \mathbf{y}^{\mathrm{new}}_{(b)}\mid \mathbf{y},\mathcal{D}_{\text{sel}}^{(b)})
\\
&\qquad\times 
\log\!\frac{
p(\mathbf{y}^*, \mathbf{y}^{\mathrm{new}}_{(b)}\mid \mathbf{y},\mathcal{D}_{\text{sel}}^{(b)})
}{
p(\mathbf{y}^{\mathrm{new}}_{(b)}\mid \mathbf{y},\mathcal{D}_{\text{sel}}^{(b)})
p(\mathbf{y}^* \mid \mathbf{y}, \mathcal{D}_{\text{sel}}^{(b)})
}
\,d\mathbf{y}^{\mathrm{new}}_{(b)}\,d\mathbf{y}^*.
\end{aligned}
\end{equation}
Each marginal likelihood term in both the numerator and denominator of Equation \eqref{eq:future_prediction_margi1} is obtained by integrating over the posterior distribution of the model parameters $\boldsymbol{\Omega})$, e.g.
\begin{equation*}\label{eq:marginalization-over-Omega}
\begin{aligned}
p(\mathbf{y}^*, \mathbf{y}^{\mathrm{new}}_{(b)} \mid \mathbf{y},
\mathcal{D}_{\text{sel}}^{(b)})
&= 
\int 
p(\mathbf{y}^*, \mathbf{y}^{\mathrm{new}}_{(b)} 
\mid \boldsymbol{\Omega}, 
\mathcal{D}_{\text{sel}}^{(b)})
\,p(\boldsymbol{\Omega}\mid \mathbf{y})\,d\boldsymbol{\Omega}, \\[6pt]
p(\mathbf{y}^{\mathrm{new}}_{(b)} \mid 
\mathbf{y},\mathcal{D}_{\text{sel}}^{(b)})
&=
\int 
p(\mathbf{y}^{\mathrm{new}}_{(b)} 
\mid \boldsymbol{\Omega}, 
\mathcal{D}_{\text{sel}}^{(b)})
\,p(\boldsymbol{\Omega}\mid \mathbf{y})\,d\boldsymbol{\Omega}, \\[6pt]
p(\mathbf{y}^* \mid \mathbf{y},
\mathcal{D}_{\text{sel}}^{(b)})
&=
\int 
p(\mathbf{y}^* 
\mid \boldsymbol{\Omega}, 
\mathcal{D}_{\text{sel}}^{(b)})
\,p(\boldsymbol{\Omega}\mid \mathbf{y})\,d\boldsymbol{\Omega}.
\end{aligned}
\end{equation*}
In the adaptive (online) design of experiments (ADE), the posterior $p(\boldsymbol{\Omega}\mid \mathbf{y})$ and the observed response $\mathbf{y}$ 
are both updated or appended at each round $b$. Considering both parameter uncertainty and predictive uncertainty, this utility function can be interpreted as the expected Kullback-Leibler (KL) divergence \citep{KullbackLeibler1951} between the posterior predictive and prior predictive distributions of future observations $\mathbf{y}^*$, integrated over all possible model parameters $\boldsymbol{\Omega}$. Hence, the design input can be selected by maximizing the expected utility
\begin{equation*}\label{eq:mi-argmax-one-step}
\boldsymbol{\xi}_{(b)}
\;=\;
\underset{\boldsymbol{\xi}\in \mathcal{D}_{\text{cand}}\setminus \mathcal{D}_{\text{sel}}^{(b-1)}}{\arg\max}\;
U(\boldsymbol{\xi}).
\end{equation*}
Figure~\ref{fig:hpd_allround} shows an illustrative example of the evolution of the joint predictive distribution at two locations over a 5-run sequential (offline) design of experiments (SDE) using this proposed criterion. We can see from the plot that, as the design rounds progress, the uncertainty of the distribution decreases substantially and the posterior density (95$\%$ Highest Posterior Density (HPD)) is becoming more sharply focused around the true value.

\begin{figure}[htbp]
  \centering
  \includegraphics[width=0.9\textwidth]{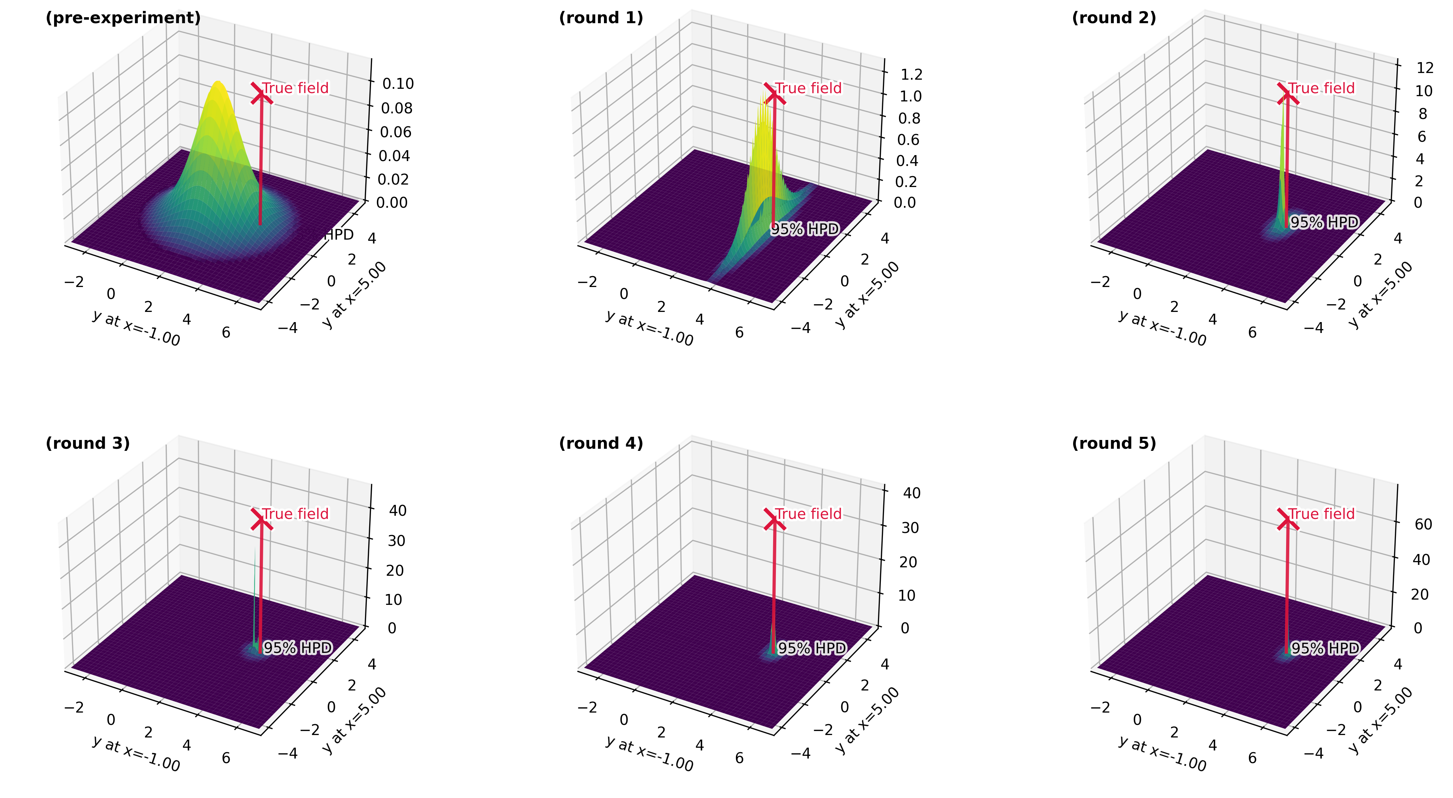}
  \caption{
  Evolution of the joint predictive distribution at two locations under a 5-run sequential (offline) design of experiments (SDE) using the proposed MI-based criterion.
  }
  \label{fig:hpd_allround}
\end{figure}

\subsection{Hybrid Criterion with Local Complexity Adjustment}
\label{sec:Hybrid Criterion with Local Complexity Adjustment}
The criterion based on mutual information gain proposed above can help obtain the points that yield the greatest reduction in global predictive uncertainty. However, this criterion may also have the drawback of oversampling in regions of high predictive variance and missing critical model structure such as extrema or inflection regions. Inspired by numerical optimization methods such as Newton’s method \citep{BonnansGilbertLemarechalSagastizabal2006}, 
we propose to quantify the local complexity of the predictive surface through a composite metric that combines the local slope and the slope-variation at the candidate input $\boldsymbol{\xi}_{(b)}$. \par
Let the predictive mean function at a single candidate input 
$\boldsymbol{\xi}_{(b)} \in \mathbb{R}^{d}$ be expressed as $\boldsymbol{\mu}^*(\boldsymbol{\xi}_{(b)}) \in \mathbb{R}^{p}$, where $p$ denotes the number of output dimensions (tasks). In a fully Bayesian way it can be written as 
\begin{equation*}
\boldsymbol{\mu}^*(\boldsymbol{\xi}_{(b)})
= 
\mathbb{E}_{\boldsymbol{\Omega}\sim p(\boldsymbol{\Omega}\mid \mathbf{y})}
\!\left[
\boldsymbol{\mu}^*(\boldsymbol{\xi}_{(b)};\boldsymbol{\Omega})
\right].
\end{equation*}
The local slope at $\boldsymbol{\xi}_{(b)}$ is then characterized by the Jacobian matrix
\begin{equation*}
\mathbf{J}(\boldsymbol{\xi}_{(b)})
=
\frac{\partial \boldsymbol{\mu}^*(\boldsymbol{\xi}_{(b)})}
{\partial \boldsymbol{\xi}_{(b)}}
\in \mathbb{R}^{p \times d},
\end{equation*}
whose Frobenius norm quantifies the overall rate of change of the predictive mean field with respect to the input dimensions:
\begin{equation}\label{eq:gradient}
g(\boldsymbol{\xi}_{(b)})
=
\|\mathbf{J}(\boldsymbol{\xi}_{(b)})\|_F
=
\sqrt{
\sum_{i=1}^{p}
\sum_{j=1}^{d}
\left(
\frac{\partial \boldsymbol{\mu}_i^*(\boldsymbol{\xi}_{(b)})}
{\partial \xi_{(b),j}}
\right)^2
},
\end{equation}
where $\boldsymbol{\xi}_{(b)} 
= 
\bigl(\xi_{(b),1},\xi_{(b),2},\ldots,\xi_{(b),d}\bigr)^{\top}$. In practice, we flatten all outputs into a single vector so that the multi-task case can be treated as a single-output scenario and Frobenius norm provides a natural scalar measure of local sensitivity without requiring task-specific weights. The scalar quantity $g(\boldsymbol{\xi}_{(b)})$ serves as a measure of the local slope or steepness of the predictive mean field, and forms the first component of our composite local complexity metric.\par
The other component of the metric evaluates the change rate of the local slope. In continuous optimization problems, this can be conveniently quantified using curvature. However, in the context of physical experimental design, where the budget is normally limited and the candidate design inputs are highly discrete, it is impractical to compute smooth derivatives. Therefore, we construct a slope-change score that measures the total variation of the local slope among all discrete candidate points. Specifically, we define it as
\begin{equation}\label{eq:slope-change-b}
s(\boldsymbol{\xi}_{(b)})
=
\frac{1}{k}
\sum_{j \in \mathcal{N}(b)}
\big|g(\boldsymbol{\xi}{(b)}) - g(\boldsymbol{\xi}{(j)}) \big|,
\end{equation}
where $\mathcal{N}(b)$ denotes the index set of the $k$ nearest neighbors of $\boldsymbol{\xi}{(b)}$ in the input space. This formulation can be interpreted as a discrete analogue of curvature, measuring how rapidly the local slope changes in the neighborhood of each candidate $\boldsymbol{\xi}{(b)}$. A larger $s(\boldsymbol{\xi}_{(b)})$ indicates a region with higher nonlinearity or stronger local variation.\par
For the computation of Equation~\eqref{eq:gradient}, 
either automatic differentiation or the finite-difference method can be employed. After obtaining both the slope $g(\boldsymbol{\xi}_{(b)})$ from Equation~\eqref{eq:gradient} and the slope-change score $s(\boldsymbol{\xi}_{(b)})$ from Equation~\eqref{eq:slope-change-b}, 
we standardize them to the $[0,1]$ range and give the local complexity measurement as a weighted combination of these two components:
\begin{equation*}\label{eq:composite-score}
C(\boldsymbol{\xi}_{(b)})
=
w_g\, \tilde{g}(\boldsymbol{\xi}_{(b)})
+
w_s\, \tilde{s}(\boldsymbol{\xi}_{(b)}),
\qquad
w_g + w_s = 1,
\end{equation*}
where $\tilde{g}(\cdot)$ and $\tilde{s}(\cdot)$ denote the normalized slope and slope-change score, respectively.  
The weights $w_g$ and $w_s$ can be freely chosen according to prior knowledge about the model behavior or different design preferences. For example, a larger $w_g$ places more emphasis on regions of steep gradients, while a larger $w_s$ encourages exploration of regions exhibiting strong nonlinearity or fluctuation. This composite score $C(\boldsymbol{\xi}_{(b)})$ therefore offers a flexible and interpretable metric for quantifying the local complexity of the model.\par
If we hybridize the proposed local complexity adjustment with criteria driven by uncertainty reduction, then we comprehensively account for the global predictive uncertainty as well as the local structural pattern of the predictive surface. Specifically, we can incorporate the local complexity term with both the MI-based utility in Equation \eqref{eq:future_prediction_margi1} and the IMSPE in Equation \eqref{eq:imspe-bayes-continuous}. The formulations of each are given by: 
\begin{equation}\label{eq:hybrid-MI}
U_{\mathrm{hyb}}^{\mathrm{MI}}\!\bigl(\boldsymbol{\xi}_{(b)}\bigr)
=
(1-\alpha)\,
U\!\bigl(\boldsymbol{\xi}_{(b)}\bigr)
+
\alpha\,
C\!\bigl(\boldsymbol{\xi}_{(b)}\bigr),
\qquad
\alpha \in [0,1],
\end{equation}
\begin{equation}\label{eq:hybrid-IMSPE}
U_{\mathrm{hyb}}^{\mathrm{IMSPE}}\!\bigl(\boldsymbol{\xi}_{(b)}\bigr)
=
(1-\alpha)\,
\bigl[-\,\mathrm{IMSPE}\!\bigl(\boldsymbol{\xi}_{(b)}\bigr)\bigr]
+
\alpha\,
C\!\bigl(\boldsymbol{\xi}_{(b)}\bigr),
\qquad
\alpha \in [0,1].
\end{equation}
The weighting factor $\alpha$ is also user-chosen to balance the design preference between predictive uncertainty and the local pattern of the predictive surface.

\subsection{Theoretical Properties}
\label{sec:Theoretical Properties}
In Section~\ref{subsec:Existing Design Criteria under Bayesian Framework}, we discussed the IMSPE minimization criterion, which has a close resemblance, in terms of function, to the MI-based criterion defined in Equation~\eqref{eq:future_prediction_margi}, as both aim to directly reduce predictive uncertainty. It is therefore theoretically valuable to formalize the connection between these two design objectives. For theoretical clarity, we first establish the relationship conditional on the model parameters $\boldsymbol{\Omega}$. Since the Bayesian design criteria considered in this work marginalize over $\boldsymbol{\Omega}$ through the posterior distribution $p(\boldsymbol{\Omega}\mid\mathbf{y})$, the conditional result can be interpreted as describing the local behavior of the integrand within the Bayesian expectation. In a myopic sequential or adaptive design framework, the selection of the design point $\boldsymbol{\xi}_{(b)}$ at round $b$ depends on the posterior state obtained after the previous selection $\boldsymbol{\xi}_{(b-1)}$. The MI-based criterion can be interpreted as quantifying the change in predictive entropy before and after the inclusion of the new observation $\mathbf{y}^{\mathrm{new}}_{(b)}$ at $\boldsymbol{\xi}_{(b)}$. While the IMSPE minimization is formulated in terms of the predictive 
variance at the current round $b$, it implicitly measures the reduction in predictive variance relative to the variance obtained from the previous design history $\mathcal{D}_{\text{sel}}^{(b-1)}$ without choosing $\boldsymbol{\xi}_{(b)}$. Building on this, we can provide a general relationship showing that the MI-based criterion is locally asymptotically equivalent to the IMSPE minimization criterion as the sequential design proceeds. We may further interpret the IMSPE minimization criterion as a relaxed form of the MI-based criterion. It preserves the same optimization direction but provides a computationally simpler surrogate, since the IMSPE depends only on the second moment (trace) of the predictive covariance rather than the full covariance structure. Before proceeding with the formal expression, we denote $\boldsymbol{\Sigma}^*_{(b-1)}$ as the pre-experiment predictive covariance matrix at round $b-1$, corresponding to the part $\boldsymbol{\Sigma}^* \mid \mathbf{y};\, \boldsymbol{\Omega},\, \mathcal{D}_{\text{sel}}^{(b-1)}$ in Equation~\eqref{eq:imspe-bayes-continuous}. Similarly, $\boldsymbol{\Sigma}^*_{(b)}$ denotes the corresponding predictive 
covariance matrix at round $b$. Let $\lambda_{\min}(\boldsymbol{\Sigma}^*_{(b-1)})$ and $\lambda_{\max}(\boldsymbol{\Sigma}^*_{(b-1)})$ denote the smallest and the biggest eigenvalue of $\boldsymbol{\Sigma}^*_{(b-1)}$, respectively. Then, we have the following theorem:

\begin{theorem}[Local asymptotic relationship between MI and IMSPE]
\label{thm:MI-IMSPE}
Assume that, (i) conditional on $\boldsymbol{\Omega}$, 
$(\mathbf y^*,\mathbf y^{\mathrm{new}}_{(b)})$ is jointly Gaussian, (ii) $\boldsymbol{\Sigma}^*_{(b-1)}$ is symmetric positive definite
with eigenvalues 
$0<\lambda_{\min}\le\lambda_{\max}<\infty$. Define the covariance reduction and IMSPE reduction at round $b$ as
\[
\Delta\boldsymbol{\Sigma}^*(\boldsymbol{\xi}_{(b)})
:=
\boldsymbol{\Sigma}^*_{(b-1)}
-
\boldsymbol{\Sigma}^*_{(b)}
\succeq \mathbf 0,
\qquad
\Delta\mathrm{IMSPE}(\boldsymbol{\xi}_{(b)} \mid \boldsymbol{\Omega
})
:=
\frac{1}{n^*}\,
\operatorname{tr}\big(
\Delta\boldsymbol{\Sigma}^*(\boldsymbol{\xi}_{(b)})
\big),
\]
As the sequential design rounds proceed, the following small-update regime holds:
\[
\big\|
(\boldsymbol{\Sigma}^*_{(b-1)})^{-1/2}
\,
\Delta\boldsymbol{\Sigma}^*(\boldsymbol{\xi}_{(b)})
\,
(\boldsymbol{\Sigma}^*_{(b-1)})^{-1/2}
\big\|_2 \;\longrightarrow\; 0.
\]
Then there exists a finite constant $C_{b-1}$, depending on
$\boldsymbol{\Sigma}^*_{(b-1)}$, such that
\[
\frac{U(\boldsymbol{\xi}_{(b)}\mid \boldsymbol{\Omega
})}{\Delta\mathrm{IMSPE}(\boldsymbol{\xi}_{(b)}\mid \boldsymbol{\Omega
}) }
\;\longrightarrow\;
C_{b-1}
\quad\text{as}\quad
\Delta\boldsymbol{\Sigma}^*(\boldsymbol{\xi}_{(b)})\to \mathbf 0,
\]
and this constant is bounded in terms of the eigenvalues of
$\boldsymbol{\Sigma}^*_{(b-1)}$:
\[
\frac{n^*}{2\,\lambda_{\max}\!\big(\boldsymbol{\Sigma}^*_{(b-1)}\big)}
\;\le\;
C_{b-1}
\;\le\;
\frac{n^*}{2\,\lambda_{\min}\!\big(\boldsymbol{\Sigma}^*_{(b-1)}\big)}.
\]
\end{theorem}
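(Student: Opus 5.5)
Conditional on $\boldsymbol{\Omega}$, assumption (i) makes the mutual information in \eqref{eq:future_prediction_margi} a Gaussian entropy difference, so the plan is to write $U$ as a log-determinant, expand it to first order in the normalized covariance reduction, and compare the resulting linear functional of $\Delta\boldsymbol{\Sigma}^*$ with the trace defining $\Delta\mathrm{IMSPE}$. Specifically, I would first show
\[
U(\boldsymbol{\xi}_{(b)}\mid\boldsymbol{\Omega})
=\tfrac12\log\det\boldsymbol{\Sigma}^*_{(b-1)}-\tfrac12\log\det\boldsymbol{\Sigma}^*_{(b)}
=-\tfrac12\log\det\bigl(\mathbf I-\mathbf A\bigr),
\qquad
\mathbf A:=(\boldsymbol{\Sigma}^*_{(b-1)})^{-1/2}\Delta\boldsymbol{\Sigma}^*(\boldsymbol{\xi}_{(b)})(\boldsymbol{\Sigma}^*_{(b-1)})^{-1/2}.
\]
This uses the fact that the conditional covariance of $\mathbf y^*$ given $\mathbf y^{\mathrm{new}}_{(b)}$ does not depend on the realized value in the Gaussian case. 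Then $\mathbf A$ is symmetric with eigenvalues $a_i\in[0,1)$, because $\boldsymbol{\Sigma}^*_{(b)}\succ\mathbf 0$ and $\Delta\boldsymbol{\Sigma}^*\succeq\mathbf 0$. Also $\|\mathbf A\|_2\to0$ by the small-update hypothesis.

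Second, I would expand $-\log(1-a)=a+O(a^2)$ eigenvalue-wise. This gives $U=\tfrac12\operatorname{tr}(\mathbf A)+R$ with $|R|\le c\,\|\mathbf A\|_2\operatorname{tr}(\mathbf A)$ for $\|\mathbf A\|_2\le 1/2$; this uses $a_i\ge0$, so $\sum a_i^2\le \|\mathbf A\|_2\sum a_i$. Hence
\[
\frac{U}{\operatorname{tr}(\mathbf A)}=\tfrac12+O(\|\mathbf A\|_2)\to\tfrac12,
\qquad
\operatorname{tr}(\mathbf A)=\operatorname{tr}\bigl((\boldsymbol{\Sigma}^*_{(b-1)})^{-1}\Delta\boldsymbol{\Sigma}^*\bigr).
\]
Writing $\mathbf D=\Delta\boldsymbol{\Sigma}^*$, this yields
\[
\frac{U}{\Delta\mathrm{IMSPE}}=\frac{n^*}{2}\,\frac{\operatorname{tr}\bigl((\boldsymbol{\Sigma}^*_{(b-1)})^{-1}\mathbf D\bigr)}{\operatorname{tr}(\mathbf D)}\,\bigl(1+o(1)\bigr).
\]

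Third, I would bound the Rayleigh-type ratio $\rho(\mathbf D)=\operatorname{tr}(\mathbf S^{-1}\mathbf D)/\operatorname{tr}(\mathbf D)$, with $\mathbf S=\boldsymbol{\Sigma}^*_{(b-1)}$, for $\mathbf D\succeq\mathbf 0$, $\mathbf D\neq\mathbf 0$. Diagonalize $\mathbf S=\mathbf Q\Lambda\mathbf Q^\top$. Then $\operatorname{tr}(\mathbf S^{-1}\mathbf D)=\sum_i \lambda_i^{-1}(\mathbf Q^\top\mathbf D\mathbf Q)_{ii}$, with nonnegative diagonal entries summing to $\operatorname{tr}(\mathbf D)$. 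So $\rho$ is a convex combination of the $\lambda_i^{-1}$ and lies in $[1/\lambda_{\max},1/\lambda_{\min}]$. This gives the stated bounds $n^*/(2\lambda_{\max})\le\cdot\le n^*/(2\lambda_{\min})$.

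The main obstacle is the word ``converges'': $\rho(\mathbf D)$ depends on the direction of $\mathbf D$, not only its size, so a genuine limit $C_{b-1}$ exists only if the direction stabilizes. I would handle this in one of two ways. One option is to note that a single new field observation gives a rank-one update $\mathbf D=\mathbf v\mathbf v^\top$, with $\mathbf v$ the scaled cross-covariance. Then $\rho=\mathbf v^\top\mathbf S^{-1}\mathbf v/\mathbf v^\top\mathbf v$, and one assumes the normalized direction $\mathbf v/\|\mathbf v\|$ converges. The other option is to define $C_{b-1}$ as $\tfrac{n^*}{2}\rho(\mathbf D)$ along the considered sequence, or take limit points, and state the result as: the ratio equals $C_{b-1}(1+o(1))$ with $C_{b-1}$ in the stated interval. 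I would adopt the first option and remark that the eigenvalue bounds hold uniformly regardless.
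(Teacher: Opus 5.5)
Your proposal is correct and follows essentially the same route as the paper: the Gaussian log-determinant form of $U$, the normalized reduction $\boldsymbol{\Sigma}_0^{-1/2}\Delta\boldsymbol{\Sigma}\,\boldsymbol{\Sigma}_0^{-1/2}$ (the paper's $\mathbf B$, your $\mathbf A$), and an eigenvalue-wise expansion of $-\log(1-\tau_i)$ whose remainder is controlled via $\sum_i\tau_i^2\le\|\mathbf B\|_2\sum_i\tau_i$; the bounds then come from the trace inequality $\lambda_{\min}(\boldsymbol{\Sigma}_0^{-1})\operatorname{tr}(\Delta\boldsymbol{\Sigma})\le\operatorname{tr}(\boldsymbol{\Sigma}_0^{-1}\Delta\boldsymbol{\Sigma})\le\lambda_{\max}(\boldsymbol{\Sigma}_0^{-1})\operatorname{tr}(\Delta\boldsymbol{\Sigma})$, which your convex-combination argument reproduces. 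On the existence of the limit, the paper only shows that the ratio equals $\tfrac{n^*}{2}\operatorname{tr}(\boldsymbol{\Sigma}_0^{-1}\Delta\boldsymbol{\Sigma})/\operatorname{tr}(\Delta\boldsymbol{\Sigma})+o(1)$ and then asserts that \emph{any} limit lies in the interval, so your caveat (assuming the rank-one direction stabilizes, or stating the result as $C_{b-1}(1+o(1))$) is a legitimate tightening rather than a gap.
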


\begin{proof}
See Appendix~\ref{sec:appendixA}.
\end{proof}

This result establishes a quantitative link between the two prediction-improvement criteria. At each design round, the new chosen location $\boldsymbol{\xi}_{(b)}$ and the new observation $\mathbf y^{\mathrm{new}}_{(b)}$ (if in ADE) bring a reduction to predictive covariance. As the rounds proceed sequentially and $b$ becomes sufficiently large, maximizing mutual information  becomes asymptotically equivalent to minimizing the IMSPE, with their difference collapsing to a ratio factor only related to the current predictive uncertainty structure. This ratio factor can be stable as the predictive covariance matrix becomes approximately isotropic as the design rounds continue, leading to the limiting equivalence established in
Theorem~\ref{thm:MI-IMSPE}. The theorem is stated conditional on the model parameters $\boldsymbol{\Omega}$, however, the Bayesian design criterion marginalizes over $\boldsymbol{\Omega}$ through the posterior $p(\boldsymbol{\Omega}\mid \mathbf{y})$. As the sequential design progresses, the posterior update of $\boldsymbol{\Omega}$ also becomes increasingly small, i.e., $p(\boldsymbol{\Omega} \mid \mathbf{y}_{1:b}) - p(\boldsymbol{\Omega} \mid \mathbf{y}_{1:b-1}) \to 0$, the conditional asymptotic equivalence extends naturally to the Bayesian (marginalized) setting. \par
From another perspective, the MI-based criterion
can be interpreted as a spectrally weighted version of IMSPE. In early design stages, when $b$ is relatively small and predictive uncertainty is highly anisotropic, the MI-based criterion is more sensitive to the directions associated with larger uncertainty reduction and it can guide experimental design more effectively especially at early rounds. Therefore, the MI-based criterion can be regarded as a generalization of the IMSPE criterion, providing enhanced robustness and exploration ability. This is more aligned with the Bayesian experimental design framework because it explicitly considers the full structure of predictive covariance throughout the process.\par
For the computation of mutual information gain in Equation~\eqref{eq:future_prediction_margi1}, due to the intractability of the marginal likelihood of $\mathbf{y}^*$ and $\mathbf{y}^{\mathrm{new}}_{(b)}$ both individually and jointly—as well as the posterior
$p(\boldsymbol{\Omega}\mid \mathbf{y})$, the conventional Monte Carlo (MC) estimator cannot be directly applied. Instead, a nested Monte Carlo (NMC) estimator \citep{Rainforth2018} is required. In particular, \eqref{eq:future_prediction_margi1} can be computed as an NMC estimator:
\begin{equation}\label{eq:NMC_estimator_MI}
\widehat{U}_{S,J}(\boldsymbol{\xi}_{(b)})
=\frac{1}{S}\sum_{s=1}^{S}
\Bigg[
\log \widehat{p}^{\,\mathrm{joint}}_{J}\!\big(\mathbf{y}^*_s,\mathbf{y}^{\mathrm{new}}_{(b),s}\big)
-\log \widehat{p}^{\,*}_{J}\!\big(\mathbf{y}^*_s\big)
-\log \widehat{p}^{\,\mathrm{new}}_{J}\!\big(\mathbf{y}^{\mathrm{new}}_{(b),s}\big)
\Bigg].
\end{equation}
Let $S\in\mathbb{N}$ be the number of outer samples and $J\in\mathbb{N}$ the number of inner samples. For each outer index \(s=1,\dots,S\), we first draw $\boldsymbol{\Omega}^{(0)}_s \sim p(\boldsymbol{\Omega}\mid \mathbf{y},\mathcal D_{\mathrm{sel}}^{(b)})$ and then sample $(\mathbf{y}^*_s, \mathbf{y}^{\mathrm{new}}_{(b),s}) \sim p(\mathbf{y}^*, \mathbf{y}^{\mathrm{new}}_{(b)} \mid \mathbf{y},\boldsymbol{\Omega}^{(0)}_s, \mathcal D_{\mathrm{sel}}^{(b)})$. In the inner layer, we independently draw \(\boldsymbol{\Omega}^{(1:J)}_s=\{\boldsymbol{\Omega}^{(j)}_s\}_{j=1}^J\) i.i.d. from \(p(\boldsymbol{\Omega}\mid \mathbf{y})\). The Monte Carlo estimates of each corresponding marginal density in Equation~\eqref{eq:NMC_estimator_MI} for each outer sample $s$ can then be expressed as
\begin{align*}
    \widehat{p}^{\,\mathrm{joint}}_{J}\!\big(\mathbf{y}^*_s,\mathbf{y}^{\mathrm{new}}_{(b),s}\big)
    &=\frac{1}{J}\sum_{j=1}^J 
    p\!\left(\mathbf{y}^*_s,\mathbf{y}^{\mathrm{new}}_{(b),s}\mid \boldsymbol{\Omega}^{(j)}_s,\mathcal D_{\mathrm{sel}}^{(b)}\right),\\[4pt]
    \widehat{p}^{\,*}_{J}\!\big(\mathbf{y}^*_s\big)
    &=\frac{1}{J}\sum_{j=1}^J 
    p\!\left(\mathbf{y}^*_s\mid \boldsymbol{\Omega}^{(j)}_s,\mathcal D_{\mathrm{sel}}^{(b)}\right),\\[4pt]
    \widehat{p}^{\,\mathrm{new}}_{J}\!\big(\mathbf{y}^{\mathrm{new}}_{(b),s}\big)
    &=\frac{1}{J}\sum_{j=1}^J 
    p\!\left(\mathbf{y}^{\mathrm{new}}_{(b),s}\mid \boldsymbol{\Omega}^{(j)}_s,\mathcal D_{\mathrm{sel}}^{(b)}\right).
\end{align*}
We now examine the asymptotic behavior of the proposed NMC estimator $\widehat{U}_{S,J}(\boldsymbol{\xi}_{(b)})$ towards $U(\boldsymbol{\xi}_{(b)})$ as the number of samples increases. From \citet{HongJuneja2010}, if the outer-layer transformation/function is thrice differentiable with bounded third derivative, then the estimator can achieve a convergence rate of ${O}(1/S + 1/J^2)$. Here in our setting, the transformation is $f(\cdot)=\log(\cdot)$, thus it satisfies these smoothness conditions on any compact subset $[\tau, \infty) \subset \mathbb{R}_+$, as summarized in the lemma below.
\begin{lemma}[Smoothness and bounded derivatives of the outer-layer function]\label{lemma1}
Let the outer-layer function of the proposed NMC estimator be the log function
\( f : \mathbb{R}_+ \to \mathbb{R},\: x \mapsto \log(x) \). Let the argument of \( f \) be lower-bounded by a sufficiently small positive constant 
\( \tau > 0 \), i.e., \( f \) is only evaluated for inputs in \( [\tau, \infty) \). Then \( f \) is thrice differentiable continuously on \( [\tau, \infty) \), and all derivatives up to order three are bounded.
\end{lemma}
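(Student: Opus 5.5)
The proof is a direct computation of the derivatives of $f(x)=\log x$, followed by bounding each on the restricted domain $[\tau,\infty)$. On the open half-line $(0,\infty)$ the logarithm is $C^\infty$, since it is the antiderivative of the smooth function $x\mapsto 1/x$. Every derivative therefore exists and is continuous on $[\tau,\infty)\subset(0,\infty)$; at the left endpoint $\tau$ this is ordinary two-sided differentiability, because $\tau$ lies in the interior of the natural domain. Explicitly,
\[
f'(x)=\frac{1}{x},\qquad f''(x)=-\frac{1}{x^{2}},\qquad f'''(x)=\frac{2}{x^{3}},
\]
which settles the thrice continuous differentiability claim.

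For the derivatives of order one to three, each $|f^{(k)}(x)|=(k-1)!\,x^{-k}$ is monotonically decreasing in $x$ on $[\tau,\infty)$. Each is therefore maximized at $x=\tau$, giving the uniform bounds $|f'|\le 1/\tau$, $|f''|\le 1/\tau^{2}$ and $|f'''|\le 2/\tau^{3}$. All are finite because $\tau>0$. This is the only place the lower truncation is used: without it all three derivatives blow up as $x\to 0^{+}$.

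The main subtlety is the order-zero derivative, i.e.\ $f$ itself. $\log x$ is bounded below by $\log\tau$ on $[\tau,\infty)$ but is unbounded above, so the statement "all derivatives up to order three are bounded" should be read as covering orders one through three. These are exactly the conditions \citet{HongJuneja2010} require for the $O(1/S+1/J^{2})$ bias–variance rate: a bounded third derivative, together with the Lipschitz control supplied by the bounded first and second derivatives. If boundedness of $f$ itself were needed, I would add the harmless assumption that the density estimates $\widehat{p}_J$ lie in a compact interval $[\tau,T]$. This holds in our Gaussian setting whenever the covariances have eigenvalues bounded away from zero, since Gaussian densities are then bounded above. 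Under that assumption $|f|\le\max(|\log\tau|,|\log T|)$.

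Finally, I would note why evaluating $f$ only on $[\tau,\infty)$ is reasonable for the estimator \eqref{eq:NMC_estimator_MI}. Each inner average is a mean of strictly positive Gaussian densities evaluated at the outer samples. In practice the estimator is implemented with a floor $\max(\widehat{p}_J,\tau)$, or one restricts attention to the event that the estimate exceeds $\tau$, an event whose complement has vanishing probability as $J$ grows. The lemma then applies verbatim.
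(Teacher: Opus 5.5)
Your proof is correct and follows the paper's: compute $f'=1/x$, $f''=-1/x^{2}$, $f'''=2/x^{3}$ and bound them by $1/\tau$, $1/\tau^{2}$, $2/\tau^{3}$ on $[\tau,\infty)$. You also correctly note that $f$ itself is unbounded above on $[\tau,\infty)$, so the boundedness claim holds only for orders one to three unless one adds an upper truncation such as $[\tau,T]$; the paper's proof passes over this point.
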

\begin{proof}
See Appendix~\ref{sec:appendixA}.
\end{proof}\noindent
Following Lemma~\ref{lemma1}, our MI-based utility function can be viewed as a specific instance of the general nested expectation problem analyzed by \citet{HongJuneja2010}. Therefore, the NMC estimator in Equation~\eqref{eq:NMC_estimator_MI} satisfies the established convergence properties, with the mean square error of $\widehat{U}_{S,J}(\boldsymbol{\xi}_{(b)})$ decaying at a rate of ${O}(1/S + 1/J^2)$.

\begin{corollary}[Convergence of the NMC estimator]\label{corollary1}
Under the conditions in Lemma~\ref{lemma1} and assuming independence between the inner and outer samplers, the mean squared error of $\widehat{U}_{S,J}(\boldsymbol{\xi}_{(b)})$ converges to 0 at rate $O(S^{-1}+J^{-2})$,
\[
\mathbb{E}\big[(\widehat{U}_{S,J}(\boldsymbol{\xi}_{(b)})-U(\boldsymbol{\xi}_{(b)}))^2\big]=O(S^{-1}+J^{-2}).
\]
\end{corollary}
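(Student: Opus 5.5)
The plan is to reduce Corollary~\ref{corollary1} to the nested-expectation MSE bound of \citet{HongJuneja2010}. That bound holds once the outer function is thrice continuously differentiable with bounded derivatives on the range where it is evaluated, and this is exactly what Lemma~\ref{lemma1} provides for $f=\log$ on $[\tau,\infty)$.

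\textbf{Decomposition.} Write $\widehat{U}_{S,J}=A_{S,J}-B_{S,J}-C_{S,J}$, where the three terms are the outer averages of $\log\widehat p^{\,\mathrm{joint}}_J$, $\log\widehat p^{\,*}_J$ and $\log\widehat p^{\,\mathrm{new}}_J$. Write $U=A-B-C$ for the corresponding exact expectations, for instance $A=\mathbb{E}[\log p(\mathbf y^*,\mathbf y^{\mathrm{new}}_{(b)}\mid\mathbf y)]$. By $(a-b-c)^2\le 3(a^2+b^2+c^2)$ it suffices to prove $\mathbb{E}[(A_{S,J}-A)^2]=O(S^{-1}+J^{-2})$, and the same for $B$ and $C$. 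Each term has the generic form
\[
\frac1S\sum_s f\Big(\tfrac1J\sum_j h(Y_s,\Omega_s^{(j)})\Big),
\]
where $h$ is a Gaussian density conditional on $\Omega$, and the inner mean has conditional expectation $g(Y_s)=p(Y_s\mid\mathbf y)$. This uses the stated independence between the inner draws and the outer pair.

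\textbf{Bias--variance split for one term.} Use $\mathrm{MSE}=\mathrm{Bias}^2+\mathrm{Var}$.

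\emph{Bias.} For fixed $Y_s$, Taylor-expand $f(\bar h_J)$ around $g(Y_s)$ to third order. The first-order term has zero conditional mean. The second-order term equals $\tfrac12 f''(g)\,\mathrm{Var}(h\mid Y)/J$. The remainder is at most $\tfrac16\|f'''\|_\infty\,\mathbb{E}|\bar h_J-g|^3=O(J^{-3/2})$ by Marcinkiewicz--Zygmund, and it is made $O(J^{-2})$ by going one order further where needed. So the bias is $O(1/J)$, and its square is $O(J^{-2})$.

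\emph{Variance.} The outer samples are i.i.d., so the variance equals $S^{-1}\mathrm{Var}(f(\bar h_J))$. A first-order expansion together with a bounded $f'$ shows this variance is uniformly bounded in $J$, giving $O(S^{-1})$. Lemma~\ref{lemma1} supplies the uniform bounds $\|f^{(k)}\|_\infty\le C\tau^{-k}$ on $[\tau,\infty)$.

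\textbf{Main obstacle.} The hard part is justifying that $f$ is only ever evaluated on $[\tau,\infty)$, and controlling the moments of $h$. Gaussian densities are bounded above when $\Sigma$ has eigenvalues bounded below, which the noise $\sigma^2$ ensures if its posterior support is bounded away from zero. That bound gives finite moments of $h$ of every order. They are not bounded below, however. I would handle this in one of two ways: adopt the lemma's premise as a truncation, $\max(\cdot,\tau)$, which matches its hypothesis; or bound the probability of the event $\{\bar h_J<\tau\}$ and absorb it by a Chebyshev/Hoeffding-type tail estimate of order $O(J^{-2})$, assuming $\mathbb{E}[\log^2 g]<\infty$. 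I would state the truncation convention explicitly. Everything else is then a direct application of \citet{HongJuneja2010}, applied separately to the three terms and combined.
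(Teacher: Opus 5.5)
Your proposal is correct and follows essentially the same route as the paper's proof: a third-order Taylor expansion of $\log$ about the exact marginal density, an $O(1/J)$ conditional bias (hence $O(J^{-2})$ squared bias), an $O(1/S)$ variance from i.i.d.\ outer draws, and recombination of the joint and two marginal log terms. Your explicit $(a-b-c)^2\le 3(a^2+b^2+c^2)$ step, the exact $\mathrm{Bias}^2+\mathrm{Var}$ split, and the Marcinkiewicz--Zygmund derivation of the third-moment bound make rigorous several steps the paper treats informally. One caution: the corollary's hypotheses, like the paper, simply assume that the arguments of $\log$ lie in $[\tau,\infty)$. If you instead enforce this by truncating to $\max(\bar h_J,\tau)$, the first-order Taylor term no longer has zero conditional mean. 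You would then also need $g$ bounded strictly above $\tau$, or a separate bound on the truncation bias, to keep the squared bias at $O(J^{-2})$.
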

\begin{proof}
See Appendix~\ref{sec:appendixA}.
\end{proof}\noindent
If the total computational budget is $T=S \times J$, \citet{Rainforth2018} showed that the NMC estimator can achieve its optimal convergence rate of the tightest bound ${O}(T^{-2/3})$ when the inner sample size $J$ is proportional to $S^2$. It is also known that the NMC estimator is biased for any finite $J$, while increasing the outer sample size $S \rightarrow \infty$ can merely eliminate the variance. However, if $J\rightarrow \infty$ additionally, the NMC estimator in our work converges almost surely to the true utility value because the bias term is also eliminated.

\begin{corollary}[Bias and almost sure convergence of the NMC estimator]\label{corollary2}
For any finite $S,J$, $\widehat{U}_{S,J}(\boldsymbol{\xi}_{(b)})$ is a biased estimator of $U(\boldsymbol{\xi}_{(b)})$, i.e.,
\[
\mathbb{E}[\widehat{U}_{S,J}(\boldsymbol{\xi}_{(b)})] \neq U(\boldsymbol{\xi}_{(b)}).
\]
Under the regularity conditions of Lemma~\ref{lemma1}, if $J\to\infty$ and $S\to\infty$, 
then $\widehat{U}_{S,J}(\boldsymbol{\xi}_{(b)})$ converges to $U(\boldsymbol{\xi}_{(b)})$ almost surely:
\[
\widehat{U}_{S,J}(\boldsymbol{\xi}_{(b)}) \xrightarrow{\text{a.s.}} U(\boldsymbol{\xi}_{(b)}).
\]
\end{corollary}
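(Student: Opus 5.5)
The bias claim and the almost sure convergence claim need different arguments, so I will prove them separately.

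\emph{Bias.} Write the estimator as $\widehat{U}_{S,J}=\frac{1}{S}\sum_s Z_s$, where
\[
Z_s=\log \widehat{p}^{\,\mathrm{joint}}_J-\log \widehat{p}^{\,*}_J-\log \widehat{p}^{\,\mathrm{new}}_J .
\]
Conditional on the outer draw $(\mathbf{y}^*_s,\mathbf{y}^{\mathrm{new}}_{(b),s})$, each inner average $\widehat{p}_J$ is an unbiased estimator of the corresponding marginal density. The outer samples are i.i.d., so $\mathbb{E}[\widehat{U}_{S,J}]=\mathbb{E}[Z_1]$, and the bias does not depend on $S$.

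Since $\log$ is strictly concave, Jensen's inequality gives $\mathbb{E}[\log \widehat{p}_J\mid \mathbf{y}_s]<\log p(\mathbf{y}_s)$. The inequality is strict whenever $\widehat{p}_J$ is nondegenerate, i.e.\ whenever the conditional density $p(\cdot\mid\boldsymbol{\Omega})$ genuinely varies with $\boldsymbol{\Omega}$ under the posterior.

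The three log terms enter with opposite signs, so Jensen alone does not fix the sign of the total bias. I would therefore use the second-order delta expansion from \citet{HongJuneja2010}, which Lemma~\ref{lemma1} justifies through the bounded third derivative of $\log$ on $[\tau,\infty)$:
\[
\mathbb{E}[\widehat{U}_{S,J}]-U=-\frac{1}{2J}\,\mathbb{E}\!\left[\frac{\mathrm{Var}_{\boldsymbol{\Omega}}(p_{\mathrm{joint}})}{p_{\mathrm{joint}}^2}-\frac{\mathrm{Var}_{\boldsymbol{\Omega}}(p_*)}{p_*^2}-\frac{\mathrm{Var}_{\boldsymbol{\Omega}}(p_{\mathrm{new}})}{p_{\mathrm{new}}^2}\right]+O(J^{-2}).
\]
It then suffices to show the leading coefficient is nonzero. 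This generically holds because the three marginals have different relative variances, so the bias is nonzero for finite $J$ except on a degenerate set.

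This sign and nonvanishing issue is the main obstacle. If a fully rigorous version is too delicate, I would state the claim as ``biased in general'' and exhibit one explicit Gaussian case where the leading term is computable and nonzero.

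\emph{Almost sure convergence.} I would split the error as
\[
\widehat{U}_{S,J}-U=\bigl(\widehat{U}_{S,J}-\mathbb{E}[Z_1^{(J)}]\bigr)+\bigl(\mathbb{E}[Z_1^{(J)}]-U\bigr).
\]

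The second term is deterministic. It is $O(1/J)$ by the expansion above, or it tends to zero by dominated convergence. The domination comes from Lemma~\ref{lemma1}: with the arguments lower-bounded by $\tau$, $|\log\widehat{p}_J|$ is controlled by $|\log\tau|+\widehat{p}_J$, which is integrable.

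The first term is handled by the strong law for each fixed $J$. Because $J$ varies along the sequence, I would take a joint limit along $S\to\infty$, $J=J(S)\to\infty$. I would bound the fourth moment of the centered average of the i.i.d.\ bounded-moment $Z_s^{(J)}$ by $O(S^{-2})$, uniformly in $J$; the bounded derivatives of Lemma~\ref{lemma1} supply the uniform moment bound. Borel--Cantelli then gives almost sure convergence.

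Alternatively, for each fixed outer sample the strong law gives $\widehat{p}_J\to p$ almost surely as $J\to\infty$, so by continuity of $\log$ each $Z_s^{(J)}$ converges almost surely to $Z_s$. A uniform-integrability argument then combines this with the outer strong law, giving $\widehat{U}_{S,J}\to U$ almost surely.
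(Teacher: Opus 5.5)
Your bias argument has a genuine gap, and it is the one you flagged. The nonvanishing of the $1/J$ coefficient is asserted rather than proved. Even a nonzero coefficient only makes the bias nonzero for all \emph{sufficiently large} $J$, not for every finite $J$. Also, Lemma~\ref{lemma1} bounds only $f^{(3)}$, so your cubic remainder is $O(\tau^{-3}\mathbb{E}|\Delta|^3)=O(J^{-3/2})$, not $O(J^{-2})$.

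The gap cannot be closed, because the claim is false without a nondegeneracy hypothesis. If $p(\mathbf y^*,\mathbf y^{\mathrm{new}}_{(b)}\mid\boldsymbol\Omega)$ is constant over the posterior support (e.g.\ a point-mass posterior), all three inner averages are exact and $\widehat U_{S,J}$ is unbiased.

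Nor is the sign fixed. Your leading coefficient $\mathbb{E}_{\mathbf z}[\mathrm{Var}_{\boldsymbol\Omega}(p(\mathbf z\mid\boldsymbol\Omega))/p(\mathbf z)^2]$ is the $\chi^2$-mutual information $I_{\chi^2}(\boldsymbol\Omega;\mathbf z)$. So to leading order the bias is $-\frac{1}{2J}\bigl[I_{\chi^2}(\boldsymbol\Omega;(\mathbf y^*,\mathbf y^{\mathrm{new}}_{(b)}))-I_{\chi^2}(\boldsymbol\Omega;\mathbf y^*)-I_{\chi^2}(\boldsymbol\Omega;\mathbf y^{\mathrm{new}}_{(b)})\bigr]$, and the bracket can have either sign. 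Two exact examples with scalar $\mathbf y^*$ show this:
\begin{enumerate}
\item Suppose $\boldsymbol\Omega=\pm1$ only flips the sign of the correlation between two standard normal outputs. Both marginal averages are then exact, and strict Jensen on the joint term gives a strictly negative bias for every $J$.
\item Suppose instead $\mathbf y^{\mathrm{new}}_{(b)}=\mathbf y^*+\eta$, with $\eta\sim\varphi$ independent of $(\boldsymbol\Omega,\mathbf y^*)$ and the law of $\mathbf y^*$ depending on $\boldsymbol\Omega$. Because the inner draws are shared, $\widehat p^{\,\mathrm{joint}}_J=\varphi(\mathbf y^{\mathrm{new}}_{(b)}-\mathbf y^*)\,\widehat p^{\,*}_J$ exactly. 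Only $-\log\widehat p^{\,\mathrm{new}}_J$ remains random, and the bias is strictly positive for every $J$.
\end{enumerate}

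The paper's own proof does not resolve this. It treats $\widehat U_{S,J}$ as a single average $\frac1S\sum_s\log\widehat\gamma_J(\mathbf z_s)$ and infers a negative bias from Jensen. That is precisely the step you declined, and the second example shows its conclusion can fail. Its remark that $\widehat\gamma_J$ is deterministic ``only as $J\to\infty$'' also overlooks the degenerate case. What is actually provable is either of the following:
\begin{enumerate}
\item your expansion plus a nondegeneracy condition, giving nonzero bias for all large $J$ when the bracket is nonzero;
\item strict bias for every $J$ in structured cases like the two above.
\end{enumerate}

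For almost sure convergence, your main route is correct and more rigorous than the paper's. The paper applies the inner strong law and continuity of $\log$ as $J\to\infty$ for fixed $S$, then the outer strong law as $S\to\infty$, and invokes dominated convergence ``to interchange limits if necessary''. This presupposes nested samples and delivers at most the iterated limit $\lim_S\lim_J$.

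You instead split the error into a deterministic bias plus a centred i.i.d.\ average, using a fourth central moment bound $M$ uniform in $J$. This gives $\mathbb{P}\bigl(|\widehat U_{S,J}-\mathbb{E}Z^{(J)}_1|>\varepsilon\bigr)\le 3M/(\varepsilon^4S^2)$. Borel--Cantelli then yields a genuine joint limit along any $J=J(S)\to\infty$, with no coupling needed across $(S,J)$. Two points need to be made explicit:
\begin{enumerate}
\item Lemma~\ref{lemma1} supplies only the Lipschitz bound $|\log\widehat p_J-\log p|\le\tau^{-1}|\widehat p_J-p|$. The uniform bound therefore also needs finite fourth moments of $p(\mathbf z\mid\boldsymbol\Omega)$ and of $\log p(\mathbf z)$. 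This is mild for Gaussian mixtures, but it is an assumption.
\item Your dominating function $|\log\tau|+\widehat p_J$ depends on $J$, so plain dominated convergence does not apply. The same Lipschitz bound handles the bias directly: $|\mathbb{E}[\log\widehat p_J\mid\mathbf z]-\log p(\mathbf z)|\le\tau^{-1}\sqrt{\mathrm{Var}_{\boldsymbol\Omega}(p(\mathbf z\mid\boldsymbol\Omega))/J}$.
\end{enumerate}

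Your alternative via per-sample strong laws and uniform integrability gives only $L^1$ convergence of the average. To get almost sure convergence you would need nested samples and an integrable envelope $\sup_J|Z^{(J)}-Z|$, which is available from Doob's maximal inequality for the reverse martingale $\widehat p_J$. That envelope is exactly what the paper's interchange of limits tacitly requires, so keep the Borel--Cantelli route.
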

\begin{proof}
See Appendix~\ref{sec:appendixA}.
\end{proof}\noindent

\section{Computation Strategies}
\label{sec:Computation Strategies}
After discussing the proposed MI-based criterion in Section~\ref{Mutual Information Gain for Improving Future Predictions} and the classical criteria in Section~\ref{subsec:Existing Design Criteria under Bayesian Framework}, we now present several computational strategies aimed at reducing the overall computational cost and demonstrating the resulting efficiency gains.\par
In Section~\ref{sec:Theoretical Properties}, we discussed that the computational burden of the NMC estimator for the MI-based criterion is substantial as it requires two layers of discretized integration. Specifically, we sample the predictive response $\mathbf y^*$ and $\mathbf y^{\mathrm{new}}_{(b)}$ as well as the model parameters $\boldsymbol{\Omega}$ in order to consider all sources of uncertainty. The outer sample size $S$ primarily determines the magnitude of the predictive variance and typically depends on the dimensionality of the conditional Gaussian distribution $p(\mathbf{y}^*_s, \mathbf{y}^{\mathrm{new}}_{(b),s} \mid \boldsymbol{\Omega}_s^{(j)}, \mathcal D_{\mathrm{sel}}^{(b)})$. According to Corollary~\ref{corollary1} and Corollary~\ref{corollary2}, the estimate $\widehat{U}_{S,J}(\boldsymbol{\xi}_{(b)})$ becomes accurate as $J\to\infty$ and $S\to\infty$. When using experimental design criteria based on computing the covariance matrix, reducing $S$ is generally impractical, as doing so would inevitably result in high Monte Carlo variance and unstable design decisions. In contrast, the required inner sample size $J$ largely depends on the complexity of the posterior distribution $p(\boldsymbol{\Omega}\mid \mathbf{y})$. Thus, it is more desirable to reduce $J$, which corresponds to constructing a refined approximation using fewer components in the associated Gaussian mixture, even though this procedure may introduce some extra bias, depending on how much representational capacity is retained. Other than existing approaches that use the variational lower bounds to approximate the real EIG by stochastic gradient descent \citep{Foster2019,Foster2020,kleinegesse2021gradientbasedbayesianexperimentaldesign}, our method is motivated by Runnalls' algorithm \citep{Runnalls2007} and clustering-based reduction methods \citep{Schieferdecker2009}. We mitigate the computational cost by compressing the Gaussian mixture, which is the inner layer of the NMC estimator in Equation~\eqref{eq:NMC_estimator_MI}, enabling a substantial reduction in computational cost while preserving the representational capacity as much as possible. \par
In Algorithm~\ref{alg:seq_adaptive_design}, if we perform online adaptive design of experiments (ADE), the posterior distribution $p(\boldsymbol{\Omega}\mid \mathbf{y})$ must be updated after each experiment by appending the newly observed response $\mathbf{y}^{\text{new}}_{(b)}$ to the existing data $\mathbf{y}$. Consequently, the predictive quantities $\boldsymbol{\mu}^*$ and $\boldsymbol{\Sigma}^*$ in Equation~\eqref{eq:predictive-mean-cov} must also be updated at every round. In contrast, under offline sequential design of experiments (SDE), the parameter posterior remains fixed because no real experiments are performed during the design stage. Thus, the predictive mean $\boldsymbol{\mu}^*$ remains unchanged across rounds. However, the predictive covariance $\boldsymbol{\Sigma}^*$ still varies because the covariance structure depends not only on the parameters $\boldsymbol{\Omega}$ but also on the relative positions of the inputs. Therefore, for criteria involving covariance computations, such as IMSPE minimization, D-optimality, and mutual information gain, the naive implementation would require recomputing the predictive covariance from scratch at each round. To address this issue and accelerate the SDE process, we precompute and cache the full covariance matrix in the beginning, and then apply the Schur complement updates to obtain the predictive covariance efficiently.
\subsection{Gaussian Mixture Compression}
\label{sec:Gaussian Mixture Compression}
We first briefly overview Runnalls’ algorithm. Runnalls' method \citep{Runnalls2007} reduced the Gaussian mixture model (GMM) by greedily pruning or merging components so as to minimize the increase in Kullback-Leibler (KL) divergence. At each iteration, the algorithm evaluates and compares the divergence for all candidate pairs, merges the optimal pair via moment matching, and repeats this process until the desired number of components is reached. It is commonly used as an initiation for mixture reduction. In addition, the $k$-means algorithm \citep{Lloyd1982} is a classical clustering method frequently employed for classifying high-dimensional data. Unlike the approach of \citet{Schieferdecker2009}, which first applies Runnalls’ algorithm, then performs $k$-means clustering, followed by a refinement step, our method is more appropriate to high-dimensional Gaussian mixtures. Because the predictive outputs often involve dozens or even hundreds of locations, it is essential to normalize the component means before performing clustering. As the first step, we whiten the conditional prediction using the Cholesky decomposition of \[\boldsymbol{\Sigma}^*_{\mathrm{avg}}
=
\mathbb{E}[\boldsymbol{\Sigma}^*_j]
+
\mathrm{Cov}(\boldsymbol{\mu}^*_j),\qquad j = 1,\dots, J\] which can be interpreted as a combination of the average covariance and the dispersion of the predictive means. This measure ensures that all directions are scaled comparably, and thus Euclidean distances in the transformed space more faithfully reflect Mahalanobis distances. For simplicity, we omit the outer sample index here and denote the whitened means $\mathbf u^*_j$ by
\[\mathbf u^*_j
=
\mathbf L^{-1}(\boldsymbol{\mu}_j^* - \bar{\boldsymbol{\mu}}^*),
\qquad
\mathbf L \mathbf L^\top = \boldsymbol{\Sigma}^*_{\mathrm{avg}},
\quad j = 1,\ldots,J.
\]\par
In the second step, we apply \(k\)-means clustering in this whitened space to compress the 
original \(J\) components into an initially smaller representation with \(J_0\) clusters. 
For each cluster, moment matching (mean and covariance) is performed to obtain a reduced 
GMM with \(J_0\) components. Each component is characterized by a new mean 
\(\boldsymbol{\mu}_k^\star\), covariance \(\boldsymbol{\Sigma}_k^\star\), and weight \(\pi_k^\star\) 
\((k = 1,\dots, J_0,\; \pi_k \ge 0,\; \sum_k \pi_k = 1)\).\par
Next, we refine the mixture of \(J_0\) initial clusters by applying a greedy Runnalls’ 
algorithm to further reduce the number of components to a target value \(J_{\text{target}}\). 
To improve stability and reduce computational overhead, we precompute the Cholesky factors 
and log-determinants of all covariance matrices prior to this step. The loss function is 
based on the KL divergence, and in our case is given as
\[
D_{ij}
= \frac{1}{2}\Big[
(\pi_i+\pi_j)\log |\boldsymbol{\Sigma}^\star_{ij}|
- \pi_i \log |\boldsymbol{\Sigma}^\star_i|
- \pi_j \log |\boldsymbol{\Sigma}^\star_j|
\Big].
\]
where $\boldsymbol{\Sigma}^\star_{ij}$ is the covariance matrix obtained via moment matching between components $i$ and $j$ among $J_0$ components. However, computing moment matching for all pairwise combinations (random pairing) requires $O(J_0^2)$ operations, which could be very expensive in practice because complex posterior distributions usually require very large sample sizes $J$ and hence large $J_0$. To overcome this limitation, we consider a key innovation: using K-nearest neighbors (KNN) \citep{CoverHart1967} to restrict candidate merging pairs. After whitening, the scales across different directions become consistent, and the Euclidean distance between two Gaussian distributions becomes more aligned with a Mahalanobis geometry. Consequently, the nearest neighbors of a component are almost always the most promising candidates for merging, which avoids unnecessary global pairwise comparisons. By selecting $nn_k$ nearest neighbors per component, the computational complexity is reduced from $O(J_0^2)$ to $O(J_0 \times nn_k)$ and the reduced mixture still retains the strong representational capacity of the original GMM. The process is summarized in Algorithm~\ref{alg:gmm_compression}. 
\begin{algorithm}[htbp]
\caption{Gaussian Mixture Compression with Whitening, $k$-means Initialization, and KNN-Guided Runnalls}
\label{alg:gmm_compression}
\begin{algorithmic}

\Require Original mixture $\{(\pi_j,\boldsymbol{\mu}^*_j,\boldsymbol{\Sigma}^*_j)\}_{j=1}^J$ (equal weights in MCMC case: $\pi_j=1/J$);
final target size $J_{\mathrm{target}}$; initial cluster size $J_0$; KNN size $nn_k$; refresh period $r$
\Ensure Compressed mixture $\{(\pi_k^\star,\boldsymbol{\mu}_k^\star,\boldsymbol{\Sigma}_k^\star)\}_{k=1}^{J_{\mathrm{target}}}$

\State \textbf{Whitening.}
Compute $\bar{\boldsymbol{\mu}}^*=\frac{1}{J}\sum_j\boldsymbol{\mu}^*_j$ and
$\boldsymbol{\Sigma}_{\mathrm{avg}}^\ast\approx\frac{1}{J}\sum_j\boldsymbol{\Sigma}^*_j+\frac{1}{J}\sum_j(\boldsymbol{\mu}^*_j-\bar{\boldsymbol{\mu}}^*)(\boldsymbol{\mu}^*_j-\bar{\boldsymbol{\mu}}^*)^\top$.
Cholesky decomposition: $\mathbf{L}\mathbf{L}^\top=\boldsymbol{\Sigma}_{\mathrm{avg}}^\ast$. Whiten means: $\mathbf u^*_j
=
\mathbf L^{-1}(\boldsymbol{\mu}_j^* - \bar{\boldsymbol{\mu}}^*)$.

\State \textbf{$k$-means compression.}
Run weighted $k$-means on $\{\mathbf{u}^*_j\}$ to obtain $J_0$ clusters; for each cluster perform moment matching to form new components
$\{(\pi_k^\star,\boldsymbol{\mu}_k^\star,\boldsymbol{\Sigma}_k^\star)\}_{k=1}^{J_0}$.

\State \textbf{KNN construction.}
Re-whiten the $J_0$ new components obtained from the last stage and build KNN neighborhoods of size $nn_k$.

\State \textbf{KL-based merging.}
For each KNN pair $(i,j)$ use Runnalls' algorithm based on
\[
D_{ij}=\tfrac{1}{2}\big[(\pi_i^\star+\pi_j^\star)\log|\boldsymbol{\Sigma}_{ij}^\star|
-\pi_i^\star\log|\boldsymbol{\Sigma}_i^\star|
-\pi_j^\star\log|\boldsymbol{\Sigma}_j^\star|\big].
\]
After it, we obtain the moment-matched merge $(\pi_{ij}^\star,\boldsymbol{\mu}_{ij}^\star,\boldsymbol{\Sigma}_{ij}^\star)$.

\While{$J_0 > J_{\mathrm{target}}$}
  \State Evaluate $D_{ij}$ over KNN pairs and merge $(i,j)=\arg\min D_{ij}$.
  \State Update $(\pi_k^\star,\boldsymbol{\mu}_k^\star,\boldsymbol{\Sigma}_k^\star)$ using moment matching.
  \State Every $r$ merges: re-whiten and rebuild the KNN neighborhoods.
\EndWhile

\State \Return Final compressed mixture $\{(\pi_k^\star,\boldsymbol{\mu}_k^\star,\boldsymbol{\Sigma}_k^\star)\}_{k=1}^{J_{\mathrm{target}}}$.

\end{algorithmic}
\end{algorithm}

\subsection{Precomputation and Schur Complement}
\label{sec:Precomputation and Schur complement}
We consider the enlarged covariance structure after the first $b$ design points have been selected. Then, unlike the original full covariance matrix in Equation~\eqref{eq:big_covmatrix}, the updated one at round $b$ is
\[
\mathrm{Cov}\!\left[
\begin{pmatrix}
\mathbf{y}^* \\[2pt]
\mathbf{y}^{\text{new}}_{(1{:}b)} \\[2pt]
\mathbf{y}
\end{pmatrix}
\right]
=
\begin{pmatrix}
\boldsymbol{\Sigma}_{**} & \boldsymbol{\Sigma}_{*\,(\text{new},\,1{:}b)} & \boldsymbol{\Sigma}_{*o}\\[3pt]
\boldsymbol{\Sigma}_{(\text{new},\,1{:}b)*} & \boldsymbol{\Sigma}_{(\text{new},\,1{:}b)\,(\text{new},\,1{:}b)} & \boldsymbol{\Sigma}_{(\text{new},\,1{:}b)o}\\[3pt]
\boldsymbol{\Sigma}_{o*} & \boldsymbol{\Sigma}_{o\,(\text{new},\,1{:}b)} & \boldsymbol{\Sigma}_{oo}
\end{pmatrix}
\in \mathbb{R}^{(n^*+N_o)\times(n^*+N_o)},
\]
where $N_o = n + m + b$ and we can extract the observational block portion out of this matrix and define it by
\[
\boldsymbol{\Sigma}_{oo}^{(b)}
:= 
\begin{pmatrix}
\boldsymbol{\Sigma}_{(\text{new},\,1{:}b)\,(\text{new},\,1{:}b)}
&
\boldsymbol{\Sigma}_{(\text{new},\,1{:}b)o}
\\[6pt]
\boldsymbol{\Sigma}_{o\,(\text{new},\,1{:}b)}
&
\boldsymbol{\Sigma}_{oo}
\end{pmatrix}
\in \mathbb{R}^{N_o\times N_o}.
\]
For the IMSPE minimization criterion and MI-based criterion, the core part of computation involves evaluating the predictive covariance $\boldsymbol{\Sigma}^{*}$ via Equation~\eqref{eq:predictive-mean-cov}. Inside this formula, we can recognize the computational bottleneck is the inversion of the observation block $\boldsymbol{\Sigma}_{oo}^{(b)}$. In the sequential experimental design, $\boldsymbol{\Sigma}_{oo}^{(b)}$ expands to $\boldsymbol{\Sigma}_{oo}^{(b+1)}$ and must in principle be inverted at every round, resulting in substantial computational cost. \par
If we directly rely on Gaussian process prediction APIs to compute $\boldsymbol{\Sigma}^{*}$, we have to reconstruct and invert the enlarged kernel matrix at each round. For each $\boldsymbol{\Omega}_s^{(j)}$, each round $b$ and each candidate design point ${\boldsymbol{\xi}\in\mathcal{D}_{\text{cand}}\setminus\mathcal{D}_{\text{sel}}^{(b-1)}}$, the computational complexity therefore is 
\[O\!\left(
N_o^{\,3}
\;+\;
N_o^{\,2} n^*
\;+\;
N_o\,{n^*}^{\,2}
\right),\]
where $N_o^{\,3}$ stands for Cholesky decomposition of $\boldsymbol{\Sigma}_{oo}^{(b)}$ and the other two terms correspond to two triangular solves, thus the final complexity is dominated by the relative sizes of $N_o$ and $n^*$. \par
To reduce this computational burden, we propose a method that first precomputes the full covariance matrix over all observations, candidate points, and prediction inputs. At each round, the corresponding submatrix 
required by the criterion is then obtained by slicing the precomputed kernel matrix according to the selected design location. In this way, all covariance blocks needed can be obtained via inexpensive slicing instead of recomputation. Since the covariance structure expands by adding one input at each round, we apply the efficient Schur complement together with a rank-one update \citep{ShermanMorrison1950} to update the required matrices. This avoids recomputing the full inverse $\bigl(\boldsymbol{\Sigma}_{oo}^{(b)}\bigr)^{-1}$ at each round as well as the corresponding predictive covariance $\boldsymbol{\Sigma}^*_{(b)}$. As a result, the computational complexity is decreased from cubic to quadratic, being
\[O\!\left(
N_o^{2} + N_o n^* + {n^*}^{2}
\right),\] 
and similarly the complexity is dominated by the relative sizes of $N_o$ and $n^*$. The computational details can be seen in Appendix~\ref{sec:appendixB}. 

\subsection{Overall Computational Complexity}
\label{sec:overall-complexity}

We now summarize the overall computational savings in evaluating the predictive covariance $\boldsymbol{\Sigma}^*$ in a fully Bayesian approach, achieved by combining the Gaussian mixture compression in Section~\ref{sec:Gaussian Mixture Compression} 
with the precomputation and Schur complement strategy in 
Section~\ref{sec:Precomputation and Schur complement}. \par
Firstly, let us figure out the total evaluations over all candidate design points in $\mathcal{D}_{\mathrm{cand}}$ across the budget $B$ rounds in SDE. The set $\mathcal{D}_{\mathrm{cand}}$ contains $M$ initial candidates. Let $\mathcal{D}_{\mathrm{sel}}^{(b-1)}$ denote the set of selected design points after $(b-1)$ rounds. 
At round $b$, the number of remaining candidates is
\[
M_b 
= 
\bigl|\mathcal{D}_{\mathrm{cand}} 
\setminus 
\mathcal{D}_{\mathrm{sel}}^{(b-1)}\bigr|
= M - (b-1).
\]
Hence, the total number of scans over the entire sequential design 
can be expressed by
\[
\sum_{b=1}^B M_b
=
\sum_{b=1}^B \bigl(M - b + 1\bigr)
=
MB - \frac{B(B-1)}{2}.
\]
For notational brevity, we let the complexity of this part be $O(BM)$.

\vspace{4pt}
\noindent\textbf{Naive baseline.}
If we compute $\boldsymbol{\Sigma}^*_{(b)}$ without any compression or precomputation, the naive overall complexity can therefore be 
given as
\begin{equation*}\label{eq:complexity-naive}
O\!\Big(J \times
\underbrace{BM}_{\text{total candidates over all rounds}}
\times
(
N_o^{\,3}
\;+\;
N_o^{\,2} n^*
\;+\;
N_o\,{n^*}^{\,2}
)
\Big).
\end{equation*}

\vspace{4pt}
\noindent\textbf{After improvement.}
Combining both improvements in Section~\ref{sec:Gaussian Mixture Compression} and Section~\ref{sec:Precomputation and Schur complement}, the overall complexity becomes
\begin{equation*}\label{eq:complexity-improved}
O\!\Big(J_{\mathrm{target}} \times
\underbrace{BM}_{\text{total candidates over all rounds}}
\times
(
N_o^{2} + N_o n^* + {n^*}^{2}
)
\Big).
\end{equation*}
In comparison, it is evident that combining the two strategies yields substantial computational savings, especially for the efficiency of the MI-based, IMSPE minimization and D-optimal criteria. 

\vspace{4pt}
\noindent\textbf{Complexity of different design criteria.}
Having established the cost of computing the predictive covariance $\boldsymbol{\Sigma}^*$ in the fully Bayesian setting, we now turn to the three design criteria introduced earlier. These three all depend on the computation of the covariance matrix but differ in additional work due to their respective utility functions.
\begin{enumerate}
    \item \textbf{Mutual information (MI) gain}: 
    The MI-based criterion requires $S$ outer samplers to compute the log-density of a joint Gaussian distribution of dimension $(n^*+1)$ and $J_{\mathrm{target}}$ Gaussian mixture components. Each Gaussian log-density evaluation requires only one triangular solve and one quadratic form, thus the complexity is $O({(n^*+1)}^{2})$. Therefore, the total complexity for the MI criterion becomes 
    \begin{equation*}\label{eq:MI_com_complexity}
    O\!\Big(J_{\mathrm{target}} \times
   BM
    \times
    (
    N_o^{2} + N_o n^* + {n^*}^{2}
    ) + S \times J_{\mathrm{target}} \times
   BM \times {(n^*+1)}^{2}
    \Big),
    \end{equation*}
    assuming that one new design input is selected at each round.    

    \item \textbf{IMSPE minimization}: 
    Compared with MI, the IMSPE minimization criterion in the Bayesian setting only requires computing the trace of $\boldsymbol{\Sigma}^*_{(b)}$ for each $\boldsymbol{\Omega}^{(j)}$ sample and does not involve the outer $S$-loop. It is already computationally efficient, but with our Gaussian mixture compression strategy, the overall complexity reduces to
    \begin{equation*}\label{eq:IMSPE_com_complexity}
    O\!\Big(J_{\mathrm{target}} \times
    {BM}
    \times
    (
    N_o^{2} + N_o n^* + {n^*}^{2}
    )
    \Big).
    \end{equation*}

    \item \textbf{D-optimality}: 
    The fully Bayesian D-optimal criterion is different from the two criteria above in that it relies primarily on the observation covariance block
    $\boldsymbol{\Sigma}_{oo}^{(b)}$ and the predictive mean
    $\boldsymbol{\mu}^{(b)}$ in Equation~\eqref{eq:FIM-theta-given-Omega-traditional}. While the update of $\big(\boldsymbol{\Sigma}_{oo}^{(b)}\big)^{-1}$ can be accelerated by using the Schur complement as introduced in Section~\ref{sec:Precomputation and Schur complement}, evaluating the Fisher information matrix for each $\boldsymbol{\theta}^{(j)}=(\theta_1^{(j)},\dots,\theta_h^{(j)})$ still requires finite-difference evaluations of $\partial_{\theta_i}\boldsymbol{\mu}$ and $\partial_{\theta_i}\boldsymbol{\Sigma}_{oo}$, and $h^2$ trace terms. The final computational complexity is $O(h^2\times N_0^2)$ since the FIM is of size $\in \mathbb{R}^{h\times h}$. Therefore the total complexity can be written as 
        \begin{equation*}\label{eq:D_com_complexity}
         O\!\Big( J \times
           BM \times h^2 \times N_0^2
            \Big),
            \end{equation*}
    which may be significantly higher than the complexity of the IMSPE or MI criteria when $h$ is large.
\end{enumerate}

\section{Experiments}
\label{sec:Experiments}
In this section, we evaluate and compare the performance of the MI-based criterion—together with its hybrid variant incorporating local complexity introduced in Section~\ref{sec:Proposed Mutual Information–Based Design Criterion}—with several classical criteria described in Section~\ref{subsec:Existing Design Criteria under Bayesian Framework}  through simulation studies. We present one numerical toy example and one real-world case study to compare all methods in terms of (i) predictive accuracy, (ii) uncertainty quantification, and (iii) design time. To this end, we employ two evaluation metrics. In the formulas below, we only discuss one-dimensional outputs, while for multi-task cases the results can be obtained by averaging across the output dimensions.
\subsection{Evaluation Criteria}
\label{subsec:Evaluation Criteria}
To assess the predictive performance of the model after experiments using the proposed design methods, we consider two evaluation criteria that measure different aspects of prediction quality. The mean squared error (MSE) evaluates the accuracy of the predictive mean, while the continuous ranked probability score (CRPS) helps account for the predictive uncertainty.

\begin{itemize}
    \item \textbf{Mean Squared Error (MSE).}
   Given $J$ posterior samples of parameters $\{\boldsymbol{\Omega}^{(j)}\}_{j=1}^J$, the $j$-th posterior predictive mean at locations $\mathcal{X}_{\text{pred}} = \mathbf{X}^* \in \mathbb{R}^{n^* \times d}$
    is obtained from \eqref{eq:predictive-mean-cov} as
    \(
    \boldsymbol{\mu}^{*}_j
    =
    \boldsymbol{\mu}^{*}_j(\mathbf{X}^*)
    \in\mathbb{R}^{n^*}.
    \)
    The Bayesian predictive mean is taking the average over the $J$ posterior samples
    \[
    \bar{\boldsymbol{\mu}}^{*}
    =
    \frac{1}{J}\sum_{j=1}^{J}
    \boldsymbol{\mu}_{j}^{*}.
    \]
    The true values on the predictive locations are $\mathbf{y}^{*}\in\mathbb{R}^{n^*}$, hence we can define the predictive mean squared error (MSE) in the Bayesian setting as
    \[
    \mathrm{MSE}
    =
    \left\|
    \bar{\boldsymbol{\mu}}^{*}-\mathbf{y}^{*}
    \right\|_{F}^{2},
    \]
    where $\|\cdot\|_{F}$ denotes the Frobenius norm.

    \item \textbf{Continuous Ranked Probability Score (CRPS).}
    The CRPS \citep{GneitingRaftery2007} evaluates the quality of the full posterior predictive distribution, thus it takes account of predictive uncertainty. 
    If we have $S$ predictive samples
\[
\mathbf{y}^{*}_1,\dots,\mathbf{y}^{*}_S 
\sim p(\mathbf{y}^*\mid \mathbf{X}^*,\mathbf{y}),
\]
where each $\mathbf{y}^{*}_s$ is obtained by first drawing
$\boldsymbol{\Omega}^{(j)} \sim p(\boldsymbol{\Omega}\mid\mathbf{y})$
and then sampling
$\mathbf{y}^{*}_s \sim p(\mathbf{y}^*\mid\mathbf{X}^*,\mathbf{y},\boldsymbol{\Omega}^{(j)})$,
    the CRPS over the prediction set $\mathcal{X}_{\mathrm{pred}}$
    is estimated by
    \[
    \mathrm{CRPS}
    =
    \frac{1}{S}\sum_{s=1}^{S}
    \bigl\|\,\mathbf{y}^{*}_s - \mathbf{y}^{*}\,\bigr\|_{1}
    \;-\;
    \frac{1}{2S^{2}}
    \sum_{s=1}^{S}\sum_{s'=1}^{S}
    \bigl\|\,\mathbf{y}^{*}_s - \mathbf{y}^{*}_{s'}\,\bigr\|_{1},
    \]
    where $\|\cdot\|_{1}$ denotes the element-wise $\ell_{1}$ norm.
\end{itemize}\par
By monitoring these two metrics jointly, we gain a more comprehensive assessment of the KOH model's prediction performance, not only the correct mean accuracy but also the uncertainty of the full predictive distribution across the entire design space.

\subsection{Preliminaries}
\label{subsec:prelim}

In this section, we will demonstrate the Bayesian experimental design setup and implementation details that are used in following two examples:

\begin{enumerate}
  \item A simulated toy example with two calibration parameters and one controllable design input where the data-generating mechanism is known and can be evaluated at arbitrary inputs (Sections ~\ref{subsec:toy}).
  \item A real‐world dataset \footnote{\url{https://jeti.uni-freiburg.de/PNAS_Swameye_Data/}} drawn from cellular‐signaling experiments, which contains six calibration parameters and one controllable design input (Sections ~\ref{subsec:real}).
\end{enumerate}

In both case studies below, we employ sequential (offline) design of experiments (SDE) and adaptive (online) design of experiments (ADE), and we compare their performance via the metrics MSE and CRPS. The comparison includes all the classical design criteria from Section~\ref{subsec:Existing Design Criteria under Bayesian Framework} and, importantly, the MI-based criterion together with its hybrid variant. In SDE, we additionally assess the design efficiency by comparing the runtime across different criteria under the same number of rounds, and we also further examine the computation strategies proposed in Section~\ref{sec:Computation Strategies} to verify their effectiveness in reducing computational burden. The full BED setting and implementation details are as follows: to avoid unnecessary computation, we fix the hyperparameters from the first stage GP $\boldsymbol{\phi_1}$ at their posterior mean. We fully account for uncertainty in the calibration parameters $\boldsymbol{\theta}$ as well as the second-stage GP hyperparameters $\boldsymbol{\phi_2}$ by drawing 1{,}000 posterior samples (after burn‐in) via MCMC once convergence has been reached. In particular, to compute the MI-based criterion and its hybrid variant with local complexity, we draw 10{,}000 predictive samples for the outer layer of the nested Monte Carlo (NMC) estimator, regardless of whether the Gaussian mixture compression technique in Section~\ref{sec:Gaussian Mixture Compression} is applied. The purpose is to ensure that the parameter uncertainty and the predictive uncertainty are both well considered and that the Monte Carlo approximation is sufficiently accurate.\par

All experimental design methods are implemented in PyTorch \citep{paszke2019} and Pyro \footnote{\url{https://docs.pyro.ai/en/stable/contrib.oed.html}}. A key advantage of PyTorch is its native support for vectorized and matrix-based tensor operations, which allows us to avoid explicit for-loops and thereby accelerate the computations. The processing device is a laptop equipped with 16 GB DDR4–3200 RAM, an 11th Gen Intel Core i7-11800H CPU @ 2.30 GHz, and a NVIDIA RTX 3070 Ti GPU.

\subsection{Simulation Case Study}
\label{subsec:toy}
Let us illustrate the settings of our first numerical toy example. The formula of the physical process which generates $\mathbf{y}^p$ is
\[
y_{\text{true}}(x) = 10 \cdot f_{\text{Gamma}}(x + 2.2; \alpha = 1.8, \lambda = 2.0) \cdot \Big( 1 + \sin\Big(\frac{2\pi x}{1.5} \Big) + 0.3 \sin\Big(\frac{6\pi x}{5.0} \Big) \Big) + \epsilon,
\]
and the formula of the computer model $y^c(\cdot,\cdot)$ is 
\[
y_{\text{sim}}(x,t_1,t_2) = 10 \cdot f_{\text{Gamma}}(x + 2.2; t_1, t_2).
\]
Thus, the discrepancy is
\begin{align*}
\delta(x) &= y_{\text{true}}(x) - y_{\text{sim}}(x,1.8,2.0) \\
&= 10 \cdot f_{\text{Gamma}}(x + 2.2; \alpha = 1.8, \lambda = 2.0) \cdot \left( \sin\Big(\frac{2\pi x}{1.5} \Big) + 0.3 \sin\Big(\frac{6\pi x}{5.0} \Big) \right),
\end{align*}
where $\epsilon \sim \mathcal{N}(0,10^{-3})$, the input domain is \( x \in [-2, 8] \), the first calibration parameter follows the prior \(t_1 \sim \mathrm{Uniform}(0.8,2.6)\),  the second calibration parameter follows the prior \(t_2 \sim \mathrm{Uniform}(1,3.5)\), and the true values for the calibration parameters are $\alpha = 1.8$ and $\lambda = 2$, respectively. The original training points (5 field observations and 30 simulator data points) are generated using a Maximin Latin Hypercube Design inside the input domain. We generate 50 candidate design points $\mathcal{D}_{\text{cand}}$ on a uniform grid over the input domain, and then fix 100 prediction points $\mathcal{X}_{\text{pred}}$ also uniformly distributed across the domain, without overlapping. The computer model $y^c(\cdot,\cdot)$ employs a GP prior with zero mean and a Radial Basis Function (RBF) kernel parameterized by four hyperparameters that are given weakly informative priors. For the discrepancy term, we adopt an ``RBF × periodic'' kernel with the first stage GP output as mean function. Moreover, the pattern of predictive means exhibits substantial spatial variability, so for the hybrid criteria we assign equal weights ($\alpha = 0.5$) to the predictive uncertainty and local model complexity in both Equation~\eqref{eq:hybrid-MI} and Equation~\eqref{eq:hybrid-IMSPE}. \par
We implement the two design strategies SDE and ADE described in Section \ref{subsec:Design Objectives and Strategies} for a total of 10 rounds. Firstly, to provide a visual illustration, Figure \ref{fig:combined1} shows the model regression before and after 10 rounds of ADE using the Mutual Information + Local Complexity Maximization (MI+CX) criterion. From Figure \ref{fig:combined1}, we observe that the regression improves substantially, reflected both in the increased accuracy of the predictive mean and in the reduction of predictive uncertainty. One of our primary interests here lies in comparing the performance of the different design criteria under both SDE and ADE. Figure~\ref{fig:combined2} displays heatmaps of the average performance for the six criteria evaluated across 10 design rounds in this toy example, shown separately for evaluation metrics (a) MSE and (b) CRPS. It is obvious that the MI+CX criterion performs the best under both SDE and ADE, achieving the lowest average MSE and CRPS among all methods. Moreover, when the IMSPE minimization criterion is augmented with local complexity (IMSPE+CX), it achieves an improvement compared to plain IMSPE. Intuitively, ADE consistently outperforms SDE across all criteria except Maximin distance, and the improvement is substantial if we revisit Figure~\ref{fig:combined2}. This indicates that updating the posterior after each new observation is typically beneficial, which enables the criteria to choose experimental locations more effectively. Additionally, Figure~\ref{fig:combined3} compares the performance of each criterion at the final round for (a) MSE and (b) CRPS. The pattern is consistent with the 10-round average performance shown in Figure~\ref{fig:combined2}: the MI+CX criterion continues to achieve the best performance, followed by D-optimality and the MI-based criterion, under both SDE and ADE.\par
In Section~\ref{sec:Hybrid Criterion with Local Complexity Adjustment}, we introduced a hybrid design criterion that incorporates a local model complexity adjustment into mutual information gain. From these plots, it is evident that the hybrid variant consistently outperforms the pure MI-based criterion, regardless of whether the strategy is SDE or ADE. This provides clear empirical support for the effectiveness of incorporating local model complexity into the MI-based design criterion.\par
We also examine the theoretical property established in Theorem~\ref{thm:MI-IMSPE}. As shown in Figure~\ref{fig:combined2}, the number of initial physical observations is small, and the underlying model is highly uncertain. This causes learning difficulties in the early design stages. Therefore, we intentionally apply both criteria for an additional 10 rounds after the first 10 rounds of design under both SDE and ADE.
A total of 20 rounds are conducted, and the results are reported in Figure~\ref{fig:combined666}. By checking the figure, we observe that the gap in MSE and CRPS between the two criteria gradually decreases as the design rounds proceed. By around the twentieth round, the predictive improvement offered by the two criteria becomes almost equivalent. This is empirically consistent with the theoretical result we established—namely, the local asymptotic relationship between MI and IMSPE in Theorem~\ref{thm:MI-IMSPE}.\par
Another primary objective is to assess the computational efficiency of the design criteria in the SDE setting by comparing their design times over 10 rounds. Table~\ref{tab:design_time} summarizes the results. We observe that by employing the Gaussian Mixture Compression strategy, the computation time of the MI-based criterion is substantially reduced, while still maintaining excellent performance. Note that the heatmaps in Figure~\ref{fig:combined2} shown are using the results from the compression strategy. We also observe that, relative to MI and MI+CX, the IMSPE minimization criterion is extremely fast, while the D-optimality criterion is more computationally expensive. Therefore, if we jointly consider both design time and predictive performance from Figure~\ref{fig:combined2} and Figure~\ref{fig:combined3}, we find that the performance of IMSPE and D-optimality is not necessarily poor. In fact, the D-optimality criterion performs reasonably well although it is originally aimed to improve the estimation of model parameters. Finally, note that in SDE we use the precomputation and Schur complement update technique (Section~\ref{sec:Precomputation and Schur complement}) by default, so no performance comparison with respect to this computation strategy is included. In our experience, however, this strategy also yields substantial computational savings.

\begin{figure}[htbp]
  \centering
  \begin{subfigure}[t]{0.6\textwidth}
    \includegraphics[width=\textwidth]{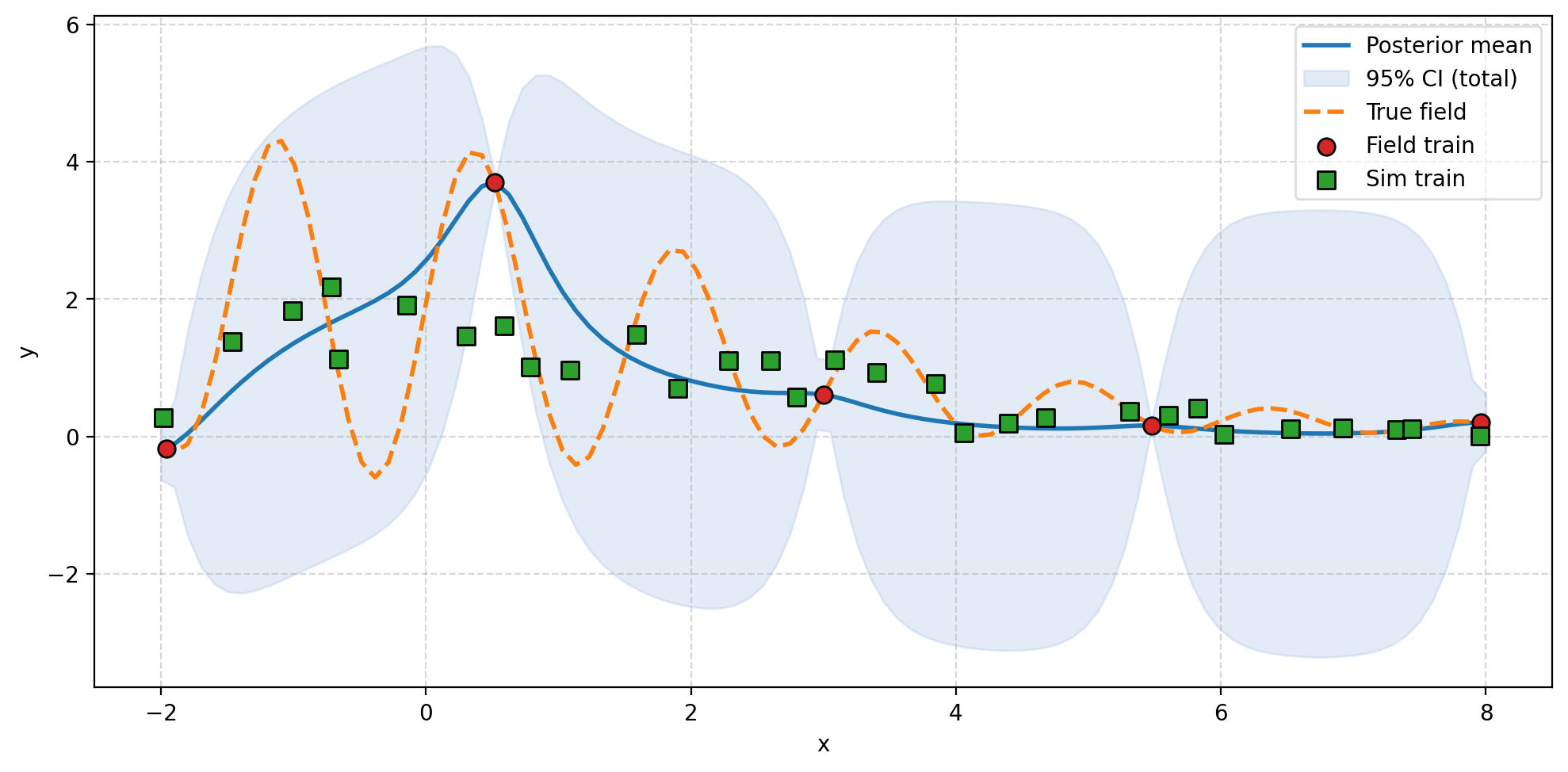}
    \caption{Original regression for toy example}\label{fig:adap_GMRC_cx_001}
  \end{subfigure}
  \hfill
  \begin{subfigure}[t]{0.6\textwidth}
    \includegraphics[width=\textwidth]{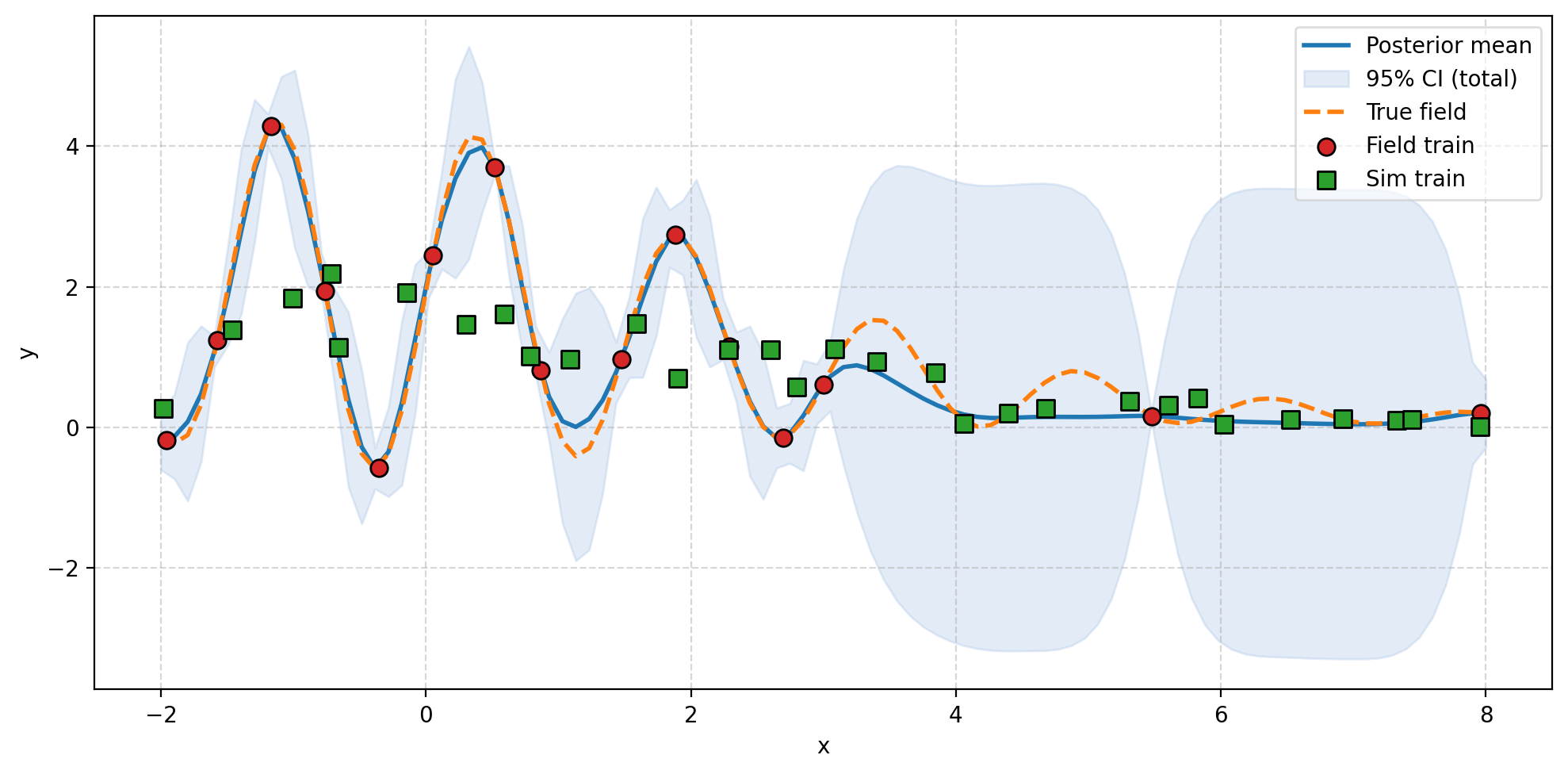}
    \caption{Regression after 10 rounds for toy example using MI+CX criterion under ADE}\label{fig:adap_GMRC_cx_011}
  \end{subfigure}
  \caption{Simulation case: comparison of (a) the original model regression and (b) after design model regression.}
  \label{fig:combined1}
\end{figure}

\begin{figure}[htbp]
  \centering
  \begin{subfigure}[t]{0.9\textwidth}
    \centering
    \includegraphics[width=\textwidth]{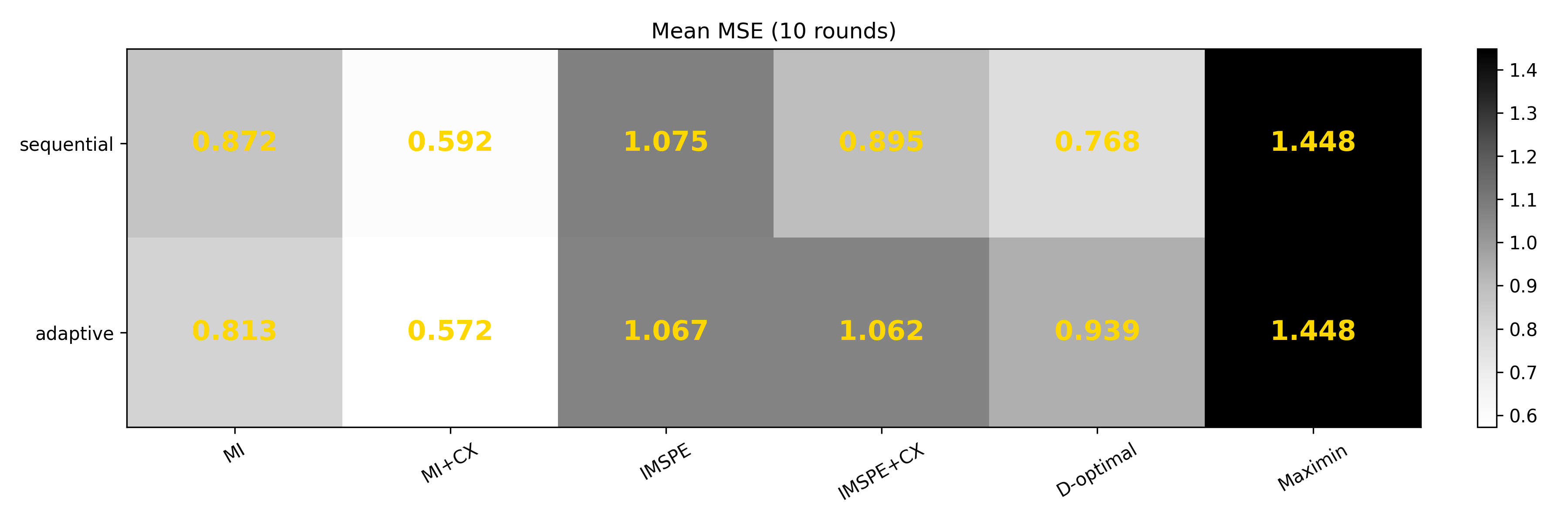}
    \caption{Heatmap of the mean MSE over 10 rounds for design criteria under SDE and ADE for the toy model}
    \label{fig:mse_compare_toy}
  \end{subfigure}

  \vspace{0.5em} 

  \begin{subfigure}[t]{0.9\textwidth}
    \centering
    \includegraphics[width=\textwidth]{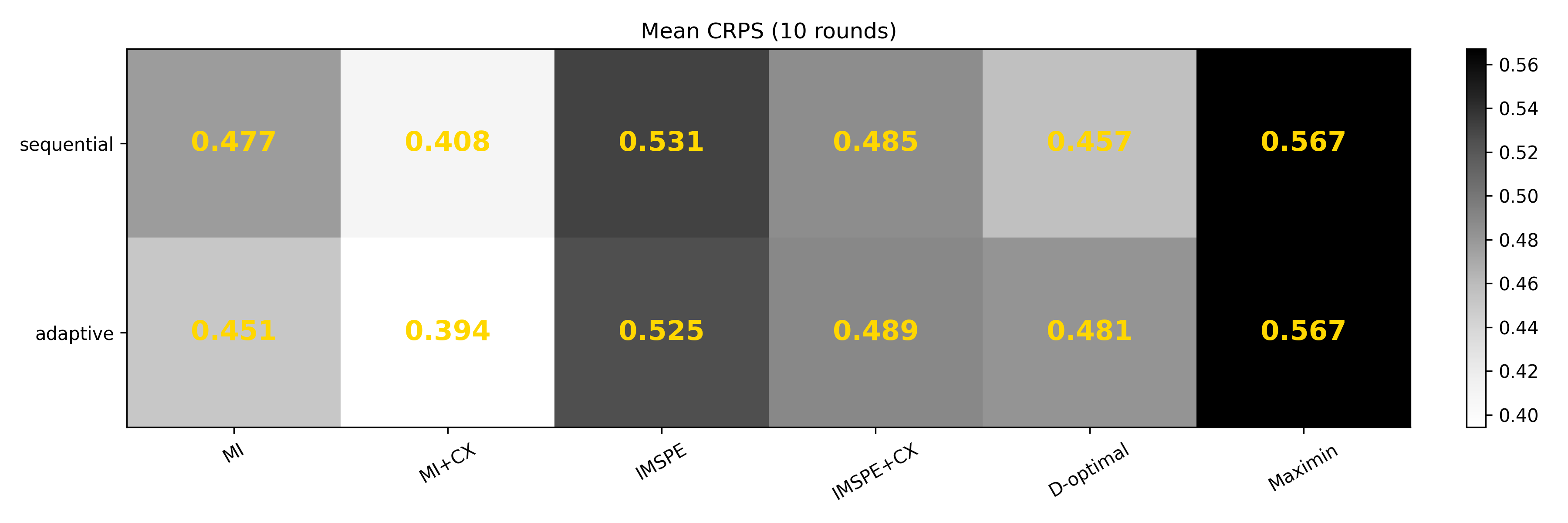}
    \caption{Heatmap of the mean CRPS over 10 rounds for design criteria under SDE and ADE for the toy model}
    \label{fig:crps_compare_toy}
  \end{subfigure}
  \caption{Simulation case: comparison of average (a) MSE and (b) CRPS for each criterion across 10 design rounds.}
  \label{fig:combined2}
\end{figure}

\begin{figure}[htbp]
  \centering
  \begin{subfigure}[t]{0.9\textwidth}
    \centering
    \includegraphics[width=\textwidth]{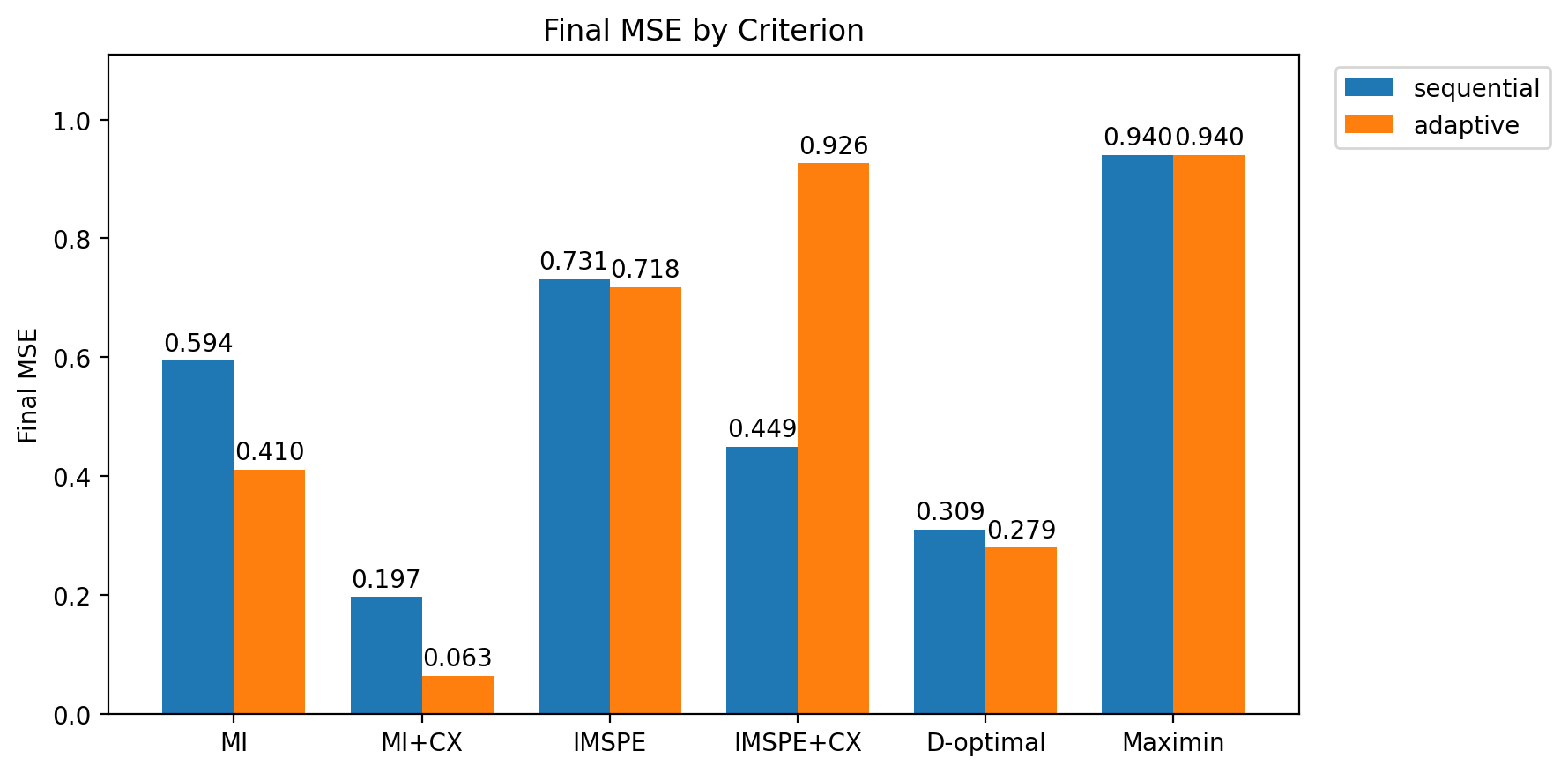}
    \caption{MSE of final round for design criteria under SDE and ADE for the toy model}
    \label{fig:finalmse_compare_toy}
  \end{subfigure}

  \vspace{0.5em} 

  \begin{subfigure}[t]{0.9\textwidth}
    \centering
    \includegraphics[width=\textwidth]{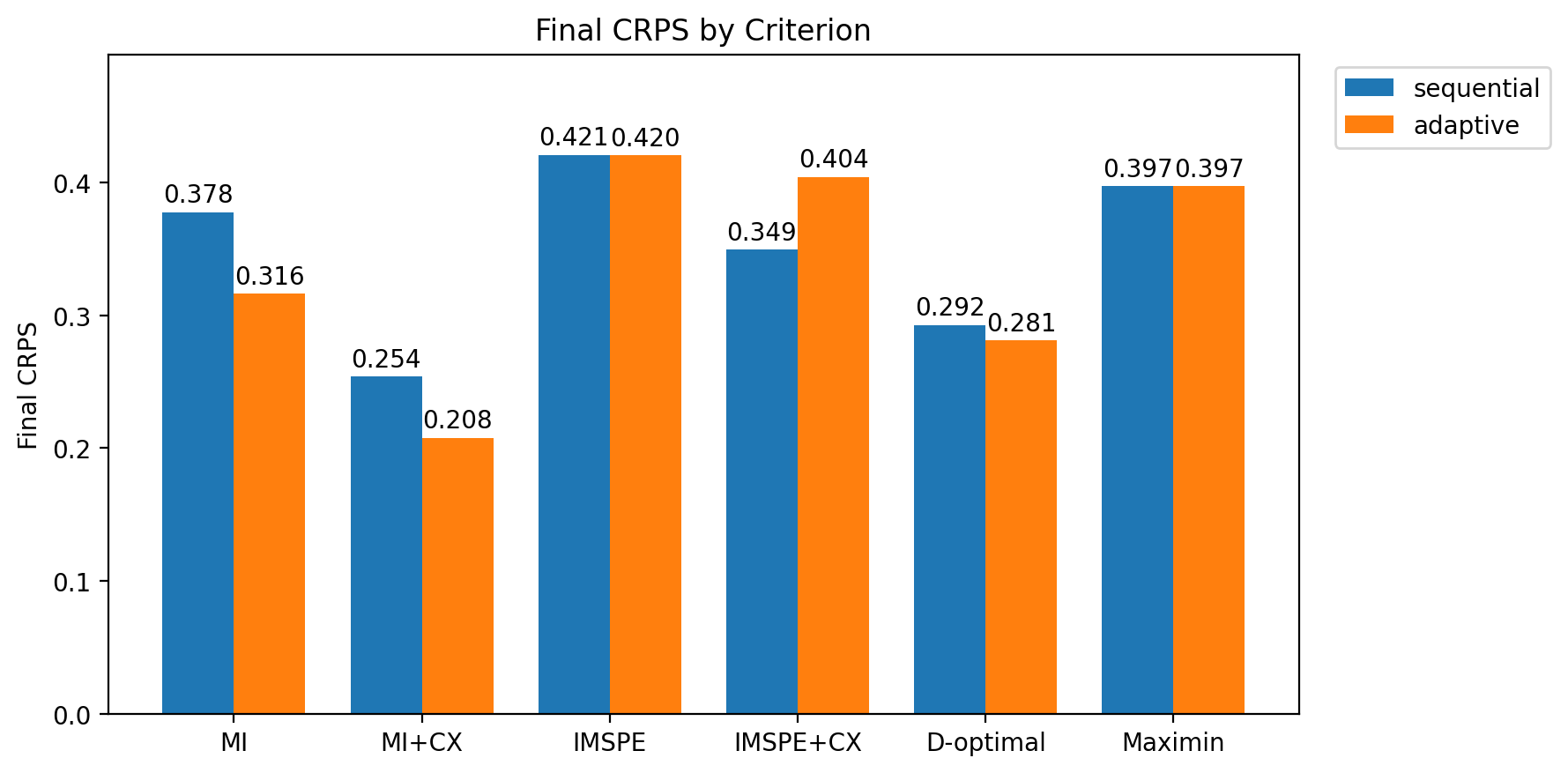}
    \caption{CRPS of final round for design criteria under SDE and ADE for the toy model}
    \label{fig:finalcrps_compare_toy}
  \end{subfigure}
  \caption{Simulation case: comparison of final-round (a) MSE and (b) CRPS for each criterion.}
  \label{fig:combined3}
\end{figure}

\begin{figure}[htbp]
  \centering
  \begin{subfigure}[t]{0.9\textwidth}
    \centering
    \includegraphics[width=\textwidth]{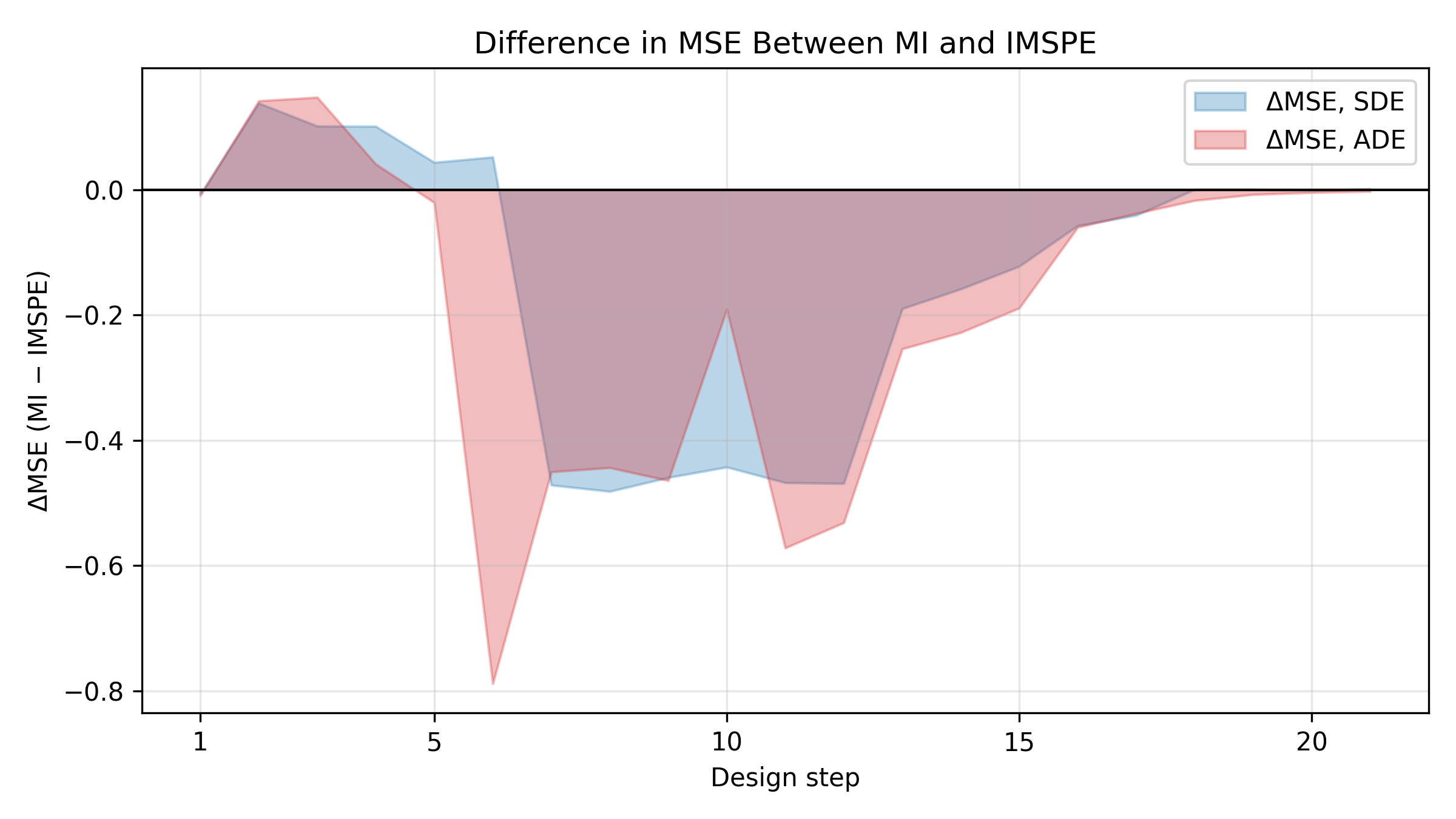}
    \caption{The MSE difference of the MI-based criterion and IMSPE minimization over 20 rounds  under SDE and ADE for the toy model}
    \label{fig:20mse_compare_toy}
  \end{subfigure}

  \vspace{0.5em} 

  \begin{subfigure}[t]{0.9\textwidth}
    \centering
    \includegraphics[width=\textwidth]{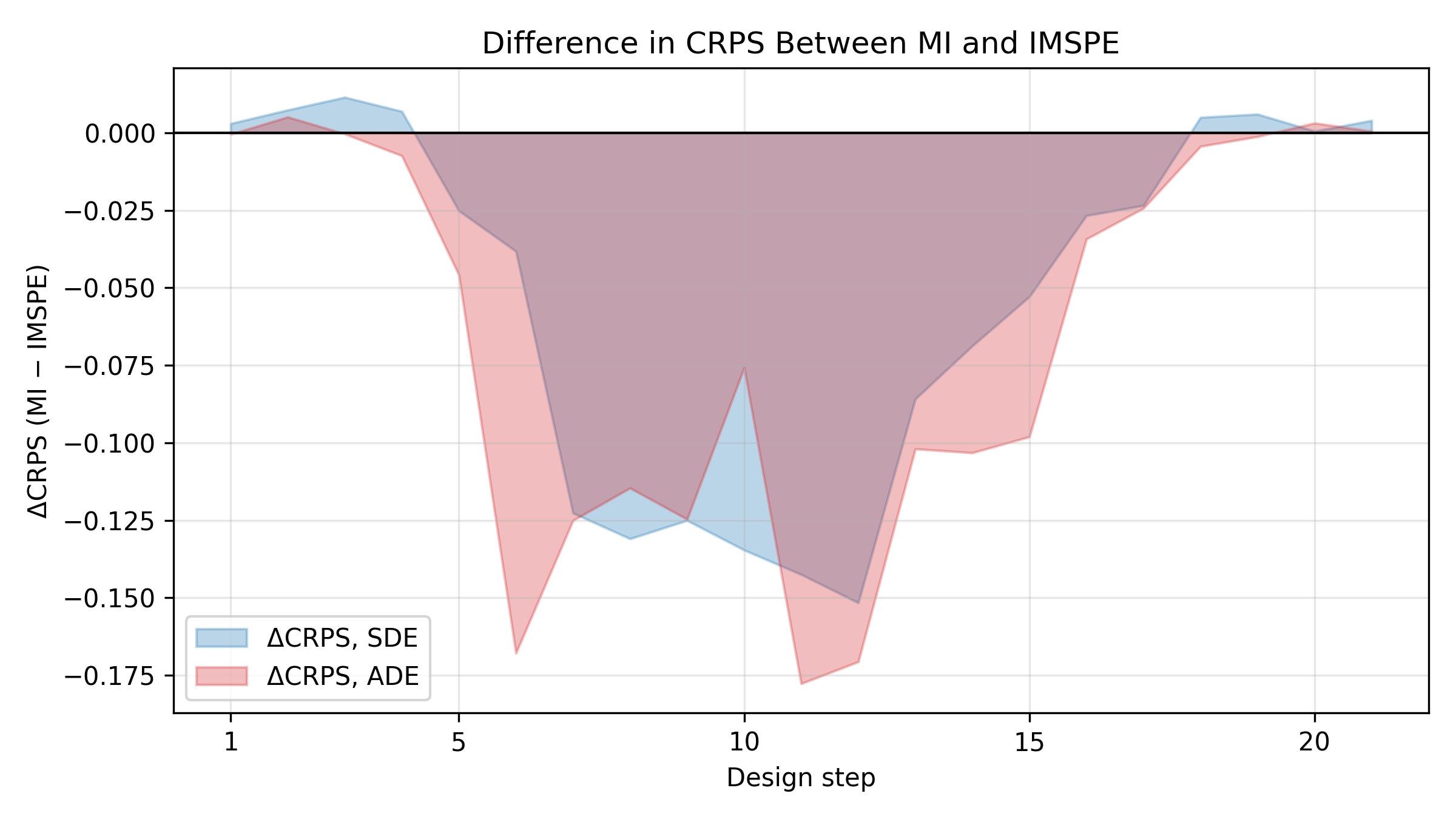}
    \caption{The CRPS difference of the MI-based criterion and IMSPE minimization over 20 rounds  under SDE and ADE for the toy model}
    \label{fig:20crps_compare_toy}
  \end{subfigure}
  \caption{Simulation case: comparison of (a) $\Delta \text{MSE}$ and (b) $\Delta\text{CRPS}$ for the MI-based criterion and IMSPE minimization over 20 rounds.}
  \label{fig:combined666}
\end{figure}

\begin{table}[htbp]
\centering
\caption{Computation time comparison for different design criteria over 10 sequential (offline) design (SDE) rounds of the toy model.}
\label{tab:design_time}
\small
\begin{tabular}{l r r}
\toprule
\textbf{Method} & \textbf{Time [s]} & \textbf{Time [min]} \\
\midrule
Mutual Information Gain (Gaussian Mixture Compression) & 142.0512 & 2.368 \\
Mutual Information Gain + Complexity (Gaussian Mixture Compression) & 143.1869 & 2.386 \\
Mutual Information Gain (Naive NMC) & 901.4012 & 15.023 \\
Mutual Information Gain + Complexity (Naive NMC) & 902.8288 & 15.047 \\
Integrated Mean Square Prediction Error & 10.0309 & 0.167 \\
Integrated Mean Square Prediction Error + Complexity & 10.7311 & 0.179 \\
D-optimality & 295.0461 & 4.917 \\
Maximin Distance & 0.0160 & 0.000 \\
\bottomrule
\end{tabular}
\end{table}

\subsection{Real-World Data Analysis}
\label{subsec:real}
We now move to our second example, in which we perform sequential Bayesian experimental design (BED) using a real-world biochemical dataset. \citet{Swameye2003} introduced a family of parameterized ordinary differential equation (ODE) models to describe the dynamics of the JAK–STAT5 signaling pathway. This pathway is characterized by the rapid shuttling of STAT5 between the nucleus and cytoplasm in response to upstream activation, accompanied by a series of tightly regulated biochemical reactions. Because all intermediate quantities in the pathway vary rapidly, real-time active learning at the experimental site is infeasible. In contrast, collecting measurements of STAT5 population concentrations at a limited number of time points is much more practical.\par
In addition, \citet{Swameye2003} collected experimental measurements from a time-course study and a $\boldsymbol{\beta}$-Galactosidase Assay. Through the observation data and mechanistic modeling, the researchers can analyze the quantitative behavior of STAT5 populations as well as the parameter sensitivities. Following \citet{Swameye2003} and \citet{Kirk2009}, we therefore can adopt an ODE framework to describe this dynamic cycle/pathway:
\begin{align*}
\frac{dv_{1}}{dt} &= -p_{1}\,v_{1}D + 2\,p_{4}\,v_{4},\\
\frac{dv_{2}}{dt} &= p_{1}\,v_{1}D - v_{2}^{2},\\
\frac{dv_{3}}{dt} &= -p_{3}\,v_{3} + 0.5\,v_{2}^{2},\\
\frac{dv_{4}}{dt} &= p_{3}\,v_{3} - p_{4}\,v_{4}.
\end{align*}

\[
\begin{cases}
x_1 = p_5 (v_1 + v_2 + 2v_3),\\
x_2 = p_6 (v_2 + 2v_3).
\end{cases}
\]

Here, $D$ denotes the cytokine input that activates the biochemical reactions. It is initially controlled by the experimenter but varies over time. The state variables $v_1$,$v_2$ and $v_3$ represent the concentrations of unphosphorylated cytoplasmic STAT5 in different phases of the signaling cycle, while $v_4$ is the nuclear STAT5 concentration. The initial concentration
$v_1(t=0)$ is treated as an unknown calibration parameter, while the other initial concentrations are assumed at zero: $v_2(t=0)=v_3(t=0)=v_4(t=0)=0$. The reaction rates $p_1,p_3,p_4$ and the scaling factors $p_5,p_6$ are
also treated as calibration parameters. Since the individual state concentrations 
$v_1,\dots,v_4$ cannot be directly observed, the only measurable outputs are
$x_1$ = \text{``total cytoplasmic STAT5''} and $x_2$ = \text{``phosphorylated cytoplasmic STAT5''}. Thus, the full calibration vector is $\boldsymbol{\theta} = {\,p_1,p_3,p_4,\;p_5,\\p_6,\;v_1(t=0)}$, and the simulator output is $\mathbf{y}^s\;=\;\bigl(x_1,\;x_2\bigr)$.\par
We consider the time window \(t \in [0,60]\) minutes. Following 
\citet{Kirk2009}, we use their reported ‘optimal’ calibration parameter values as reference points: \(v_1(t=0)=0.996\), \(p_1=2.43\), \(p_3=0.256\), \(p_4=0.303\), \(p_5=1.27\), and \(p_6=0.944\). Based on these values, we specify reasonably wide prior ranges that remain realistic for the parameters: \(v_1(t=0) \sim \mathrm{Uniform}(0.8,1.1)\), \(p_1 \sim \mathrm{Uniform}(2.1,2.8)\), \(p_3 \sim \mathrm{Uniform}(0.1,0.4)\), \(p_4 \sim \mathrm{Uniform}(0.1,0.4)\), \(p_5 \sim \mathrm{Uniform}(1,1.5)\) and \(p_6 \sim \mathrm{Uniform}(0.7,1.4)\). The dataset contains 19 field observations \((D, x_1, x_2)\) at discrete time points. We apply a maximin Latin hypercube design to select 4 of these as the initial training set. The simulator/computer model is run  at 60 equally spaced time inputs over the 0–60 min interval through a Runge-Kutta routine. The candidate set $\mathcal{D}_{\text{cand}}$ consists of 60 equally 
spaced time points over the interval $[1,60]$. This grid is shifted 
relative to the simulator training inputs, which are defined on 
$[0,60]$, so that the two sets of time points do not overlap. The future prediction set $\mathcal{X}_{\text{pred}}$ contains 100 uniformly spaced time points over \([0,60]\). Throughout both stages of the KOH model, we use RBF kernels for the GPs. Because the outputs $x_1$ and $x_2$ are correlated,we employ a multi-output GP with a Kronecker-structured kernel that combines the RBF input kernel with an output covariance kernel to capture cross-output dependence. One practical problem is that the 19 physical observations are too sparse to evaluate predictive performance after design. We therefore interpolate the full set of observations to construct a denser pseudo ``ground truth'' series. From exploratory model fitting, we find that the local complexity does not vary noticeably across the time range, so we set $\alpha = 0.3$ in both Equation~\eqref{eq:hybrid-MI} and Equation~\eqref{eq:hybrid-IMSPE}.\par

We still conduct 10 sequential rounds of both SDE and ADE to evaluate how each design criterion improves the predictive performance of the KOH model. Figure \ref{fig:combined4} first compares the regression fitted by the original physical observations to the regression after applying the MI+CX criterion for 10 rounds of ADE. Consistent with the findings in Section~\ref{subsec:toy}, under ADE the MI+CX criterion continues to outperform all the other criteria. For the layout of Figure \ref{fig:combined4}, each subplot contains two fitted curves along with their associated confidence intervals, which correspond to the two correlated outputs $x_1$ and $x_2$. The heatmaps in Figure \ref{fig:combined5} demonstrate the average predictive performance of the six design criteria after 10 rounds of SED and ADE, with panel (a) displaying the mean MSE and panel (b) the mean CRPS. In the JAK–STAT5 example, the differences among criteria are less pronounced than in the toy example. The underlying reason is that the true data-generating curve is not highly complex, and the additional 10 physical observations can provide sufficient information for accurate learning. From Figure \ref{fig:combined5}(a), we observe that MI+CX criterion still yields the best overall predictive accuracy under both SDE and ADE, achieving the lowest MSE. It is worth to note that the D-optimality criterion performs almost as well in comparison, ranking second only to MI+CX, since well-calibrated parameters naturally lead to strong predictive performance. This advantage becomes even more evident in \ref{fig:combined5}(b), where the D-optimality criterion achieves the lowest CRPS under ADE, indicating the largest reduction in predictive uncertainty. The MI-based criterion also performs relatively well. Although MI+CX is not the best in terms of CRPS, MI+CX remains the most robust criterion in this case study if MSE and CRPS are evaluated jointly. Figure~\ref{fig:combined6} further confirms this point by checking the final-round performance. In terms of MSE, MI+CX clearly outperforms all other criteria by a noticeable margin, followed by D-optimality and MI-based design. For CRPS, under ADE all criteria except maximin distance yield nearly indistinguishable results, but under SDE, MI+CX still shows a slight advantage.\par
\begin{figure}[htbp]
  \centering
  \begin{subfigure}[t]{0.6\textwidth}
    \includegraphics[width=\textwidth]{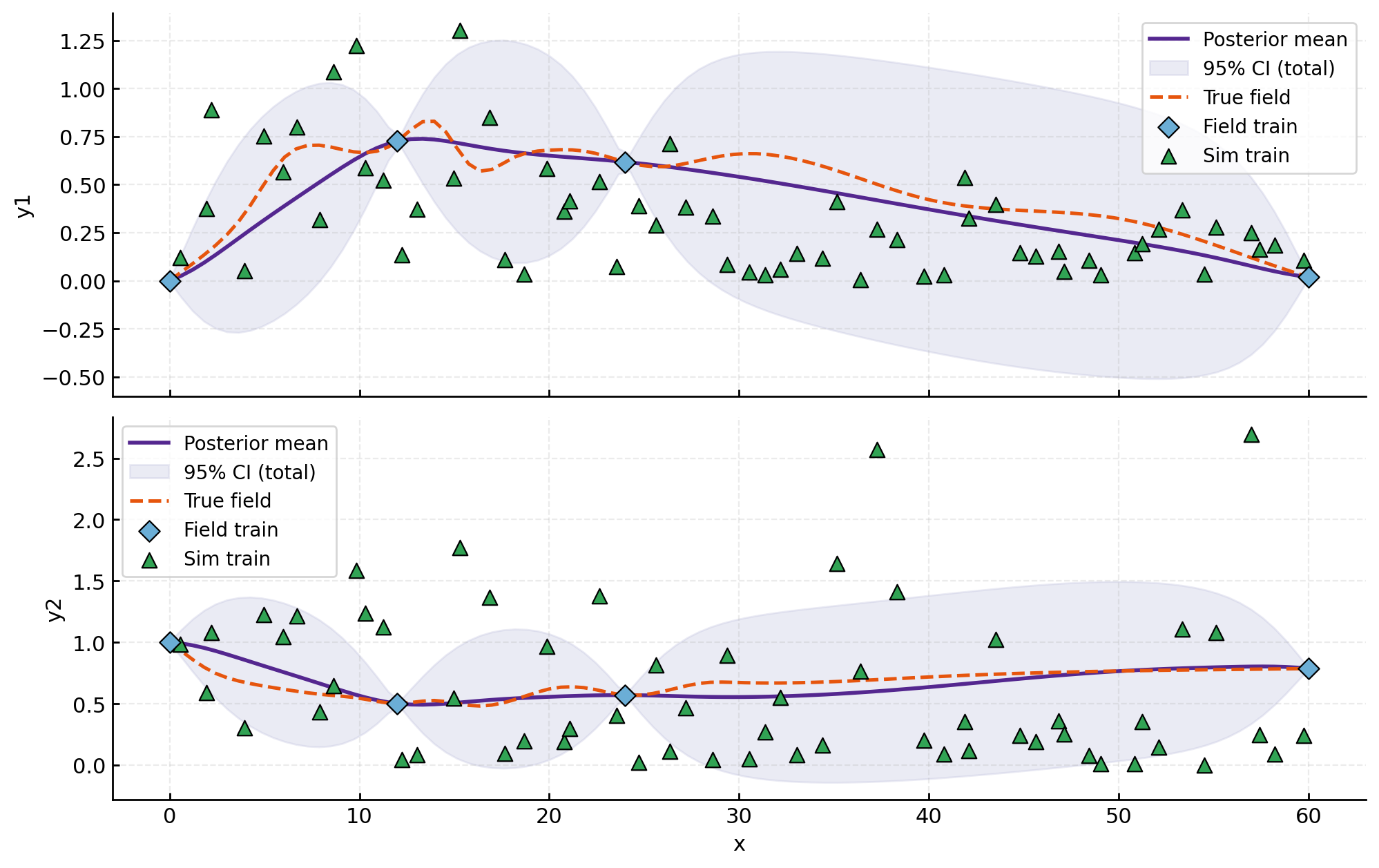}
    \caption{Original regression for the JAK–STAT5 example}\label{fig:orig_bio}
  \end{subfigure}
  \hfill
  \begin{subfigure}[t]{0.6\textwidth}
    \includegraphics[width=\textwidth]{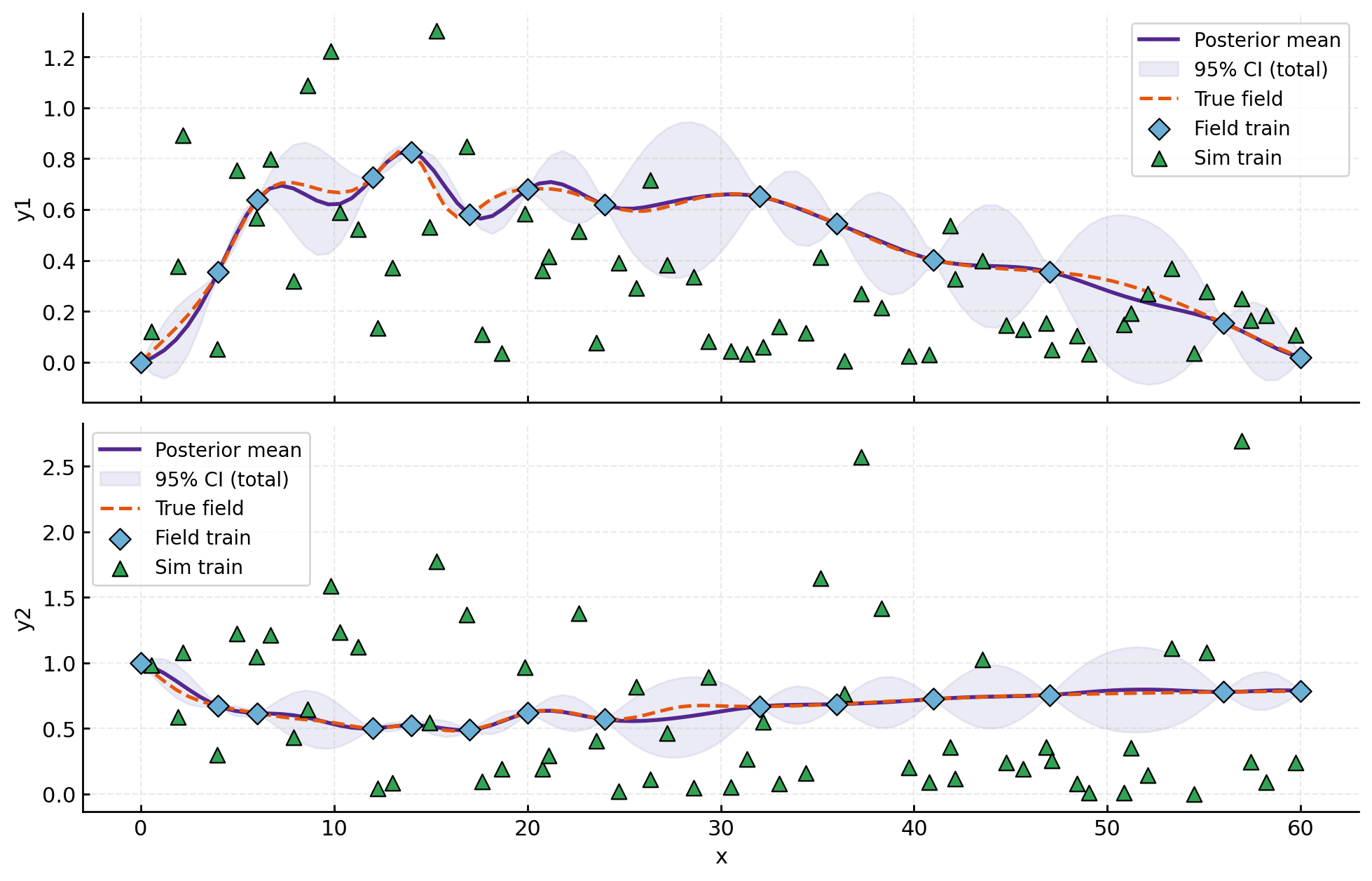}
    \caption{Regression after 10 rounds for the JAK–STAT5 example using MI+CX criterion under ADE}\label{fig:adapt_bio}
  \end{subfigure}
  \caption{JAK–STAT5 example: comparison of (a) the original model regression and (b) after design model regression.}
  \label{fig:combined4}
\end{figure}

\begin{figure}[htbp]
  \centering
  \begin{subfigure}[t]{0.9\textwidth}
    \centering
    \includegraphics[width=\textwidth]{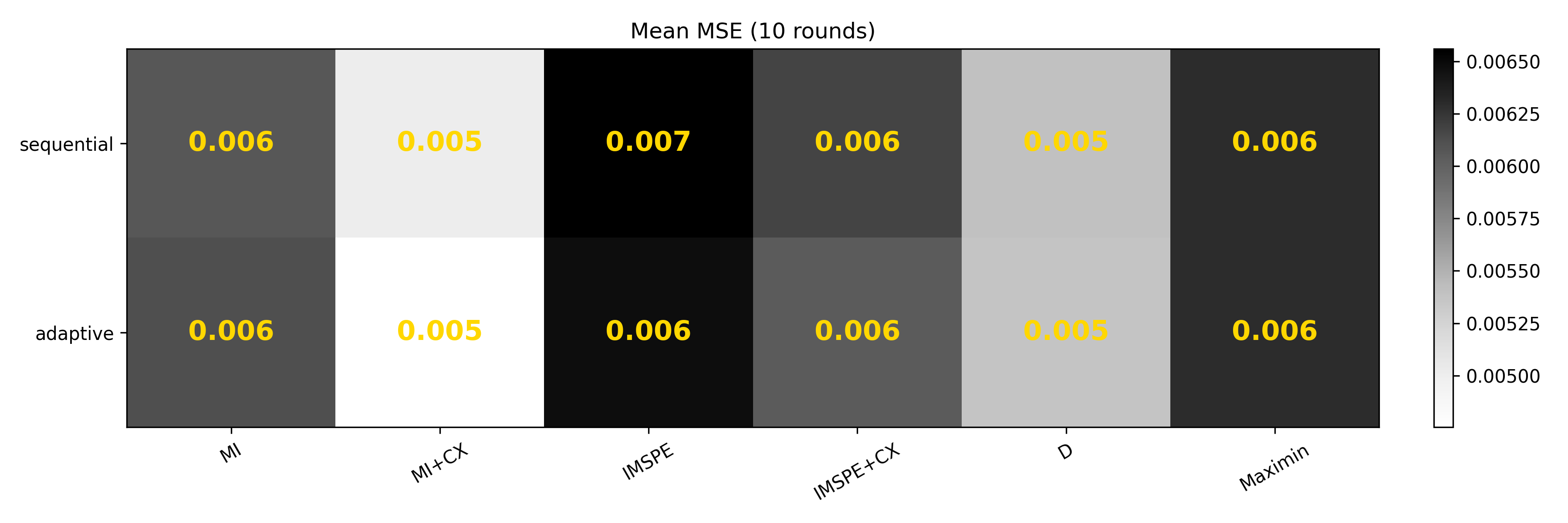}
    \caption{Heatmap of the mean MSE over 10 rounds for design criteria under SDE and ADE for the JAK–STAT5 example}
    \label{fig:mse_compare_bio}
  \end{subfigure}

  \vspace{0.5em} 

  \begin{subfigure}[t]{0.9\textwidth}
    \centering
    \includegraphics[width=\textwidth]{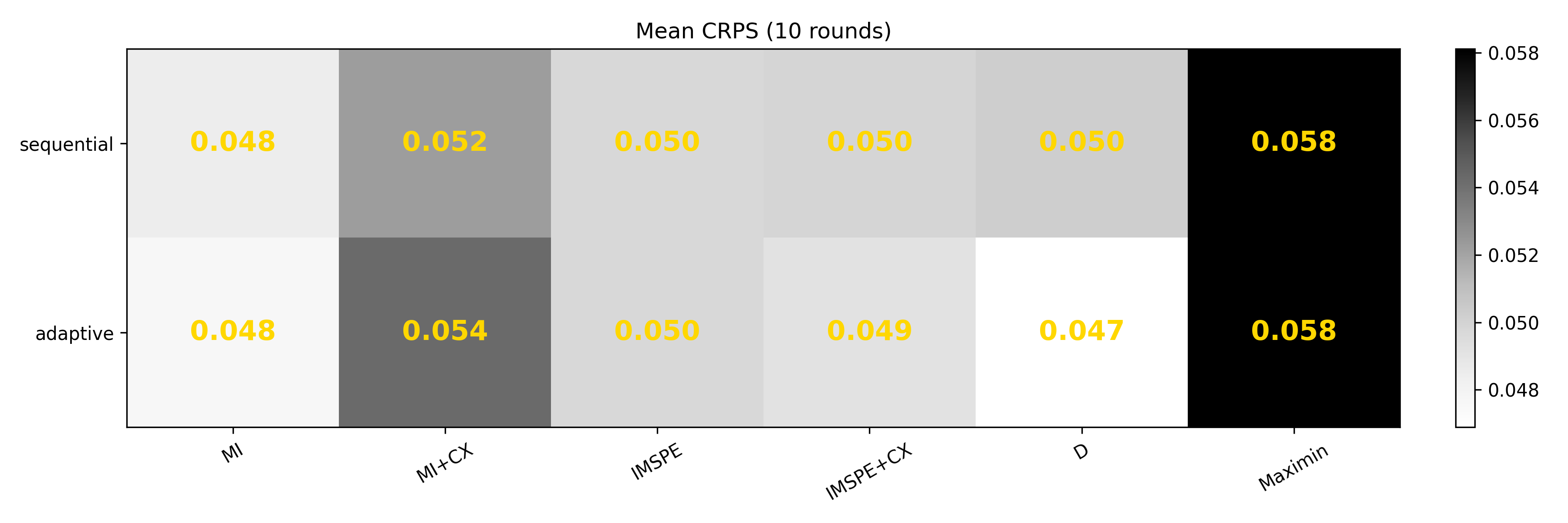}
    \caption{Heatmap of the mean CRPS over 10 rounds for design criteria under SDE and ADE for the JAK–STAT5 example}
    \label{fig:crps_compare_bio}
  \end{subfigure}
  \caption{JAK–STAT5 example: comparison of average (a) MSE and (b) CRPS for each criterion across 10 design rounds.}
  \label{fig:combined5}
\end{figure}

\begin{figure}[htbp]
  \centering
  \begin{subfigure}[t]{0.9\textwidth}
    \centering
    \includegraphics[width=\textwidth]{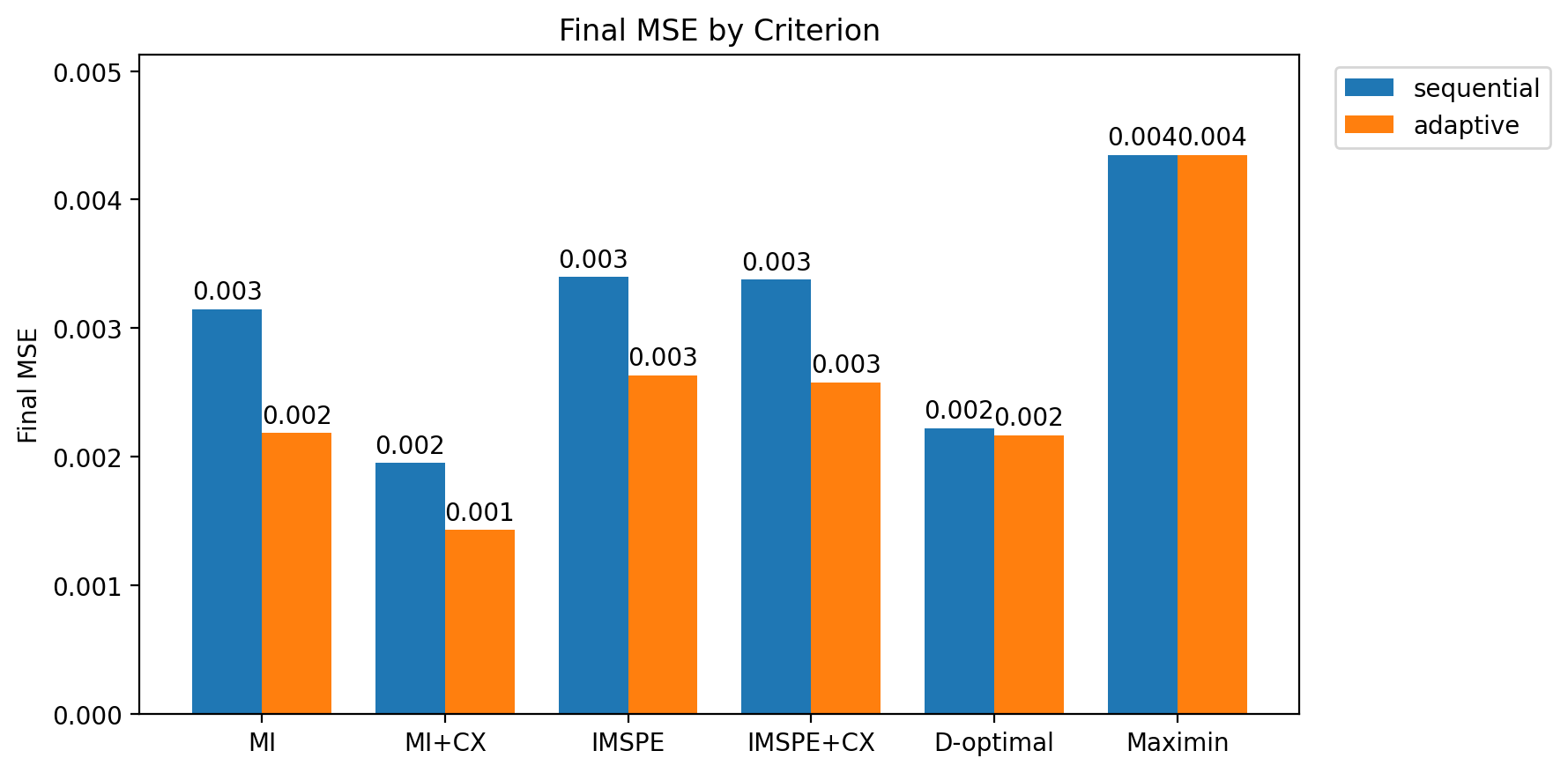}
    \caption{MSE of final round for design criteria under SDE and ADE for the JAK–STAT5 example}
    \label{fig:finalmse_compare_bio}
  \end{subfigure}

  \vspace{0.5em} 

  \begin{subfigure}[t]{0.9\textwidth}
    \centering
    \includegraphics[width=\textwidth]{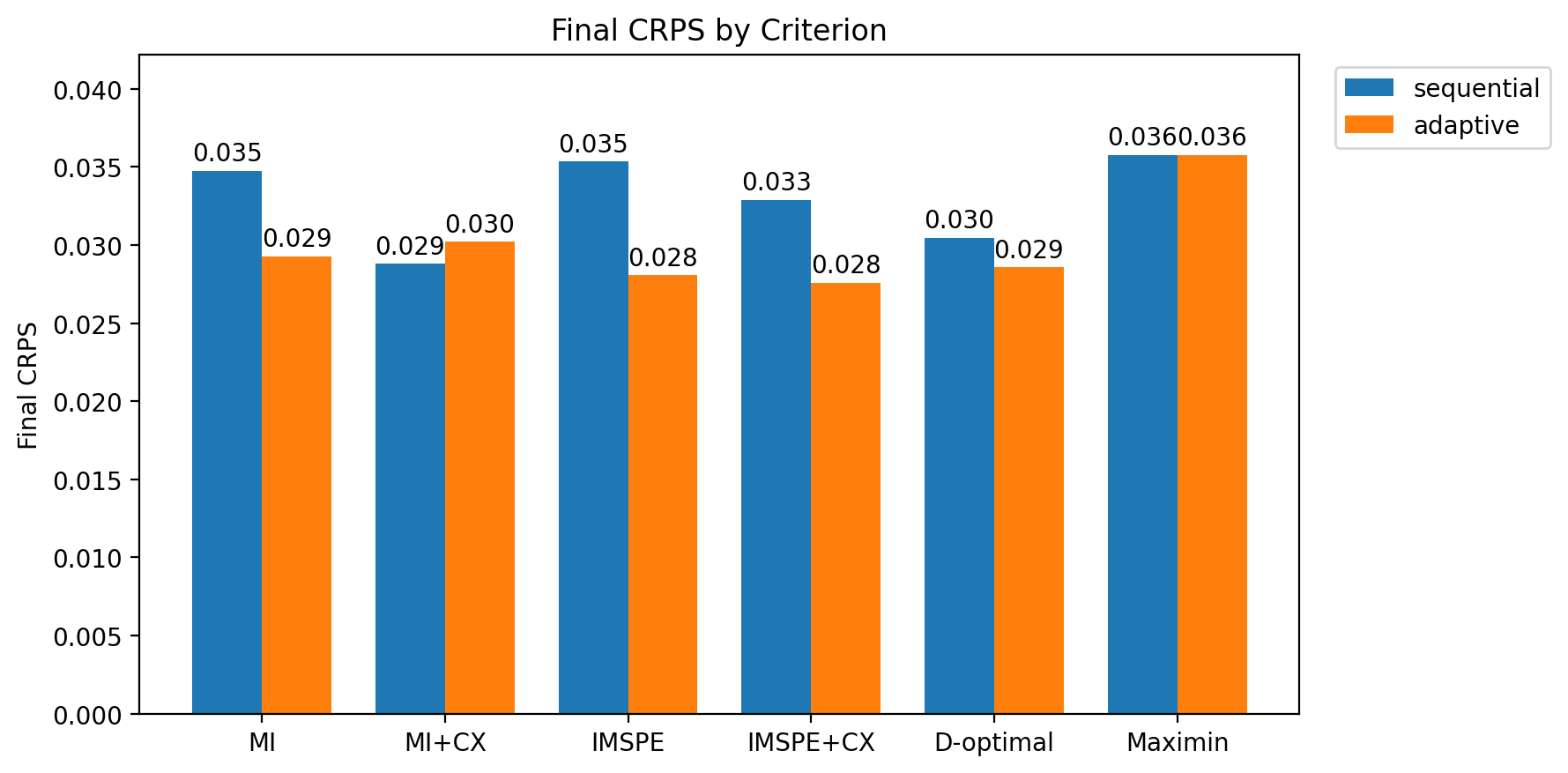}
    \caption{CRPS of final round for design criteria under SDE and ADE for the JAK–STAT5 example}
    \label{fig:finalcrps_compare_bio}
  \end{subfigure}
  \caption{JAK–STAT5 example: comparison of final-round (a) MSE and (b) CRPS for each criterion.}
  \label{fig:combined6}
\end{figure}
Let us also evaluate the design time in this example. Over 10 rounds of SDE, the computational efficiency of the MI-based criteria again exhibits a substantial difference depending on whether the Gaussian Mixture Compression strategy is adopted or not. The IMSPE and IMSPE+CX criteria are still very fast, although their predictive performance is relatively poor, as shown in Figure \ref{fig:combined5} and Figure \ref{fig:combined6}. In the JAK–STAT5 example, the D-optimality criterion becomes the most computationally expensive criterion since the JAK–STAT5 signaling pathway system involves six calibration parameters and two responses, amplifying the computational burden substantially.
\begin{table}[htbp]
\centering
\caption{Computation time comparison for different design criteria over 10 sequential (offline) design (SDE) rounds of the JAK–STAT5 example.}
\label{tab:design_time_jakstat}
\small
\begin{tabular}{l r r}
\toprule
\textbf{Method} & \textbf{Time [s]} & \textbf{Time [min]} \\
\midrule
Mutual Information Gain (Gaussian Mixture Compression) & 435.2585 & 7.254 \\
Mutual Information Gain + Complexity (Gaussian Mixture Compression) & 438.8006 & 7.313 \\
Mutual Information Gain (Naive NMC) & 2761.9797 & 46.033 \\
Mutual Information Gain + Complexity (Naive NMC) & 2784.4565 & 46.408 \\
Integrated Mean Square Prediction Error & 67.1493 & 1.119 \\
Integrated Mean Square Prediction Error + Complexity & 68.9452 & 1.149 \\
D-optimality & 2831.5098 & 47.192 \\
Maximin Distance & 0.0223 & 0.000 \\
\bottomrule
\end{tabular}
\end{table}

\section{Conclusion and Discussion}
\label{sec:ConcluDiscuss}
Many studies have shown that the Kennedy \& O’Hagan (KOH) model can achieve high fidelity to the underlying physical process. In some cases of model misspecification, where the computer model's  structural error remains regardless of how well its parameters are calibrated, the physical observations play a crucial role as they correct the discrepancy between the computer model and reality. From this perspective, the entire KOH model is viewed as an augmented ``simulator'' and the subsequent physical experiments become essential for improving the model capacity. In this work, we address a previously under‐explored gap of the KOH framework: the sequential Bayesian design of physical experiments with the explicit goal of improving the KOH model’s predictive performance, while considering all sources of uncertainty.\par

A well-known challenge in Bayesian experimental design (BED) is the heavy computation burden since each design round typically requires updating the parameter posterior as well as the marginal predictive distribution. A naive implementation often relies on the nested Monte Carlo (NMC) estimator, whose computational cost increases rapidly with the number of samples. Although many modern approaches alleviate this expense by introducing variational lower bounds or amortized approximations \citep{Poole2019varbounds, Foster2019, kleinegesse2021gradientbasedbayesianexperimentaldesign}, our goal in this work is to obtain sufficiently accurate evaluations of design criteria across a bunch of closely competing candidate inputs. Therefore, the naive NMC estimator remains an attractive choice because it enjoys almost‐sure convergence as demonstrated in Section~\ref{sec:Theoretical Properties}. Also, it is important to note that the NMC estimator is notably robust in high‐dimensional predictive spaces and when the distributions are multi‐modal, heavy‐tailed, or exhibit complex dependence structures. In adaptive (online) design of experiments (ADE), amortized policy‐based approaches, most notably \citet{Foster2021}, significantly reduce the per‐round computational cost by learning a reusable design policy. However, in many practical scenarios true ADE is infeasible as conducting physical experiments in real time can be costly or technically difficult. Meanwhile, the cost of amortization, which is incurred mostly due to training a neural network, is also expensive and only worthwhile when a large number of repeated design tasks are required, whereas in many practical scenarios this is not the case. The importance of sequential (offline) design of experiments (SDE) is underrated in many situations. It enables sequential design decisions without conducting actual experiments during the design stage, making it especially suitable when field data acquisition is slow or difficult.\par

Because the SDE used in our work is inherently greedy, it may be prone to getting trapped in local optima rather than achieving a global optimal design. To mitigate this limitation, our key innovation is the introduction of composite design criteria that combine uncertainty reduction objectives (MI-based or IMSPE minimization) with measures of local complexity (gradient magnitude and discrete gradient variation). The experimental results in Section~\ref{sec:Experiments} demonstrate that the proposed Mutual Information + Local Complexity Maximization (MI+CX) criterion consistently outperforms all other alternatives, especially where the predictive surface exhibits sharp local variations.\par

We further contribute a set of computational strategies to reduce the overall computational complexity of the different criteria considered in this work under the choice of using the NMC estimator. The Gaussian Mixture Compression technique introduced in Section~\ref{sec:Gaussian Mixture Compression} is applied to the MI-based criterion, IMSPE minimization, and their hybrid variants in BED. It effectively reduces the number of components in the Gaussian mixture representation of the predictive distribution. The second strategy of using precomputations and Schur complement updates can be applied to nearly all criteria that require covariance matrix operations. This strategy is particularly well-suited to the SDE setting since the parameter posterior remains fixed across rounds.\par

In Section~\ref{sec:Theoretical Properties}, we establish the theoretical relationship between the MI-based criterion and the IMSPE minimization criterion. The IMSPE minimization approach has been frequently used in design problems aimed at improving predictive performance. However, its reliance solely on the trace of the predictive covariance matrix inevitably leads to information loss to some extent, since it does not account for the full covariance structure in the way the MI-based criterion does. Compared to the IMSPE minimization criterion, the MI-based criterion is a more comprehensive and robust measure, especially during the early stages of experiments when the observations are sparse. As shown in Theorem~\ref{thm:MI-IMSPE}, the MI-based criterion converges to a fixed scale of the IMSPE minimization criterion once the design has progressed sufficiently far that the covariance matrix changes only marginally. In this asymptotic regime, the two utilities are equivalent up to a constant factor. \par
The proposed hybrid criterion MI+CX is both effective and computationally efficient under both ADE and SDE. Some restrictions still remain that future work may seek to address. First, the ADE and SDE strategies applied in this paper are entirely myopic. These frameworks could be extended to batch or fully non-myopic design strategies. For example, multi-step lookahead schemes or approaches based on reinforcement learning can explicitly optimize far-sighted utilities. Second, our empirical studies in this paper focus exclusively on one‐dimensional controllable design inputs. Extending the methodology to multivariate or high-dimensional design spaces is essential for many real applications. \citet{MarminFilippone2022} has demonstrated that deep Gaussian processes combined with random feature expansions can effectively handle high-dimensional inputs and complex non-stationary structures within the KOH framework. Integrating these advances with our proposed design criteria can further enhance the performance and robustness of sequential Bayesian experimental design in challenging settings.

\section*{Acknowledgments}

The authors were supported by the Linz Institute of Technology's grant LIFT\_C-2022-1-123 funded by the State of Upper Austria and Austria's Federal Ministry of Education, Science and Research.

\begin{appendices}
\section{Proof of Theoretical Properties}
\label{sec:appendixA}
\setcounter{theorem}{0}
\begin{theorem}[Local asymptotic relationship between MI and IMSPE]
\label{thm:MI-IMSPE}
Assume that, (i) conditional on $\boldsymbol{\Omega}$, 
$(\mathbf y^*,\mathbf y^{\mathrm{new}}_{(b)})$ is jointly Gaussian, (ii) $\boldsymbol{\Sigma}^*_{(b-1)}$ is symmetric positive definite
with eigenvalues 
$0<\lambda_{\min}\le\lambda_{\max}<\infty$. Define the covariance reduction and IMSPE reduction at round $b$ as
\[
\Delta\boldsymbol{\Sigma}^*(\boldsymbol{\xi}_{(b)})
:=
\boldsymbol{\Sigma}^*_{(b-1)}
-
\boldsymbol{\Sigma}^*_{(b)}
\succeq \mathbf 0,
\qquad
\Delta\mathrm{IMSPE}(\boldsymbol{\xi}_{(b)} \mid \boldsymbol{\Omega
})
:=
\frac{1}{n^*}\,
\operatorname{tr}\big(
\Delta\boldsymbol{\Sigma}^*(\boldsymbol{\xi}_{(b)})
\big).
\]
As the sequential design rounds proceed, the following small-update regime holds:
\[
\big\|
(\boldsymbol{\Sigma}^*_{(b-1)})^{-1/2}
\,
\Delta\boldsymbol{\Sigma}^*(\boldsymbol{\xi}_{(b)})
\,
(\boldsymbol{\Sigma}^*_{(b-1)})^{-1/2}
\big\|_2 \;\longrightarrow\; 0.
\]
Then there exists a finite constant $C_{b-1}$, depending on
$\boldsymbol{\Sigma}^*_{(b-1)}$, such that
\[
\frac{U(\boldsymbol{\xi}_{(b)}\mid \boldsymbol{\Omega
})}{\Delta\mathrm{IMSPE}(\boldsymbol{\xi}_{(b)}\mid \boldsymbol{\Omega
}) }
\;\longrightarrow\;
C_{b-1}
\quad\text{as}\quad
\Delta\boldsymbol{\Sigma}^*(\boldsymbol{\xi}_{(b)})\to \mathbf 0,
\]
and this constant is bounded in terms of the eigenvalues of
$\boldsymbol{\Sigma}^*_{(b-1)}$:
\[
\frac{n^*}{2\,\lambda_{\max}\!\big(\boldsymbol{\Sigma}^*_{(b-1)}\big)}
\;\le\;
C_{b-1}
\;\le\;
\frac{n^*}{2\,\lambda_{\min}\!\big(\boldsymbol{\Sigma}^*_{(b-1)}\big)}.
\]
\end{theorem}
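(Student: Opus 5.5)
Conditional on $\boldsymbol{\Omega}$, assumption (i) gives a closed form for the mutual information. For jointly Gaussian $(\mathbf y^*,\mathbf y^{\mathrm{new}}_{(b)})$ we have
\[
U(\boldsymbol{\xi}_{(b)}\mid\boldsymbol{\Omega})=\tfrac12\log\det\boldsymbol{\Sigma}^*_{(b-1)}-\tfrac12\log\det\boldsymbol{\Sigma}^*_{(b)} .
\]
Here $\boldsymbol{\Sigma}^*_{(b-1)}$ is the prior predictive covariance of $\mathbf y^*$ before the candidate is added, and $\boldsymbol{\Sigma}^*_{(b)}$ is the conditional covariance given $\mathbf y^{\mathrm{new}}_{(b)}$, which in the Gaussian case does not depend on the realised value. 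Write $\mathbf A=\boldsymbol{\Sigma}^*_{(b-1)}$ and $\mathbf E=\mathbf A^{-1/2}\Delta\boldsymbol{\Sigma}^*\mathbf A^{-1/2}\succeq\mathbf 0$. Then $\boldsymbol{\Sigma}^*_{(b)}=\mathbf A^{1/2}(\mathbf I-\mathbf E)\mathbf A^{1/2}$, and therefore
\[
U=-\tfrac12\log\det(\mathbf I-\mathbf E)=-\tfrac12\sum_i\log(1-e_i),
\]
where $e_i\in[0,1)$ are the eigenvalues of $\mathbf E$. This is exact; everything else is an expansion of this identity.

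Next, use the small-update hypothesis $\|\mathbf E\|_2\to0$. Since $-\log(1-e)=e+O(e^2)$ uniformly for $e\le\|\mathbf E\|_2$, we get
\[
U=\tfrac12\operatorname{tr}(\mathbf E)\,\bigl(1+O(\|\mathbf E\|_2)\bigr).
\]
The error is relative because $\sum e_i^2\le\|\mathbf E\|_2\sum e_i$. We have $\operatorname{tr}(\mathbf E)=\operatorname{tr}(\mathbf A^{-1}\Delta\boldsymbol{\Sigma}^*)$ and $n^*\Delta\mathrm{IMSPE}=\operatorname{tr}(\Delta\boldsymbol{\Sigma}^*)$. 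Hence
\[
\frac{U}{\Delta\mathrm{IMSPE}}=\frac{n^*}{2}\,\frac{\operatorname{tr}(\mathbf A^{-1}\Delta\boldsymbol{\Sigma}^*)}{\operatorname{tr}(\Delta\boldsymbol{\Sigma}^*)}\,\bigl(1+o(1)\bigr),
\]
assuming $\Delta\boldsymbol{\Sigma}^*\neq\mathbf 0$ so that the ratio is defined.

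The eigenvalue bounds then follow from a standard trace inequality. For $\mathbf A^{-1}\succ\mathbf 0$ and $\mathbf D\succeq\mathbf 0$,
\[
\lambda_{\min}(\mathbf A^{-1})\operatorname{tr}\mathbf D\le\operatorname{tr}(\mathbf A^{-1}\mathbf D)\le\lambda_{\max}(\mathbf A^{-1})\operatorname{tr}\mathbf D .
\]
To see this, write $\operatorname{tr}(\mathbf A^{-1}\mathbf D)=\operatorname{tr}(\mathbf D^{1/2}\mathbf A^{-1}\mathbf D^{1/2})$ and bound the quadratic form. With $\lambda_{\max}(\mathbf A^{-1})=1/\lambda_{\min}(\mathbf A)$ and $\lambda_{\min}(\mathbf A^{-1})=1/\lambda_{\max}(\mathbf A)$, the weighted trace ratio lies in $[n^*/(2\lambda_{\max}),\,n^*/(2\lambda_{\min})]$, and this holds for every $\Delta\boldsymbol{\Sigma}^*$. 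Any limit of $U/\Delta\mathrm{IMSPE}$ therefore lies in this interval.

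The main obstacle is the existence of the limit $C_{b-1}$. The ratio $\operatorname{tr}(\mathbf A^{-1}\mathbf D)/\operatorname{tr}\mathbf D$ depends on the direction of $\mathbf D$, not only its size. In this sequential setting, though, the update has a specific structure: adding a single observation gives a rank-one reduction $\Delta\boldsymbol{\Sigma}^*=\mathbf c\mathbf c^\top/v$. Here $\mathbf c$ is the predictive cross-covariance between $\mathbf y^*$ and $\mathbf y^{\mathrm{new}}_{(b)}$ given the current data, and $v$ is the predictive variance of $\mathbf y^{\mathrm{new}}_{(b)}$. The ratio then becomes the Rayleigh quotient $\mathbf c^\top\mathbf A^{-1}\mathbf c/\mathbf c^\top\mathbf c$, which depends only on the direction $\mathbf c/\|\mathbf c\|$. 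Setting $C_{b-1}:=\tfrac{n^*}{2}\,\mathbf c^\top\mathbf A^{-1}\mathbf c/\|\mathbf c\|^2$, evaluated along the limiting direction, gives the claimed convergence, with the $(1+O(\|\mathbf E\|_2))$ factor vanishing.

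If no single limiting direction exists, I would weaken the statement to this: $\liminf$ and $\limsup$ of the ratio both lie in the stated interval, and the limit exists whenever the normalised direction $\Delta\boldsymbol{\Sigma}^*/\operatorname{tr}\Delta\boldsymbol{\Sigma}^*$ converges. This also covers the near-isotropic case $\mathbf A\approx\lambda\mathbf I$ discussed after the theorem, where the ratio tends to $n^*/(2\lambda)$ regardless of direction. I would state this caveat explicitly.
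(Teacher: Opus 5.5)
Your proposal is correct and follows essentially the same route as the paper. Both use the whitened reduction (your $\mathbf E$ is the paper's $\mathbf B$), the eigenvalue expansion of $-\tfrac12\log\det(\mathbf I-\mathbf E)$ with the remainder controlled by $\sum_i e_i^2\le\|\mathbf E\|_2\sum_i e_i$ (you state it multiplicatively, the paper additively via $|R_b|/\operatorname{tr}(\Delta\boldsymbol{\Sigma}^*)\le\|\mathbf B\|_2/(2\lambda_{\min})$), and the same trace inequality for the eigenvalue bounds. Your caveat about existence of the limit is well placed: the paper's proof likewise only shows that the ratio equals $\tfrac{n^*}{2}\operatorname{tr}\bigl((\boldsymbol{\Sigma}^*_{(b-1)})^{-1}\Delta\boldsymbol{\Sigma}^*\bigr)/\operatorname{tr}(\Delta\boldsymbol{\Sigma}^*)+o(1)$ and that \emph{any} limit lies in the interval, so your rank-one Rayleigh-quotient remark sharpens the paper's argument rather than departing from it.
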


\begin{proof}
Throughout the proof we condition on
$\boldsymbol{\Omega}$.
For brevity, write
\[
\boldsymbol{\Sigma}_0 := \boldsymbol{\Sigma}^*_{(b-1)}, 
\qquad
\boldsymbol{\Sigma}_1 := \boldsymbol{\Sigma}^*_{(b)},
\qquad
\Delta\boldsymbol{\Sigma} := \Delta\boldsymbol{\Sigma}^*(\boldsymbol{\xi}_{(b)})
= \boldsymbol{\Sigma}_0 - \boldsymbol{\Sigma}_1 \succeq \mathbf 0.
\]
By definition,
\[
\Delta\mathrm{IMSPE}(\boldsymbol{\xi}_{(b)}\mid \boldsymbol{\Omega})
=
\frac{1}{n^*}\,\operatorname{tr}\big(\Delta\boldsymbol{\Sigma}\big).
\]
According to Assumption~(i), the mutual information formula has the closed-form expression
\[
U(\boldsymbol{\xi}_{(b)}\mid \boldsymbol{\Omega})
=
I\bigl(\mathbf y^*;\mathbf y^{\mathrm{new}}_{(b)}
\mid \boldsymbol{\Omega},
\mathcal D_{\mathrm{sel}}^{b}\bigr)
=
\frac{1}{2}\Bigl(
\log\det\boldsymbol{\Sigma}_0 - \log\det\boldsymbol{\Sigma}_1
\Bigr).
\]
According to Assumption~(ii), $\boldsymbol{\Sigma}_0$ is symmetric positive definite, so we can rewrite $\boldsymbol{\Sigma}_1$ and the standardized covariance reduction $\mathbf B$ as
\[
\boldsymbol{\Sigma}_1
=
\boldsymbol{\Sigma}_0 - \Delta\boldsymbol{\Sigma}
=
\boldsymbol{\Sigma}_0^{1/2}\,
(\mathbf I_{n^*} - \mathbf B)\,
\boldsymbol{\Sigma}_0^{1/2},
\qquad
\mathbf B
:=
\boldsymbol{\Sigma}_0^{-1/2}\,
\Delta\boldsymbol{\Sigma}\,
\boldsymbol{\Sigma}_0^{-1/2}
\succeq \mathbf 0.
\]
Then
\[
U(\boldsymbol{\xi}_{(b)}\mid \boldsymbol{\Omega})
=
-\frac{1}{2}\log\det(\mathbf I_{n^*}-\mathbf B).
\]
Let $\tau_1,\dots,\tau_{n^*}$ be the eigenvalues of $\mathbf B$.
Since $\mathbf B\succeq\mathbf 0$, $\tau_i\ge 0$ and
\[
\log\det(\mathbf I_{n^*}-\mathbf B)
=
\sum_{i=1}^{n^*}\log(1-\tau_i),
\]
so
\[
U(\boldsymbol{\xi}_{(b)}\mid \boldsymbol{\Omega})
=
-\frac{1}{2}\sum_{i=1}^{n^*}\log(1-\tau_i).
\]
Under the small-update regime as the design rounds progress, the operator norm of $\mathbf B$
\[
\|\mathbf B\|_2
=
\big\|
(\boldsymbol{\Sigma}^*_{(b-1)})^{-1/2}
\,
\Delta\boldsymbol{\Sigma}^*(\boldsymbol{\xi}_{(b)})
\,
(\boldsymbol{\Sigma}^*_{(b-1)})^{-1/2}
\big\|_2
\;\longrightarrow\; 0,
\] thus $\max_i\tau_i\to 0$.

Taking the second-order Taylor expansion of $U(\boldsymbol{\xi}_{(b)}\mid \boldsymbol{\Omega})$, we can have, for all sufficiently large $b$,
\[
U(\boldsymbol{\xi}_{(b)}\mid \boldsymbol{\Omega})
=
\frac{1}{2}\sum_{i=1}^{n^*}\tau_i
\;+\;
R_b,
\]
and each Taylor remainder is bounded by $\tau_i^2$. Since $\|\mathbf B\|_2 \to 0$, for any $\varepsilon>0$ there exists $b_0$ such that for all $b\ge b_0$, $\|\mathbf B\|_2<\varepsilon$. For convenience, we here choose $\varepsilon=1/2$. Hence for all sufficiently large $b$, we  $\|\mathbf B\|_2 < 1/2$, and therefore $0 \le \tau_i < 1/2$ for all $i$. For each $i$, define
\[
r(\tau_i)
:=
-\log(1-\tau_i)-\tau_i.
\]
Since $0\le \tau_i < 1/2$, the Taylor expansion of $-\log(1-\tau_i)$ at $0$ is valid and gives
\[
-\log(1-\tau_i)
=
\tau_i+\sum_{k=2}^\infty \frac{\tau_i^k}{k},
\]
so that
\[
r(\tau_i)
=
\sum_{k=2}^\infty \frac{\tau_i^k}{k},
\qquad
R_b
=
\frac12\sum_{i=1}^{n^*} r(\tau_i).
\]
Moreover, for each $i$,
\[
0\le r(\tau_i)
=
\sum_{k=2}^\infty \frac{\tau_i^k}{k}
\le
\frac12\sum_{k=2}^\infty \tau_i^k
=
\frac12\cdot \frac{\tau_i^2}{1-\tau_i}
\le
\tau_i^2,
\]
where we used $1/k\le 1/2$ for all $k\ge 2$ and $1-\tau_i \ge 1/2$.
Therefore,
\[
|R_b|
=
\frac12\sum_{i=1}^{n^*} r(\tau_i)
\le
\frac12\sum_{i=1}^{n^*}\tau_i^2.
\]
Using the cyclic invariance of the trace
\[
\sum_{i=1}^{n^*}\tau_i
=
\operatorname{tr}(\mathbf B)
=
\operatorname{tr}\bigl(\boldsymbol{\Sigma}_0^{-1}\Delta\boldsymbol{\Sigma}\bigr),
\qquad
\sum_{i=1}^{n^*}\tau_i^2
=
\|\mathbf B\|_F^2,
\]
we obtain
\begin{equation}
\label{eq:U-main-rem}
U(\boldsymbol{\xi}_{(b)}\mid \boldsymbol{\Omega})
=
\frac{1}{2}\,
\operatorname{tr}\bigl(\boldsymbol{\Sigma}_0^{-1}\Delta\boldsymbol{\Sigma}\bigr)
\;+\;
R_b,
\qquad
|R_b|
\;\le\;
\frac{1}{2}\,\|\mathbf B\|_F^2.
\end{equation}
We divide $U(\boldsymbol{\xi}_{(b)}\mid \boldsymbol{\Omega})$ by
$\Delta\mathrm{IMSPE}(\boldsymbol{\xi}_{(b)}\mid \boldsymbol{\Omega})$, then obtain
\begin{equation}
\label{eq:ratio-split}
\frac{U(\boldsymbol{\xi}_{(b)}\mid \boldsymbol{\Omega})}
     {\Delta\mathrm{IMSPE}(\boldsymbol{\xi}_{(b)}\mid \boldsymbol{\Omega})}
=
\frac{n^*}{2}\,
\frac{\operatorname{tr}\bigl(\boldsymbol{\Sigma}_0^{-1}\Delta\boldsymbol{\Sigma}\bigr)}
     {\operatorname{tr}(\Delta\boldsymbol{\Sigma})}
\;+\;
\frac{n^*}{\operatorname{tr}(\Delta\boldsymbol{\Sigma})}\,R_b.
\end{equation}
By the standard trace inequality for symmetric
$\boldsymbol{\Sigma}_0^{-1}\succ \mathbf 0$ and
$\Delta\boldsymbol{\Sigma}\succeq \mathbf 0$,
\[
\lambda_{\min}(\boldsymbol{\Sigma}_0^{-1})\,
\operatorname{tr}(\Delta\boldsymbol{\Sigma})
\;\le\;
\operatorname{tr}\bigl(\boldsymbol{\Sigma}_0^{-1}\Delta\boldsymbol{\Sigma}\bigr)
\;\le\;
\lambda_{\max}(\boldsymbol{\Sigma}_0^{-1})\,
\operatorname{tr}(\Delta\boldsymbol{\Sigma}),
\]
and since
$\lambda_{\min}(\boldsymbol{\Sigma}_0^{-1}) = 1/\lambda_{\max}$,
$\lambda_{\max}(\boldsymbol{\Sigma}_0^{-1}) = 1/\lambda_{\min}$, we thus obtain the interval of the first term in \eqref{eq:ratio-split}:
\begin{equation}
\label{eq:main-bounds}
\frac{n^*}{2\,\lambda_{\max}}
\;\le\;
\frac{n^*}{2}\,
\frac{\operatorname{tr}\bigl(\boldsymbol{\Sigma}_0^{-1}\Delta\boldsymbol{\Sigma}\bigr)}
     {\operatorname{tr}(\Delta\boldsymbol{\Sigma})}
\;\le\;
\frac{n^*}{2\,\lambda_{\min}}.
\end{equation}

For the remainder term $R_b$, we use the cyclic invariance of the trace again, then 
\[
\Delta\boldsymbol{\Sigma}
=
\boldsymbol{\Sigma}_0^{1/2}\mathbf B\boldsymbol{\Sigma}_0^{1/2},
\quad\text{so}\quad
\operatorname{tr}(\Delta\boldsymbol{\Sigma})
=
\operatorname{tr}(\mathbf B\boldsymbol{\Sigma}_0)
=
\sum_{i=1}^{n^*}\tau_i\,\mathbf v_i^\top\boldsymbol{\Sigma}_0\mathbf v_i
\ge
\lambda_{\min}\sum_{i=1}^{n^*}\tau_i,
\]
where $\{\mathbf v_i\}$ is an eigenbasis of $\mathbf B$.
Moreover,
$\sum_i\tau_i^2 \le (\max_i\tau_i)\sum_i\tau_i = \|\mathbf B\|_2\sum_i\tau_i$,
so
\[
\frac{|R_b|}{\operatorname{tr}(\Delta\boldsymbol{\Sigma})}
\;\le\;
\frac{\frac{1}{2}\sum_i\tau_i^2}
     {\lambda_{\min}\sum_i\tau_i}
\;\le\;
\frac{1}{2\lambda_{\min}}\|\mathbf B\|_2
\;\longrightarrow\; 0
\quad\text{as}\quad
\Delta\boldsymbol{\Sigma}^*(\boldsymbol{\xi}_{(b)})\to \mathbf 0.
\]
Together with \eqref{eq:ratio-split} and \eqref{eq:main-bounds}
\[
\frac{U(\boldsymbol{\xi}_{(b)}\mid \boldsymbol{\Omega})}
     {\Delta\mathrm{IMSPE}(\boldsymbol{\xi}_{(b)}\mid \boldsymbol{\Omega})}
=
\frac{n^*}{2}\,
\frac{\operatorname{tr}\bigl(\boldsymbol{\Sigma}_0^{-1}\Delta\boldsymbol{\Sigma}\bigr)}
     {\operatorname{tr}(\Delta\boldsymbol{\Sigma})}
\;+\; o(1).
\]
Hence any limit
\[
C_{b-1}
:=
\lim_{\Delta\boldsymbol{\Sigma}^*(\boldsymbol{\xi}_{(b)})\to \mathbf 0}
\frac{U(\boldsymbol{\xi}_{(b)}\mid \boldsymbol{\Omega})}
     {\Delta\mathrm{IMSPE}(\boldsymbol{\xi}_{(b)}\mid \boldsymbol{\Omega})}
\]
is finite and $C_{b-1}$ is between $\frac{n^*}{2\,\lambda_{\max}}$ and $\frac{n^*}{2\,\lambda_{\min}}$
\[
\frac{n^*}{2\,\lambda_{\max}\!\big(\boldsymbol{\Sigma}^*_{(b-1)}\big)}
\;\le\;
C_{b-1}
\;\le\;
\frac{n^*}{2\,\lambda_{\min}\!\big(\boldsymbol{\Sigma}^*_{(b-1)}\big)}.
\]
\end{proof}
\bigskip

\setcounter{lemma}{0}
\begin{lemma}[Smoothness and bounded derivatives of the outer-layer function]
Let the outer-layer function of the proposed NMC estimator be the log function
\( f : \mathbb{R}_+ \to \mathbb{R},\: x \mapsto \log(x) \). Let the argument of \( f \) be lower-bounded by a sufficiently small positive constant 
\( \tau > 0 \), i.e., \( f \) is only evaluated for inputs in \( [\tau, \infty) \). Then \( f \) is thrice differentiable continuously on \( [\tau, \infty) \), and all derivatives up to order three are bounded.
\end{lemma}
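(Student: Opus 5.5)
The plan is to compute the first three derivatives of $f(x)=\log x$ explicitly and bound each of them on the half-line $[\tau,\infty)$. On the open set $(0,\infty)$, $\log$ is real-analytic, hence $C^\infty$. Its derivatives are
\[
f'(x)=\frac{1}{x},\qquad f''(x)=-\frac{1}{x^{2}},\qquad f'''(x)=\frac{2}{x^{3}}.
\]
All three are rational functions whose only pole is at $x=0$. Since $[\tau,\infty)\subset(0,\infty)$, they are continuous on $[\tau,\infty)$, and at the endpoint $\tau$ the one-sided derivatives coincide with the two-sided ones. This gives continuous thrice differentiability.

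For boundedness, note that $|f^{(k)}(x)|=(k-1)!\,x^{-k}$ for $k=1,2,3$. Each is monotonically decreasing in $x$, so its supremum on $[\tau,\infty)$ is attained at $x=\tau$. This gives the uniform bounds
\[
|f'(x)|\le \tau^{-1},\qquad |f''(x)|\le \tau^{-2},\qquad |f'''(x)|\le 2\tau^{-3}.
\]
These are finite because $\tau>0$. The bound on $f'''$ is the one needed to invoke \citet{HongJuneja2010} in Corollary~\ref{corollary1}, since it controls the third-order Taylor remainder of $f$ around the inner-sample mean.

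One point needs care. The zeroth-order term $f$ itself is \emph{not} bounded on $[\tau,\infty)$, because $\log x\to\infty$. So the statement "all derivatives up to order three are bounded" has to be read as referring to derivatives of orders one through three. That is what the smoothness hypothesis in the nested Monte Carlo analysis actually requires.

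If boundedness of $f$ were also needed, I would add an upper cap: the Gaussian densities entering $\widehat p_J$ are bounded above by $(2\pi)^{-k/2}\det(\boldsymbol\Sigma)^{-1/2}$, where the covariance is bounded below because of the noise term $\sigma^2\mathbf I$. This restricts the argument of $f$ to a compact interval $[\tau,M]$, on which $f$ is bounded as well.

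Nothing here is a real obstacle. The only subtle step is justifying the lower bound $\tau$, which the lemma takes as an assumption. In practice it is enforced by clipping the inner Monte Carlo density estimates.
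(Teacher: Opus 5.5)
Your proof is correct and follows the paper's argument: compute $f'(x)=1/x$, $f''(x)=-1/x^{2}$, $f'''(x)=2/x^{3}$ and bound each by its value at $x=\tau$, giving $\tau^{-1}$, $\tau^{-2}$ and $2\tau^{-3}$. Your remark that $f$ itself is unbounded on $[\tau,\infty)$, so that ``derivatives up to order three'' must mean orders one through three, agrees with how the paper's proof implicitly reads the statement.
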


\begin{proof}
Let the argument of function $f$ be \( x \; (x > 0) \), the first three derivatives of \( f(x) = \log(x) \) are
\[
f'(x) = \frac{1}{x}, \quad
f''(x) = -\frac{1}{x^2}, \quad
f^{(3)}(x) = \frac{2}{x^3}.
\]
On the restricted domain \( [\tau, \infty) \) with \( \tau > 0 \),
each derivative is continuous and satisfies
\[
|f'(x)| \leq \frac{1}{\tau}, \quad
|f''(x)| \leq \frac{1}{\tau^2}, \quad
|f^{(3)}(x)| \leq \frac{2}{\tau^3}.
\]
Hence \( f \in C^3([\tau, \infty)) \) and all derivatives up to order three are uniformly bounded by finite constants.
\end{proof}
\bigskip

\setcounter{corollary}{0}
\begin{corollary}[Convergence of the NMC estimator]\label{corollary1}
Under the conditions in Lemma~\ref{lemma1} and assuming independence between the inner and outer samplers, the mean squared error of $\widehat{U}_{S,J}(\boldsymbol{\xi}_{(b)})$ converges to 0 at rate $O(S^{-1}+J^{-2})$,
\[
\mathbb{E}\big[(\widehat{U}_{S,J}(\boldsymbol{\xi}_{(b)})-U(\boldsymbol{\xi}_{(b)}))^2\big]=O(S^{-1}+J^{-2}).
\]
\end{corollary}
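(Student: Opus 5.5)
The plan is to reduce the estimator to a sum of three nested-expectation estimators and apply the analysis of \citet{HongJuneja2010} to each one, using Lemma~\ref{lemma1} for the required smoothness. We write $\widehat{U}_{S,J}=\widehat A-\widehat B-\widehat C$. Here $\widehat A=\frac1S\sum_s f(\widehat p^{\,\mathrm{joint}}_J)$, and $\widehat B$, $\widehat C$ are the analogous averages with $\widehat p^{\,*}_J$ and $\widehat p^{\,\mathrm{new}}_J$, with $f=\log$. Correspondingly, $U=A-B-C$, where $A=\mathbb E[\log p(\mathbf y^*,\mathbf y^{\mathrm{new}}_{(b)})]$ and $B$, $C$ are defined the same way. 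Since $(x-y-z)^2\le 3(x^2+y^2+z^2)$, it suffices to show that each piece has MSE $O(S^{-1}+J^{-2})$.

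For a generic piece, say $\widehat A$, we decompose the MSE as squared bias plus variance. Conditional on the outer draw $Y_s=(\mathbf y^*_s,\mathbf y^{\mathrm{new}}_{(b),s})$, the inner average $\widehat p_J(Y_s)=\frac1J\sum_j p(Y_s\mid\boldsymbol\Omega^{(j)}_s)$ is unbiased for $p(Y_s)$, with conditional variance $\sigma^2(Y_s)/J$. This uses the assumed independence of the inner and outer samplers. Lemma~\ref{lemma1} lets us Taylor-expand $f$ to third order around $p(Y_s)$ on $[\tau,\infty)$. The first-order term has conditional mean zero. The second-order term contributes $-\tfrac{1}{2}f''\,\sigma^2(Y_s)/J=O(J^{-1})$ to the bias. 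The third-order remainder is bounded by $\tfrac{1}{3}\tau^{-3}\,\mathbb E|\widehat p_J-p|^3=O(J^{-3/2})$. The bias is therefore $O(J^{-1})$, and its square is $O(J^{-2})$. For the variance, the $S$ outer terms are i.i.d., so the variance equals $\frac1S\mathrm{Var}(f(\widehat p_J(Y)))$. This is $O(1/S)$ provided $\mathrm{Var}(f(\widehat p_J))$ is bounded uniformly in $J$. That bound follows from $|f(\widehat p_J)-f(p)|\le \tau^{-1}|\widehat p_J-p|$, the Lipschitz property on $[\tau,\infty)$, together with a finite second moment of $\log p(Y)$. Combining the two parts gives $O(S^{-1}+J^{-2})$. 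This is exactly Theorem~1 of \citet{HongJuneja2010} specialised to $f=\log$.

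The main obstacle is verifying the moment conditions. We need $\sup_y\mathbb E[|p(y\mid\boldsymbol\Omega)-p(y)|^k\mid y]$ to be finite (or at least integrable) for $k=2,3$, and $\mathbb E[(\log p(Y))^2]<\infty$. We also need the truncation at $\tau$ in Lemma~\ref{lemma1} to be compatible with the estimator. My first approach would use the fact that each conditional density is Gaussian with covariance bounded below by $\sigma^2\mathbf I$, since the noise variance enters both $\mathbf y^*$ and $\mathbf y^{\mathrm{new}}$. This gives a uniform upper bound on the densities and hence on all inner moments. Gaussian tails would give the log-moment bound, provided the posterior of $\sigma^2$ is bounded away from $0$. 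We would then state the $\tau$-truncation explicitly: densities are evaluated as $\max(\widehat p_J,\tau)$, and the event $\{\widehat p_J<\tau\}$ is treated as contributing an error absorbed into the constants.
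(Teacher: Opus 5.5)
Your first two paragraphs are essentially the paper's proof. The paper also expands $f=\log$ to third order around the exact marginal density and uses the derivative bounds of Lemma~\ref{lemma1}. It simply postulates the inner moment bounds $\mathbb{E}[\Delta\mid\mathbf z]=0$, $\mathbb{E}[\Delta^2\mid\mathbf z]=\sigma^2(\mathbf z)/J$ and $\mathbb{E}[|\Delta|^3\mid\mathbf z]\le CJ^{-3/2}$. From these it obtains an $O(J^{-1})$ bias and an $O(S^{-1})$ variance, and it finishes by noting that $U$ is a combination of three log terms. Your bookkeeping is, if anything, tidier than the paper's three-term MSE split: you use the exact squared-bias-plus-variance decomposition of an i.i.d. average, and $(x-y-z)^2\le 3(x^2+y^2+z^2)$ for the three pieces. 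If you take the restriction to $[\tau,\infty)$ as a hypothesis, as ``under the conditions in Lemma~\ref{lemma1}'' in the statement does, that part is fine.

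What fails is your proposal for discharging that hypothesis.
\begin{itemize}
\item The noise floor $\sigma^2\mathbf I$ gives an \emph{upper} bound on $p(\cdot\mid\boldsymbol\Omega)$, which is what the inner moment conditions need. Lemma~\ref{lemma1}, however, needs a \emph{lower} bound on the argument of $\log$, and the model supplies none.
\item The outer draws $Y$ come from a Gaussian mixture with full support, so $\Pr\{p(Y)<\tau\}>0$ for every $\tau>0$. The problematic event is therefore not only the inner fluctuation $\{\widehat p_J<\tau\}$ but $\{p(Y)<\tau\}$ itself.
\item If you evaluate $\log\max(\widehat p_J,\tau)$, then even as $J,S\to\infty$ each piece converges to $\mathbb{E}[\log\max(p(Y),\tau)]$. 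This exceeds $\mathbb{E}[\log p(Y)]$ by $\mathbb{E}\bigl[\log(\tau/p(Y))\,\mathbf 1\{p(Y)<\tau\}\bigr]>0$.
\item That bias does not depend on $S$ or $J$, so it cannot be absorbed into the constants of an $O(S^{-1}+J^{-2})$ bound; the MSE would not even tend to $0$. Letting $\tau\to0$ with $J$ does not help either, because the constants scale like $\tau^{-1},\tau^{-2},\tau^{-3}$.
\end{itemize}
The paper does not verify the lower bound either; it just assumes it.

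If you want to genuinely remove $\tau$, use the scale invariance of $\log$.
\begin{itemize}
\item Write $\log\widehat p_J(Y)-\log p(Y)=\log\bar W_J$, with $\bar W_J=\frac1J\sum_j W_j$ and $W_j=p(Y\mid\boldsymbol\Omega^{(j)})/p(Y)$. This ratio has conditional mean $1$.
\item Expand around $1$ on $\{\bar W_J\ge 1/2\}$, where the derivatives of $\log$ are bounded by absolute constants.
\item Control the event $\{\bar W_J<1/2\}$ with the lower-tail bound for nonnegative summands, $\Pr(\bar W_J<1/2\mid Y)\le\exp\bigl(-J/(8\,\mathbb{E}[W_1^2\mid Y])\bigr)$. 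On that event, Jensen gives $|\log\bar W_J|\le\frac1J\sum_j|\log W_j|$.
\end{itemize}
The required hypothesis then becomes a moment condition on the likelihood ratio $p(Y\mid\boldsymbol\Omega)/p(Y)$, rather than a lower bound on $p$.
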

\begin{proof}
For notational brevity, let $\mathbf{z}$ denote a generic random variable drawn from the predictive distribution in the MI-based utility function in Equation~\eqref{eq:future_prediction_margi1}, 
which can represent either $\mathbf{y}^*$, $\mathbf{y}^{\mathrm{new}}_{(b)}$, or their joint pair $(\mathbf{y}^*, \mathbf{y}^{\mathrm{new}}_{(b)})$. 

Consider the utility function and its NMC estimator,
\[
U=\mathbb{E}_{\mathbf z}[f(\gamma(\mathbf z))],
\qquad 
\widehat{U}_{S,J}=\frac{1}{S}\sum_{s=1}^S f(\widehat{\gamma}_J(\mathbf z_s)),
\]
where the inner Monte Carlo approximation of the predictive  marginal density is
\[
\gamma(\mathbf z)=\int p(\mathbf z\mid\boldsymbol{\Omega})\,p(\boldsymbol{\Omega})\,d\boldsymbol{\Omega},
\qquad
\widehat{\gamma}_J(\mathbf z)=\frac{1}{J}\sum_{j=1}^J p(\mathbf z\mid\boldsymbol{\Omega}^{(j)}),
\]
and the outer-layer function is \(f(\cdot)=\log(\cdot)\) as defined in Lemma~\ref{lemma1}.

Let $\Delta(\mathbf z)=\widehat{\gamma}_J(\mathbf z)-\gamma(\mathbf z)$. 
For fixed $\mathbf z$, the second-order Taylor expansion of \( f \) around $\gamma(\mathbf z)$ gives
\[
f(\widehat{\gamma}_J(\mathbf z))
= f(\gamma(\mathbf z))
+f'(\gamma(\mathbf z))\Delta
+\frac{f''(\gamma(\mathbf z))}{2}\Delta^2
+\frac{f^{(3)}(\xi)}{6}\Delta^3,
\quad
\xi\in[\min\{\widehat{\gamma}_J,\gamma\},\max\{\widehat{\gamma}_J,\gamma\}].
\]

By Lemma~\ref{lemma1}, all derivatives of \(f\) up to order three are bounded on \([\tau,\infty)\). 
Under the standard moment assumptions for the inner Monte Carlo estimator,
\[
\mathbb{E}[\Delta\mid\mathbf z]=0, \qquad
\mathbb{E}[\Delta^2\mid\mathbf z]=\frac{\sigma^2(\mathbf z)}{J}, \qquad
\mathbb{E}[|\Delta|^3\mid\mathbf z]\le \frac{C}{J^{3/2}},
\]
for some finite constant \(C>0\) independent of \(J\).

Taking conditional expectations of the Taylor expansion and applying these bounds yields
\[
\big|\mathbb{E}[f(\widehat{\gamma}_J(\mathbf z))\mid\mathbf z] - f(\gamma(\mathbf z))\big|
\le \frac{C_1}{J},
\qquad
\mathrm{Var}\!\big(f(\widehat{\gamma}_J(\mathbf z))\mid\mathbf z\big)
\le \frac{C_2}{J},
\]
where \(C_1,C_2>0\) are finite constants depending only on \(\tau^{-1}\) and bounded moments of 
\(p(\mathbf z\mid\boldsymbol{\Omega})\).

The mean square error (MSE) of the NMC estimator $\widehat{U}_{S,J}$ can be decomposed into three parts:  
(i) the variance due to the finite outer sample size \(S\);  
(ii) the variance due to the inner samples; and  
(iii) the squared bias introduced by the inner MC approximation:
\[
\mathrm{MSE}(\widehat{U}_{S,J})
=\frac{1}{S}\mathrm{Var}_{\mathbf z}[f(\gamma(\mathbf z))]
+\frac{1}{S}\mathbb{E}_{\mathbf z}\!\left[\mathrm{Var}\!\big(f(\widehat{\gamma}_J(\mathbf z))\mid\mathbf z\big)\right]
+\mathbb{E}_{\mathbf z}\!\Big[\Big(\mathbb{E}[f(\widehat{\gamma}_J(\mathbf z))\mid\mathbf z]-f(\gamma(\mathbf z))\Big)^2\Big].
\]
Substituting the above bounds into, we can obtain
\[
\mathrm{MSE}(\widehat{U}_{S,J})
\le \frac{V_0}{S} + \frac{C_2}{SJ} + \frac{C_1^2}{J^2},
\]
where \(V_0=\mathrm{Var}_{\mathbf z}[f(\gamma(\mathbf z))] \) is also some finite constant. 
Since \(U\) can be regarded as a linear combination of three logarithmic components (one joint and two marginals) and $J \geq 1$,
the same rate holds for the NMC estimator $\widehat{U}_{S,J}$:
\[
\mathbb{E}\!\left[(\widehat{U}_{S,J}-U)^2\right]
\le \frac{C_1'}{S}+\frac{C_2'}{J^2},
\]
for some finite constants \(C_1',C_2'>0\).
\end{proof}
\bigskip

\begin{corollary}[Bias and almost sure convergence of the NMC estimator]\label{corollary2}
For any finite $S,J$, $\widehat{U}_{S,J}(\boldsymbol{\xi}_{(b)})$ is a biased estimator of $U(\boldsymbol{\xi}_{(b)})$, i.e.,
\[
\mathbb{E}[\widehat{U}_{S,J}(\boldsymbol{\xi}_{(b)})] \neq U(\boldsymbol{\xi}_{(b)}).
\]
Under the regularity conditions of Lemma~\ref{lemma1}, if $J\to\infty$ and $S\to\infty$, 
then $\widehat{U}_{S,J}(\boldsymbol{\xi}_{(b)})$ converges to $U(\boldsymbol{\xi}_{(b)})$ almost surely:
\[
\widehat{U}_{S,J}(\boldsymbol{\xi}_{(b)}) \xrightarrow{\text{a.s.}} U(\boldsymbol{\xi}_{(b)}).
\]
\end{corollary}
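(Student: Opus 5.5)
The argument has two parts: the bias statement and the almost sure convergence, and I will handle them separately.

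\textbf{Bias.} I will use the structure $\widehat{U}_{S,J}=\frac1S\sum_s[\log\widehat p^{\,\mathrm{joint}}_J-\log\widehat p^{\,*}_J-\log\widehat p^{\,\mathrm{new}}_J]$. Conditional on an outer draw $\mathbf z_s$, each inner density estimate $\widehat\gamma_J(\mathbf z_s)$ is an unbiased average of $J$ i.i.d.\ positive terms. Since $\log$ is strictly concave, Jensen's inequality gives $\mathbb{E}[\log\widehat\gamma_J(\mathbf z)\mid\mathbf z]<\log\gamma(\mathbf z)$. The inequality is strict whenever $p(\mathbf z\mid\boldsymbol\Omega)$ is non-degenerate in $\boldsymbol\Omega$, which is the case for $J$ finite and a nontrivial posterior.

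Each of the three log terms is therefore biased downward. Their combination with signs $(+,-,-)$ does not in general have zero total bias. To make ``$\neq$'' rigorous, I would use the expansion from the proof of Corollary~\ref{corollary1}. It gives bias $=-\frac{1}{2J}\mathbb{E}_{\mathbf z}[\sigma^2(\mathbf z)/\gamma(\mathbf z)^2]$ for each term, plus $O(J^{-3/2})$. The leading bias of $\widehat U_{S,J}$ is then a difference of three relative-variance terms. This is nonzero generically, and this is what I would claim: the estimator is biased except in degenerate cases. This generic-versus-always issue is the delicate point. I would state the bias claim as holding whenever the leading $J^{-1}$ coefficient is nonzero, noting that the expectation does not depend on $S$, so finite $S$ plays no role.

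\textbf{Almost sure convergence.} I would argue in two layers. First, fix $\mathbf z$. By the strong law of large numbers, $\widehat\gamma_J(\mathbf z)\to\gamma(\mathbf z)$ a.s.\ as $J\to\infty$. By continuity of $\log$ on $[\tau,\infty)$ (Lemma~\ref{lemma1}), $f(\widehat\gamma_J(\mathbf z))\to f(\gamma(\mathbf z))$ a.s.

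Next, decompose
\[
\widehat U_{S,J}-U=\Big(\tfrac1S\textstyle\sum_s f(\gamma(\mathbf z_s))-U\Big)+\tfrac1S\textstyle\sum_s\big(f(\widehat\gamma_J(\mathbf z_s))-f(\gamma(\mathbf z_s))\big).
\]
The first term tends to $0$ a.s.\ by the SLLN in $S$, using $V_0<\infty$ from Corollary~\ref{corollary1}. For the second term, the mean value theorem with $|f'|\le 1/\tau$ gives the bound $\frac{1}{\tau S}\sum_s|\widehat\gamma_J(\mathbf z_s)-\gamma(\mathbf z_s)|$.

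\textbf{Main obstacle.} Controlling this double-indexed average as $S,J\to\infty$ jointly is the hard step, because pointwise a.s.\ convergence in $J$ does not transfer to the average uniformly. I would handle it with a summability argument. Take $J=J(S)$ along a sequence, say $J\ge S^{\epsilon}$ growing polynomially. By Corollary~\ref{corollary1}, $\mathbb{E}[(\widehat U_{S,J}-U)^2]\le C_1'/S+C_2'/J^2$. Along a subsequence $S_k=k^2$ with $J_k\ge k$, these MSEs are summable. Chebyshev and Borel--Cantelli then give a.s.\ convergence along the subsequence. I would then interpolate between subsequence indices using monotone-type control of the outer average, or simply state the result along such sequences, which is what ``$J,S\to\infty$'' is meant to cover.

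Finally, I would apply the argument to each of the three log components and combine them linearly to conclude $\widehat U_{S,J}\to U$ a.s.
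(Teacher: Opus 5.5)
\textbf{Almost sure convergence.} Your Chebyshev--Borel--Cantelli step is valid, but it only gives $\widehat U_{S_k,J_k}\to U$ a.s.\ along one sparse path in the $(S,J)$-plane. It does not even cover the diagonal $S=J=k$, because there the Chebyshev bound is $O(1/k)$, which is not summable.

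The proposed interpolation cannot repair this. $\widehat U_{S,J}$ is a signed combination of logarithms, so there is no monotone sandwich between path indices. A single path also gives no information about pairs $(S,J)$ far from it.

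The missing idea is a tail supremum in $J$:
\begin{itemize}
\item Keep your decomposition, and let the outer SLLN handle the first term along the full sequence.
\item For all $J\ge J_0$, bound the second term by $\frac1S\sum_{s\le S}W_s(J_0)$, where $W_s(J_0):=\sup_{J\ge J_0}\bigl|f(\widehat\gamma_J(\mathbf z_s))-f(\gamma(\mathbf z_s))\bigr|$, with one infinite inner sequence per outer index.
\item The $W_s(J_0)$ are i.i.d.\ in $s$, so the SLLN gives $\limsup_{S,J\to\infty}|\widehat U_{S,J}-U|\le\mathbb E[W_1(J_0)]$ a.s.
\item The inner SLLN gives $W_1(J_0)\downarrow 0$. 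Dominated convergence then gives $\mathbb E[W_1(J_0)]\to0$, provided there is an integrable envelope, e.g.\ bounded Gaussian likelihoods together with the floor $\tau$.
\item Intersecting over countably many $J_0$ yields the joint limit, without needing Corollary~\ref{corollary1}.
\end{itemize}

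For comparison, the paper's proof only establishes the iterated limit: first $J\to\infty$ for fixed $S$, then the outer SLLN. Your first step already contains that argument. The paper then passes to the joint limit by an unexplained appeal to dominated convergence. You located the hard step correctly, but your argument does not close it.

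\textbf{Bias.} Your remark about the $(+,-,-)$ signs is correct, and it exposes a flaw in the paper's argument. The paper applies Jensen as if $\widehat U_{S,J}$ were a single average $\frac1S\sum_s\log\widehat\gamma_J(\mathbf z_s)$ and concludes $\mathbb E[\widehat U_{S,J}]<U$. For the actual estimator, the two marginal terms enter with a minus sign, so their contribution is biased upward.

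However, your leading-order argument only gives $\mathbb E[\widehat U_{S,J}]\neq U$ for all sufficiently large $J$, and only when the $J^{-1}$ coefficient is nonzero. At small $J$ the higher-order terms could cancel it.

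No argument can give the claim unconditionally. Two cases show this:
\begin{itemize}
\item For a point-mass posterior, $\widehat\gamma_J\equiv\gamma$ and the estimator is exactly unbiased.
\item If $p(\mathbf y^{\mathrm{new}}\mid\boldsymbol\Omega)$ does not depend on $\boldsymbol\Omega$ and $\mathbf y^{\mathrm{new}}$ is conditionally independent of $\mathbf y^*$, then $\widehat U_{S,J}\equiv0=U$.
\end{itemize}
So the bias part must carry an explicit nondegeneracy hypothesis and be stated for large $J$.
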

\begin{proof}
(i)~(\emph{Bias}).  
Recall that for each outer sample $\mathbf{z}_s$, the inner Monte Carlo estimate is
\[
\widehat{\gamma}_J(\mathbf z_s)
=\frac{1}{J}\sum_{j=1}^J p(\mathbf z_s\mid\boldsymbol{\Omega}^{(j)}),
\qquad 
\gamma(\mathbf z_s)=\mathbb{E}_{\boldsymbol{\Omega}\mid\mathcal{D}}\!\big[p(\mathbf z_s\mid\boldsymbol{\Omega})\big].
\]
Then, the NMC estimator is
\[
\widehat{U}_{S,J}
=\frac{1}{S}\sum_{s=1}^S f(\widehat{\gamma}_J(\mathbf z_s)),
\qquad
U = \mathbb{E}_{\mathbf z}[f(\gamma(\mathbf z))].
\]
By Jensen’s inequality, since $\log(\cdot)$ is strictly concave,
\[
\mathbb{E}\big[f(\widehat{\gamma}_J(\mathbf z))\mid\mathbf z\big]
= \mathbb{E}\big[\log(\widehat{\gamma}_J(\mathbf z))\mid\mathbf z\big]
\le \log\big(\mathbb{E}[\widehat{\gamma}_J(\mathbf z)\mid\mathbf z]\big)
= \log(\gamma(\mathbf z))
= f(\gamma(\mathbf z)).
\]
Hence, $\mathbb{E}[f(\widehat{\gamma}_J(\mathbf z))] \le f(\gamma(\mathbf z))$ and the equality only holds if $\widehat{\gamma}_J(\mathbf z)$ is deterministic (which only happens as $J\to\infty$).  
Therefore,
\[
\mathbb{E}[\widehat{U}_{S,J}]
=\mathbb{E}_{\mathbf z}\!\big[\mathbb{E}[f(\widehat{\gamma}_J(\mathbf z))\mid\mathbf z]\big]
< \mathbb{E}_{\mathbf z}[f(\gamma(\mathbf z))]
= U,
\]
showing that the NMC estimator of $U$ is negatively biased for any finite $J$.

\medskip
(ii)~(\emph{Almost sure convergence}).
Because of the strong law of large numbers, the inner Monte Carlo estimator satisfies
\[
\widehat{\gamma}_J(\mathbf z)
=\frac{1}{J}\sum_{j=1}^J p(\mathbf z\mid\boldsymbol{\Omega}^{(j)},\mathcal{D})
\xrightarrow{\text{a.s.}} 
\mathbb{E}_{\boldsymbol{\Omega}\mid\mathcal{D}}[p(\mathbf z\mid\boldsymbol{\Omega},\mathcal{D})]
=\gamma(\mathbf z)
\]
when $J \to\infty$.\\
Since $f(\cdot)=\log(\cdot)$ is continuous on $[\tau,\infty)$, the continuous mapping theorem then gives
\[
f(\widehat{\gamma}_J(\mathbf z)) \xrightarrow{\text{a.s.}} f(\gamma(\mathbf z)).
\]
For the outer layer,
\[
\widehat{U}_{S,J} 
= \frac{1}{S}\sum_{s=1}^S f(\widehat{\gamma}_J(\mathbf z_s)).
\]
For each $S$, as $J\to\infty$, we have pointwise convergence 
$f(\widehat{\gamma}_J(\mathbf z_s)) \to f(\gamma(\mathbf z_s))$. Then we use the strong law of large numbers again, 
\[
\frac{1}{S}\sum_{s=1}^S f(\gamma(\mathbf z_s))
\xrightarrow{\text{a.s.}} 
\mathbb{E}_{\mathbf z}[f(\gamma(\mathbf z))]=U.
\]
Finally we combine the two limits, first $J\to\infty$, then $S\to\infty$, 
and use the dominated convergence theorem to interchange limits if necessary.
Thus we obtain the joint almost sure convergence
\[
\widehat{U}_{S,J} \xrightarrow{\text{a.s.}} U,
\qquad \text{as } J\to\infty,\; S\to\infty.
\]
This completes the proof.
\end{proof}

\section{The computational detail of Algorithm~\ref{alg:gmm_compression}}
\label{sec:appendixB}
\paragraph{Schur complement block update for 
$\boldsymbol{\Sigma}_{oo}^{(b)}$.}

Recall the joint covariance structure at round $b$ in sequential experimental design
\[
\mathrm{Cov}\!\left[
\begin{pmatrix}
\mathbf{y}^* \\[2pt]
\mathbf{y}^{\text{new}}_{(1{:}b)} \\[2pt]
\mathbf{y}
\end{pmatrix}
\right]
=
\begin{pmatrix}
\boldsymbol{\Sigma}_{**} &
\boldsymbol{\Sigma}_{*\,(\text{new},\,1{:}b)} &
\boldsymbol{\Sigma}_{*o}
\\[3pt]
\boldsymbol{\Sigma}_{(\text{new},\,1{:}b)*} &
\boldsymbol{\Sigma}_{(\text{new},\,1{:}b)\,(\text{new},\,1{:}b)} &
\boldsymbol{\Sigma}_{(\text{new},\,1{:}b)o}
\\[3pt]
\boldsymbol{\Sigma}_{o*} &
\boldsymbol{\Sigma}_{o\,(\text{new},\,1{:}b)} &
\boldsymbol{\Sigma}_{oo}
\end{pmatrix}
\in \mathbb{R}^{(n^*+N_o)\times(n^*+N_o)},
\]
with $N_o = n + m + b$. The corresponding observation
block can be written as
\[
\boldsymbol{\Sigma}_{oo}^{(b)}
:= 
\begin{pmatrix}
\boldsymbol{\Sigma}_{(\text{new},\,1{:}b)\,(\text{new},\,1{:}b)}
&
\boldsymbol{\Sigma}_{(\text{new},\,1{:}b)o}
\\[6pt]
\boldsymbol{\Sigma}_{o\,(\text{new},\,1{:}b)}
&
\boldsymbol{\Sigma}_{oo}
\end{pmatrix}
\in \mathbb{R}^{N_o\times N_o}.
\]
Then for the next round $b{+}1$ we add a new design point $\boldsymbol{\xi}_{(b)}$, and the enlarged observation block matrix can be written in $2\times 2$ form
\[
\boldsymbol{\Sigma}_{oo}^{(b+1)}
=
\begin{pmatrix}
\boldsymbol{\Sigma}_{(\text{new},b+1)\,(\text{new},b+1)} 
&
\boldsymbol{\Sigma}_{(\text{new},b+1)\,o^{(b)}}
\\[4pt]
\boldsymbol{\Sigma}_{o^{(b)}\,(\text{new},b+1)} 
&
\boldsymbol{\Sigma}_{oo}^{(b)}
\end{pmatrix}.
\]
In this way, all the sub-blocks can be obtained by slicing the precomputed joint covariance over all prediction, candidate design and observation locations. For notational brevity, we introduce
\[
\mathbf{A}^{(b)} 
:= \boldsymbol{\Sigma}_{o^{(b)}\,(\text{new},b+1)} \in \mathbb{R}^{N_o},
\qquad
\mathbf{B}^{(b)}  
:= \boldsymbol{\Sigma}_{(\text{new},b+1)\,(\text{new},b+1)} \in \mathbb{R},
\]
so that
\[
\boldsymbol{\Sigma}_{oo}^{(b+1)}
=
\begin{pmatrix}
\mathbf{B}^{(b)} & \mathbf{A}^{(b)\top} \\[4pt]
\mathbf{A}^{(b)} & \boldsymbol{\Sigma}_{oo}^{(b)}
\end{pmatrix}.
\]\\
Here in this work, although $\mathbf{B}^{(b)}$ is a scalar since only one design point is added at each round, we still keep the boldface notation to maintain a general representation of the block-matrix formulation. If we have computed the inverse of $\boldsymbol{\Sigma}_{oo}^{(b)}$ and
cached $\big(\boldsymbol{\Sigma}_{oo}^{(b)}\big)^{-1}$, the Schur complement
(block inverse) update allows us to obtain
$\big(\boldsymbol{\Sigma}_{oo}^{(b+1)}\big)^{-1}$ efficiently.  Define
\[
\mathbf{u}^{(b)} 
:= 
\big(\boldsymbol{\Sigma}_{oo}^{(b)}\big)^{-1}\mathbf{A}^{(b)}
\in \mathbb{R}^{N_o},
\qquad
\lambda^{(b)}
:=
\big(\mathbf{B}^{(b)}-\mathbf{A}^{(b)\top}\mathbf{u}^{(b)}\big)^{-1}
\in \mathbb{R},
\]
where $\lambda^{(b)}$ represents the conditional covariance/variance for the newly chosen design inputs. Then we obtain the inverse of ${\boldsymbol{\Sigma}_{oo}^{(b+1)}}$ by using the Schur complement update
\[
\big(\boldsymbol{\Sigma}_{oo}^{(b+1)}\big)^{-1}
=
\begin{pmatrix}
\lambda^{(b)}
&
-\,\lambda^{(b)}\mathbf{u}^{(b)\top}
\\[4pt]
-\,\lambda^{(b)}\mathbf{u}^{(b)}
&
\big(\boldsymbol{\Sigma}_{oo}^{(b)}\big)^{-1}
+
\lambda^{(b)}\mathbf{u}^{(b)}\mathbf{u}^{(b)\top}
\end{pmatrix}.
\]
This update only requires a computational cost of $O(N_o^{\,2})$ for each set parameter $\boldsymbol{\Omega}_s^{(j)}$.

\paragraph{Rank-one update for the predictive covariance 
$\boldsymbol{\Sigma}^*_{(b)}$.}
After the block inverse $\big(\boldsymbol{\Sigma}_{oo}^{(b+1)}\big)^{-1}$ is obtained, our final target is to compute the predictive covariance matrix $\boldsymbol{\Sigma}^*_{(b+1)}$.\\
At round $b$, the predictive covariance matrix can be expressed as
\[
\boldsymbol{\Sigma}^*_{(b)}
=
\boldsymbol{\Sigma}_{**}
-
\boldsymbol{\Sigma}_{*o^{(b)}}
\big(\boldsymbol{\Sigma}_{oo}^{(b)}\big)^{-1}
\boldsymbol{\Sigma}_{o^{(b)}*}.
\]
After adding the new design point $\boldsymbol{\xi}_{(b)}$, the predictive
covariance becomes
\begin{equation}
\label{eq:predictive-cov-update}
\boldsymbol{\Sigma}^*_{(b+1)}
=
\boldsymbol{\Sigma}_{**}
-
\boldsymbol{\Sigma}_{*o^{(b+1)}}
\bigl(\boldsymbol{\Sigma}_{oo}^{(b+1)}\bigr)^{-1}
\boldsymbol{\Sigma}_{o^{(b+1)}*}.
\end{equation}\\
Based on the block inverse derived above, the predictive covariance can be efficiently updated using a rank-one Sherman–Morrison \citep{ShermanMorrison1950} update, rather than recomputing the full inverse by calling GP prediction API at each round $b$. For notational simplicity, we define
\[
\mathbf{v}^{(b)}
:=
\boldsymbol{\Sigma}_{*(\text{new},b+1)}
-
\boldsymbol{\Sigma}_{*o^{(b)}}\,\mathbf{u}^{(b)}
\;\in\;\mathbb{R}^{n^*},
\]
where $\boldsymbol{\Sigma}_{*(\text{new},b+1)} \in \mathbb{R}^{\,n^*}$ is the newly prepended column of $\boldsymbol{\Sigma}_{*o^{(b+1)}} = \big(
\boldsymbol{\Sigma}_{*(\text{new},b+1)};
\;
\boldsymbol{\Sigma}_{*o^{(b)}},
\big) \in\;\mathbb{R}^{n^*\times (N_0+1)}$. The vector $\mathbf{v}^{(b)}$ is the conditional covariance between the prediction points and the newly chosen points, which can be interpreted as the information brought by the newly chosen design input beyond the existing observation set.  Substituting this expression for $\big(\boldsymbol{\Sigma}_{oo}^{(b+1)}\big)^{-1}$ into the
predictive covariance formula in Equation~\eqref{eq:predictive-cov-update} yields the result
\[
\boldsymbol{\Sigma}^*_{(b+1)}
=
\boldsymbol{\Sigma}^*_{(b)}
-
\lambda^{(b)}\,\mathbf{v}^{(b)}\mathbf{v}^{(b)\top}.
\]
This update requires computing the matrix–vector product
$\boldsymbol{\Sigma}_{*o^{(b)}}\mathbf{u}^{(b)}$ and the outer product
$\mathbf{v}^{(b)}\mathbf{v}^{(b)\top}$, thus the cost for the rank-one update is
\[
O\big(N_o n^* + {n^*}^2\big)
\quad\text{for each } \boldsymbol{\Omega}_s^{(j)}.
\]
\paragraph{Total computational complexity}
Combining the Schur complement update for 
$\big(\boldsymbol{\Sigma}_{oo}^{(b+1)}\big)^{-1}$ and the subsequent
rank–one update for $\boldsymbol{\Sigma}^*_{(b+1)}$, the computational cost for each candidate design point at each round under $\boldsymbol{\Omega}_s^{(j)}$ is
\[
O\!\left(
N_o^{2} + N_o n^* + {n^*}^{2}
\right).
\]
\end{appendices}

\bibliographystyle{plainnat}  
\bibliography{bibliography}      

\end{document}